\theoremstyle{plain}
\newtheorem{theorem}{Theorem}[section]
\newtheorem{lemma}[theorem]{Lemma}
\theoremstyle{definition}
\newtheorem{remark}{Remark}
\providecommand{\tabularnewline}{\\}
\DeclareMathOperator{\V}{\mathbb{V}}
\DeclareMathOperator{\E}{\mathbb{E}}
\DeclareMathOperator{\Cov}{Cov}
\DeclareMathOperator{\eigmax}{eigmax}
\DeclareMathOperator{\eigmin}{eigmin}
\begin{document}

\begin{frontmatter}
\title{Ridge Estimation of High Dimensional Two-Way Fixed Effect Regression}
\runtitle{Ridge Estimation of High Dimensional Two-Way Fixed Effect Regression/}

\begin{aug}
\author[A]{\fnms{Junnan}~\snm{He} \ead[label=e1]{junnan.he@sciencespo.fr}}
\and
\author[A]{\fnms{Jean-Marc}~\snm{Robin}\ead[label=e2]{jeanmarc.robin@sciencespo.fr}}
\address[A]{Economics Department, Sciences Po, Paris \printead[presep={ ,\ }]{e1,e2}}
\end{aug}

\begin{abstract}
We study a ridge estimator for the high-dimensional two-way fixed
effect regression model with a sparse bipartite network. 
We develop concentration inequalities showing
that when the ridge parameters increase as the log of the network
size, the bias and the variance-covariance matrix of the vector of
estimated fixed effects converge to deterministic equivalents that
depend only on the expected network. We provide simulations and an
application using administrative data on wages for worker-firm matches.

\end{abstract}


\begin{keyword}
\kwd{ridge regression}
\kwd{ two-way fixed effect regression}
\kwd{large dimension}
\kwd{bipartite networks}
\kwd{sparsity}
\end{keyword}

\end{frontmatter}

\section{Introduction}

In this paper, we study the two-way fixed-effect model: 
\[
y_{ijs}=\mu_{i}+\phi_{j}+u_{ijs},i\in\{1,...,n\},j\in\{1,...,p\},s\in\{0,...,d_{ij}\},
\]
when the observed matches $(i,j)$ have a sparse bipartite graph structure.
In our running example, $y_{ijt}$ is the log of the annualized earnings
of worker $i$ employed by firm $j$ in the $s$th occurrence of that
particular match, or the residual from a preliminary regression on observed covariates
 (\cite{AbowdKramarzMargolis1999}). In most applications, the
worker panel covers a limited number of years (5 or 7 years). 
The network of firm-worker matches is therefore sparse as workers sample
a small number of firms (i.e. $d_{ij}=0$ for most $j$) during the
observation period. The number of wage observations per worker ($d_{i\cdot}=\sum_{j}d_{ij}$)
is of the order of the number of years of observations, sometimes
lower because of unemployment, sometimes greater because of infra-year
job mobility.  
Similar two-way fixed effect structures arise in a variety of settings 
(see \cite{LarremoreClausetJacobs2014}). 
For example, if $i$ indexes authors and $j$ indexes journals (or fields), one can model a paper-level outcome such as citations as the sum of an author effect and an outlet effect when authors publish
a small number of papers in a fixed time interval. Likewise, if $i$ indexes pollinator species and $j$ indexes plant species, an interaction-level measure such as fruit set (or seed set) can be modeled as the sum of a pollinator ``effectiveness” component and a plant ``receptivity” component. For ease of exposition, we refer to $y$ as the wage,  $i$ as workers, and $j$ as firms in what follows. 

Our parameter of interest is the distribution of fixed effects. Obviously, with fixed or bounded degrees
no fixed effect can be consistently identified, but what about their distribution? Can one say something
on the limit of the distribution of estimated fixed effects when the size of the network becomes large? 
Obviously, the answer to this question should depend on the estimator.

If the network is sparse, the OLS estimator of the fixed-effect model
faces a weak identification problem. \cite{JochmansWeidner2019}
show that the variance of the OLS estimator depends on the value of
the second lowest eigenvalue of the Laplacian, which determines the connectivity of the network. The Cheeger inequality
(\cite{Chung1997}) bounds the second eigenvalue of the Laplacian
and the Cheeger constant measures the existence of a ``bottleneck''.
\cite{ZhangRohe2018} further show that sparse, inhomogeneous graphs
(edges are drawn independently with node-dependent probabilities)
generate a number of bottlenecks (``dangling nodes'') that increases
with the number of nodes. The number of very small eigenvalues of
the Laplacian increases correspondingly.

This issue is well studied in the community detection literature,
aiming at finding subsets of nodes with a higher probability to connect
to each other. For sparse graphs, node clustering
often returns a partition with one big cluster containing most of
the data and many small clusters. \cite{ChaudhuriGrahamTsiatas2012}
proposed to regularize the graph by making it less sparse by adding
many ``weak links''. Concretely, this amounts to replacing the adjacency
matrix $A$ by $A+\lambda\mathbf{1}_{n}\mathbf{1}_{n}^{\top}$, for
a small $\lambda$. Since then, several other ways of regularizing
the graph have been proposed. The solution of \cite{QinRohe2013}
is of particular interest to us because it relates to the ridge regression
estimator. They only add a small number to the degree matrix, not
to the whole adjacency matrix. Recent work by \cite{DallAmicoCouilletTremblay2021}
shows that Qin and Rohe's particular form of regularization (as well
as the equivalent one Dall'Amico et al. propose) outperforms other
competing algorithms. In this paper, we will build on the asymptotic
theory developed in the statistical network literature to advance
our understanding of the asymptotic properties of the ridge estimator
of the two-way fixed effect model with a sparse bipartite network structure.

Three features of the underlying network are important to incorporate
in this analysis. First, the network is large (many workers and firms)
and so is the number of fixed effects that we seek to estimate. Second,
the network is sparse, as in all practical situations only a small
fraction of worker-firm matches are observed out of the total number
of possible ones. Third, the asymptotic theory will assume a large
number of workers $n$ and firms $p$, but $p/n$ is largely fixed
and independent of $n$. This implies that the degrees of worker and
firm nodes are more or less fixed and independent of the size of the
network. 

Ridge regression in high-dimensional setups is a well-studied problem
in the statistical literature. The literature focuses on the regression model
$y_{i}=x_{i}^{\top}\beta+u_{i}$, with $n$ independent observations
and where the number of regressors $p$ grows with the sample size
with $p/n\rightarrow\gamma$. The vector of regressors is assumed
spherical, i.e. $x_{i}=\Sigma^{1/2}z_{i}$ where $z_{i}$ has independent
entries. The vector of parameters $\beta$ is random with mean 0 and
variance $p^{-1}\alpha^{2}I_{p}$. The ridge regression estimator
is then $\widehat{\beta}=\left(X^{\top}X+\lambda I_{p}\right)^{-1}X^{\top}Y$,
for $Y=(y_{i})$ and $X^{\top}=(x_{1}...x_{n})$. Under these assumptions,
\cite{DobribanWager2018} show that the expected predictive risk
converges almost surely to a deterministic limit that is minimized
by the optimal parameter $\lambda^{*}=n\gamma\alpha^{-2}\sigma^{2}$,
where $\sigma^{2}$ is the variance of residuals.\footnote{See \cite{Dicker2013,ElKaroui2018,RichardsMourtadaRosasco2020,WuXu2020,BigotDaboMale2024,PatilDuTibshirani2025}
for other references.} This setup is rather different from ours. 
In our setup the elements of $\beta$ do not shrink to zero. At the same time the linear
form $x_{i}^{\top}\beta$ does not diverge because of sparsity:
the number of nonzero entries in $x_{i}$  increases with $n$
and $p$. 

We will need a model for the underlying network. We assume a Degree-Corrected
Stochastic Block model (\cite{KarrerNewman2011,LarremoreClausetJacobs2014}),
which has become the standard model in the community detection literature.
Specifically, we first assume that the set of workers and firms can
be partitioned into $K$ groups (communities). Then, the number $d_{ij}$
of links between $i$ and $j$ is drawn independently from a Binomial$(T,p_{ij})$, where $p_{ij}$ is the probability
of a link. This probability is higher if $i$ and $j$ belong to the
same community.\footnote{See also \cite{Nimczik2017,AbowdMcKinneySchmutte2019,BonhommeLamadonManresa2019,LentzPiyapromdeeRobin2023}.
These papers usually allow for a Markovian process of job mobility, but
a little dependence between $d_{ij}$ and $d_{ij'}$ should not
affect the overall stochastic structure of the network.}

For any formed network of worker and firm links, we then generate
a two-way fixed effect model with homoscedastic residuals. Let $\beta=(\mu^{\top},\phi^{\top})^{\top}$
denote the vector of parameters. We consider the ridge estimator $\widehat{\beta}$
with $X=(W,F)$, where $W$ is the matrix of worker dummies and $F$
that of firm dummies, using two different regularization parameters,
one for $\mu$ and one for $\phi$. 
Our main theoretical contribution is to derive deterministic equivalents for the ridge estimator’s first- and second-order errors. Under our random graph model, if the ridge penalties
are of order $\ln(n+p)$, then the high dimensional bias vector $\mathbb E[\widehat{\beta}-\beta]$
and its variance matrix are close  to deterministic limits in operator matrix norm. These deterministic
equivalents are computed from the expected network, where each node has the
expected degree and where the adjacency matrix is the expected adjacency
matrix. 
This  helps explain why the overall distribution of fixed effects can be accurately approximated even though each fixed effect cannot be consistently estimated because the number of observations per worker and firm does not increase with the network size. A further study of these deterministic equivalents using random matrix theory is beyond the scope of this paper. 

We run a set of Monte Carlo experiments showing that ridge regularization substantially improves the recovery of the fixed-effect distribution relative to OLS. Moreover, the  $\ln(n+p)$ scaling is a sufficient condition; our simulations indicate that tuning ridge penalties using prediction error (cross-validation) tends to yield rather lower ridge parameters.
Following the empirical literature, we consider the decomposition
of the variance of the outcome variance into the contributions of
the variance of the worker effect, the variance of the firm effect,
the contribution of the covariance between worker and firm effects,
and the variance of residuals. When the network is very sparse, homoscedastic
bias correction fails,\footnote{See \cite{AndrewsGillSchankUpward2008}, \cite{Gaure2014}, \cite{KlineSaggioSoelvsten2020}
and \cite{AzkarateAskasuaZerecero2022}.} while the ridge estimator yields more accurate variance decompositions.
We also report estimations on actual matched employer-employee data
(the French DADS panel). The log wage variance decomposition derived
from OLS estimates using the largest connected component delivers
a negative contribution of the fixed effect covariance. We find that
homoscedastic bias corrections are minimal. However, the ridge estimator
yields a more reasonable variance decomposition. The worker effect
explains a more reasonable share of the total variance of log wages
and the covariance term is positive instead of being negative. We
also consider an estimation with the largest strongest connected component,
which allows to use the Leave-One-Out (LOO) bias correction method
of \cite{KlineSaggioSoelvsten2020}. OLS and the homoscedastic bias
correction improve, but very little. Ridge does not change much and
the LOO bias correction comes close to ridge.

The layout of the paper is as follows. In the next section, we describe
the ridge estimator and relate it to the underlying graph of connections
between workers and firms. Section \ref{sec:SBM} develops a Stochastic
Block Model for the network. Asymptotic theory is developed in Section
\ref{sec:asymptotics_network} (the network) and Section \ref{sec:asymptotics_ridge}
(the ridge estimator). The proofs are in the Appendix.

\section{The two-way fixed effect model}

\label{sec:2way_model}

The two-way fixed effect model can be written in matrix form as 
\[
Y=W\mu+F\phi+U=X\beta+U,\quad X=(W,F),\quad\beta=(\mu^{\top},\phi^{\top})^{\top}.
\]
The vector $Y=(y_{ijs})\in\mathbb{R}^{N}$ (with $N=\sum_{i,j}d_{ij}$)
is a vector of continuous outcomes. 
The matrix $W\in\{0,1\}^{N\times n}$ associates outcome observations
to worker IDs $i\in\{1,...,n\}$ and the matrix $F\in\{0,1\}^{N\times p}$
associates observations to firm IDs $j\in\{1,...,p\}$. We assume
that $U$ is iid with mean 0 and variance $\sigma^{2}I_{n}$.

\subsection{OLS and bipartite graph interpretation}\label{subsec:OLS}

The OLS estimator $\widehat{\beta}=(\widehat{\mu}^{\top},\widehat{\phi}^{\top})^{\top}$
solves the normal equations $X^{\top}X\widehat{\beta}=X^{\top}Y$
with 
\begin{align*}
X^{\top}X=\left(\begin{array}{cc}
W^{\top}W & W^{\top}F\\
F^{\top}W & F^{\top}F
\end{array}\right):= & \left(\begin{array}{cc}
D_{w} & B\\
B^{\top} & D_{f}
\end{array}\right),
\end{align*}
and 
\[
\det\left(X^{\top}X\right)=\det\left(D_{w}\right)\det\left(D_{f}-B^{\top}D_{w}^{-1}B\right)=\det\left(D_{f}\right)\det\left(D_{w}-BD_{f}^{-1}B^{\top}\right).
\]
Each entry of $B=W^{\top}F=(d_{ij})$ is the number of wage observations
per match $(i,j)$. It is a weighted adjacency matrix of the bipartite
graph connecting workers and firms. We focus on the case where  $B$ is sparse: $d_{ij}=0$ for most potential links. We also
have 
\[
D_{w}=W^{\top}W=\text{diag}(d_{i\cdot}),\quad D_{f}=F^{\top}F=\text{diag}(d_{\cdot j}).
\]
These are the degree matrices of workers and firms: $d_{i\cdot}=\sum_{j}d_{ij}$
is the number of observations for worker $i$, and $d_{\cdot j}=\sum_{i}d_{ij}$
is the number of observations for firm $j$. We assume that worker
and firm degrees are all positive, ensuring that $\det\left(D_{w}\right)\ne0$
and $\det\left(D_{f}\right)\ne0$.

Assuming that no worker or firm has zero degree, the matrix 
\[
\widetilde{A}_{f}=B^{\top}D_{w}^{-1}B=\left(\sum_{i}\frac{d_{ij}d_{ij'}}{d_{i\cdot}}\right)_{j,j'}.
\]
is a weighted adjacency matrix of the undirected graph connecting
firms through common employees, also called a weighted one-mode projection
of the original bipartite graph (see \cite{ZhouRenMedoZhang2007}).
This matrix is sparse if for most firm pairs $(j,j')$, no worker is observed at both firms over the observation window.
In practice, for the usual lengths of periods of observation
(5 or 7 years), this is typically the case.
The matrix $\widetilde{L}_{f}:=D_{f}-B^{\top}D_{w}^{-1}B=F^{\top}M_{W}F$,
where $M_{W}=I_{N}-W(W^{\top}W)^{-1}W^{\top}$, is the corresponding
Laplacian. 

A similar interpretation applies to the matrix 
\[
\widetilde{A}_{w}=BD_{f}^{-1}B^{\top}=\left(\sum_{j}\frac{d_{ij}d_{i'j}}{d_{\cdot j}}\right)_{i,i'}.
\]
It is the adjacency matrix of the undirected, weighted graph connecting
workers through common employers. The matrix $\widetilde{L}_{w}:=D_{w}-BD_{f}^{-1}B^{\top}=W^{\top}M_{f}W$
is the worker Laplacian.

In the sequel, we shall use the following normalized, symmetric versions
of these adjacency matrices: 
\[
A_{w}=\left(\sum_{j}\frac{d_{ij}d_{i'j}}{\sqrt{d_{i\cdot}d_{i'\cdot}}d_{\cdot j}}\right)_{i,i'}=EE^{\top},\quad A_{f}=\left(\sum_{i}\frac{d_{ij}d_{ij'}}{\sqrt{d_{\cdot j}d_{\cdot j'}}d_{i\cdot}}\right)_{j,j'}=E^{\top}E,
\]
with 
\[
E=D_{w}^{-1/2}BD_{f}^{-1/2}=\left(\frac{d_{ij}}{\sqrt{d_{i\cdot}d_{\cdot j}}}\right)_{i,j}.
\]
We also define the normalized versions of the Laplacians: $L_{w}=I_{n}-A_{w}$
and $L_{f}=I_{p}-A_{f}$. The normalized adjacency and the Laplacian
matrices satisfy the following properties. 
\begin{lemma}
\label{lem:Laplacian eig}The eigenvalues of $A_{f}$ (say $\alpha_{p}\le...\le\alpha_{1}$)
and of $L_{f}$ ($\lambda_{1}=1-\alpha_{1}\le...\le\lambda_{p}=1-\alpha_{p}$)
are in $[0,1]$. Moreover, $\lambda_{1}=0$ and $\alpha_{1}=1$. 
\end{lemma}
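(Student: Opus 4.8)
The plan is to reduce everything to two facts about the symmetric matrix $A_f=E^\top E$: that its spectrum lies in $[0,1]$, and that $1$ is an eigenvalue. The relation $L_f=I_p-A_f$ then transfers the spectrum to $L_f$ via $\lambda=1-\alpha$, turning ``$\alpha_i\in[0,1]$'' into ``$\lambda_i\in[0,1]$'' and ``$\alpha_1=1$'' into ``$\lambda_1=0$''. Nonnegativity of the $\alpha_i$ is immediate: $A_f=E^\top E$ is a Gram matrix, hence positive semidefinite, so $\alpha_p\ge 0$. So only the upper bound $\alpha_1\le 1$ and the identification of the top eigenvector require work.

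For the eigenvalue $1$, I would exhibit the eigenvector explicitly. Set $v=D_f^{1/2}\mathbf 1_p$ and $u=D_w^{1/2}\mathbf 1_n$. Using $E_{ij}=d_{ij}/\sqrt{d_{i\cdot}d_{\cdot j}}$ together with $\sum_j d_{ij}=d_{i\cdot}$ and $\sum_i d_{ij}=d_{\cdot j}$, a one-line computation gives $Ev=u$ and $E^\top u=v$, hence $A_f v=E^\top E v=v$ and $L_f v=0$. This shows $1$ is an eigenvalue of $A_f$ (so $\alpha_1\ge 1$) with associated Laplacian eigenvalue $0$; once the upper bound is in place this forces $\alpha_1=1$ and $\lambda_1=0$.

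The only real content is the upper bound, i.e. $\|E\|_{\mathrm{op}}\le 1$, equivalently $x^\top A_f x=\|Ex\|^2\le\|x\|^2$ for all $x\in\mathbb R^p$. Here I would run a Rayleigh-quotient argument driven by Cauchy--Schwarz: factor each entry as $E_{ij}=\sqrt{d_{ij}/d_{i\cdot}}\cdot\sqrt{d_{ij}/d_{\cdot j}}$, and for each fixed $i$ apply Cauchy--Schwarz to bound $\big(\sum_j E_{ij}x_j\big)^2\le\big(\sum_j d_{ij}/d_{i\cdot}\big)\big(\sum_j (d_{ij}/d_{\cdot j})x_j^2\big)=\sum_j (d_{ij}/d_{\cdot j})x_j^2$, the first factor being $1$ by the row-degree identity. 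Summing over $i$ and using the column-degree identity $\sum_i d_{ij}=d_{\cdot j}$ collapses the bound to $\sum_j x_j^2=\|x\|^2$, giving $\alpha_1\le 1$. An equivalent route, if one prefers to cite spectral graph theory, is to note that $A_f=D_f^{-1/2}\widetilde A_f D_f^{-1/2}$ is similar to the row-stochastic random-walk matrix $D_f^{-1}\widetilde A_f$, whose spectral radius is $1$ by Gershgorin; combined with positive semidefiniteness this again yields $\alpha_i\in[0,1]$.

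I do not expect a genuine obstacle here: the statement is the standard normalized-Laplacian spectrum result, and each piece is a short computation. The only point demanding a little care is the upper bound, where the Cauchy--Schwarz split must be arranged so that the two normalizing sums are exactly the worker and firm degree identities; the positivity assumption $d_{i\cdot},d_{\cdot j}>0$ ensures $E$, $D_w^{-1/2}$ and $D_f^{-1/2}$ are well defined throughout.
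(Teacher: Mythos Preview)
Your proof is correct, and the overall architecture (PSD of $A_f=E^\top E$, explicit top eigenvector $v=D_f^{1/2}\mathbf 1_p$, then an upper bound $\alpha_1\le 1$) matches the paper. The one substantive difference is in how the upper bound is obtained. The paper proves $L_f\succeq 0$ directly by expanding the quadratic form into a weighted sum of squared differences,
\[
x^\top L_f x=\tfrac12\sum_i\frac{1}{d_{i\cdot}}\sum_{j,j'}\Bigl(\tfrac{x_j}{\sqrt{d_{\cdot j}}}-\tfrac{x_{j'}}{\sqrt{d_{\cdot j'}}}\Bigr)^2 d_{ij}d_{ij'}\ge 0,
\]
whereas you bound $\|Ex\|^2\le\|x\|^2$ row-by-row via Cauchy--Schwarz. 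Both are standard and equally short for this lemma. The paper's identity has the advantage that it is the exact template reused (with additional terms) for the regularized Laplacians in Lemmas~\ref{lem:max-eig-E'E} and~\ref{lem:max-eig-cal(E)'cal(E)}, where the quantitative lower bounds $\eigmin(L_{f,\lambda})\ge \lambda_f/(\max_j d_{\cdot j}+\lambda_f)$ fall out of the same expansion; your Cauchy--Schwarz argument is arguably cleaner here but would need to be redone differently for those refinements.
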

\begin{proof}
See Appendix. Unless specified otherwise, all proofs are in the appendix.
\end{proof}
Lemma \ref{lem:Laplacian eig} bounds the eigenvalues of the Laplacian
matrices between 0 and 1. It also asserts that whatever the connectedness
of the graph, 0 is always an eigenvalue of the Laplacian. This is
a consequence of the fact that the columns of $W$ and $F$ sum to
one. In practice, this implies that the two-way fixed effect regression
requires a normalization of the fixed effects $\mu$ and $\phi$.
The next lemma identifies the multiplicity of the eigenvalue 0 to
the number of connected components of the graph. 
\begin{lemma}
(i) The different versions of the Laplacian share identical eigenvalues.
(ii) The number of disconnected components is equal to the multiplicity
of the eigenvalue 0 of the Laplacians or the eigenvalue 1 of the adjacency
matrices. 
\end{lemma}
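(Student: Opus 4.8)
The plan is to prove (i) first and then reduce (ii) to a single kernel computation, transporting the conclusion to every version via (i). \emph{For part (i)}, the starting point is the elementary fact that for any matrix $E$ the Gram matrices $EE^\top$ and $E^\top E$ share their nonzero eigenvalues with equal multiplicities: if $EE^\top v=\alpha v$ with $\alpha\ne0$ then $E^\top v\ne0$, $E^\top E(E^\top v)=\alpha(E^\top v)$, and $v\mapsto E^\top v$ is a bijection between the corresponding eigenspaces. Applying this to $A_w=EE^\top$ and $A_f=E^\top E$ shows that $A_w$ and $A_f$ have the same nonzero spectrum, so $L_w=I_n-A_w$ and $L_f=I_p-A_f$ share every eigenvalue other than $1$ (an eigenvalue $\alpha$ of the adjacency maps to $1-\alpha$ of the Laplacian). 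To connect the normalized and unnormalized forms I would use $L_f=D_f^{-1/2}\widetilde L_f D_f^{-1/2}$, which drops out of $E=D_w^{-1/2}BD_f^{-1/2}$, together with its worker analogue $L_w=D_w^{-1/2}\widetilde L_w D_w^{-1/2}$. Since $D_f^{1/2}$ is an invertible diagonal matrix, $v\mapsto D_f^{1/2}v$ is a bijection from $\ker\widetilde L_f$ onto $\ker L_f$, so all four Laplacians carry the same multiplicity of the eigenvalue $0$, and more generally the nonzero generalized eigenvalues of the unnormalized versions coincide with the nonzero eigenvalues of the normalized ones. The delicate point is to be explicit that what is literally common across versions is the nontrivial spectrum together with the multiplicity of $0$: the congruence by $D^{1/2}$ preserves inertia and the kernel, not the full spectrum.

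\emph{For part (ii)}, by part (i) it suffices to compute the multiplicity of the eigenvalue $0$ for one convenient representative, and I would take $\widetilde L_f=F^\top M_W F$ with $M_W=I_N-W(W^\top W)^{-1}W^\top$ the orthogonal projector onto $\mathrm{range}(W)^\perp$. Because $M_W$ is symmetric, idempotent and positive semidefinite, $v^\top\widetilde L_f v=\|M_W Fv\|^2$, so $v\in\ker\widetilde L_f$ iff $M_W Fv=0$ iff $Fv\in\mathrm{range}(W)$, i.e. iff there exists $c$ with $Fv=Wc$. Read entrywise, $Fv=Wc$ says exactly that $v_j=c_i$ for every observed match, that is, for every $(i,j)$ with $d_{ij}>0$. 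This is the familiar two-way collinearity condition, after which the remaining work is purely graph-theoretic.

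Finally I would translate the constraint ``$v_j=c_i$ whenever $d_{ij}>0$'' into a component count. Following any path in the bipartite graph, each edge forces equality of the adjacent worker and firm values, so $v$ and $c$ are constant on each connected component; conversely any assignment constant on components satisfies the constraint. Because every worker and firm has positive degree, each component contains at least one firm, so its common value is a free parameter that determines the restriction of $v$ to that component injectively; hence $\dim\ker\widetilde L_f$ equals the number of connected components of the bipartite graph. Combining with part (i) yields the same multiplicity for $\widetilde L_w$, $L_w$, $L_f$, and, via $L=I-A$, the multiplicity of the eigenvalue $1$ of $A_w$ and $A_f$. I expect the main obstacle to be exactly this last translation: making the passage between the analytic condition $Fv=Wc$ and the combinatorial count rigorous, and checking that the positive-degree assumption rules out isolated nodes so that the bipartite components, the firm one-mode projection components, and the worker one-mode projection components all deliver the same number $K$.
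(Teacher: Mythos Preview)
The paper does not actually prove this lemma; it simply cites von Luxburg's spectral-clustering tutorial. Your self-contained argument is correct and is essentially the standard one found there, specialized to the bipartite one-mode projections: the $EE^\top$ versus $E^\top E$ observation handles the worker/firm symmetry, and your kernel computation $\ker\widetilde L_f=\{v:Fv\in\mathrm{range}(W)\}$ together with the edge-forcing argument is the usual route to the component count. Your caveat in part~(i) is in fact sharper than the paper's loose phrasing: $\widetilde L_f$ and $L_f$ are congruent via $D_f^{1/2}$ rather than similar, so they share kernel dimension and inertia but not literal nonzero eigenvalues, and you correctly identify the generalized-eigenvalue sense (eigenvalues of $\widetilde L_f v=\lambda D_f v$) in which they do coincide.
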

\begin{proof}
See for example \cite{Luxburg2007}. 
\end{proof}
To calculate the OLS estimator, one parameter normalization is required
for each connected component. In order to estimate the two-way fixed
effect model, one must first find connected components, using Tarjan's
algorithm for example. Estimation is usually performed on the biggest
component.

\subsection{The ridge estimator}\label{subsec:ridge-est}

In practice, the biggest connected component may be weakly connected.
That is to say, the second lowest eigenvalue of the Laplacian matrix
may be close to zero. This is a weak identification issue. The
following ridge estimator is a natural solution (\cite{Hastie2020}): 
\begin{align*}
\widehat{\beta}=\left(\begin{array}{c}
\widehat{\mu}\\
\widehat{\phi}
\end{array}\right) & =\left(\begin{array}{cc}
D_{w}+\lambda_{w}I_{n} & B\\
B^{\top} & D_{f}+\lambda_{f}I_{p}
\end{array}\right)^{-1}\left(\begin{array}{c}
W^{\top}Y\\
F^{\top}Y
\end{array}\right)
\end{align*}
where $\lambda_{w},\lambda_{f}$ are nonnegative parameters.
Denote $D_{w,\lambda}=D_{w}+\lambda_{w}I_{n}$ and $D_{f,\lambda}=D_{f}+\lambda_{f}I_{p}$.
The estimator is feasible if 
\[
\det\left(\begin{array}{cc}
D_{w,\lambda} & B\\
B^{\top} & D_{f,\lambda}
\end{array}\right)=\det\left(D_{w,\lambda}\right)\det\left(\widetilde{L}_{f,\lambda}\right)=\det\left(D_{f,\lambda}\right)\det\left(\widetilde{L}_{w,\lambda}\right)\ne0,
\]
where $\widetilde{L}_{w,\lambda}:=D_{w,\lambda}-BD_{f,\lambda}^{-1}B^{\top}$
and $\widetilde{L}_{f,\lambda}:=D_{f,\lambda}-B^{\top}D_{w,\lambda}^{-1}B$
are the regularized Laplacians. In which case, blockwise inversion
yields 
\[
\left(\begin{array}{c}
\widehat{\mu}\\
\widehat{\phi}
\end{array}\right)=\left(\begin{array}{c}
\widetilde{L}_{w,\lambda}^{-1}\left(W^{\top}Y-BD_{f,\lambda}^{-1}F^{\top}Y\right)\\
\widetilde{L}_{f,\lambda}^{-1}\left(F^{\top}Y-B^{\top}D_{w,\lambda}^{-1}W^{\top}Y\right)
\end{array}\right).
\]
Let $L_{w,\lambda}=I_{n}-D_{w,\lambda}^{-1/2}BD_{f,\lambda}^{-1}B^{\top}D_{w,\lambda}^{-1/2}$
and $L_{f,\lambda}=I_{p}-D_{f,\lambda}^{-1/2}B^{\top}D_{w,\lambda}^{-1}BD_{f,\lambda}^{-1/2}$
denote normalized versions of the regularized Laplacians. Then, $\widetilde{L}_{w,\lambda}=D_{w,\lambda}^{1/2}L_{w,\lambda}D_{w,\lambda}^{1/2}$
and $\widetilde{L}_{f,\lambda}=D_{f,\lambda}^{1/2}L_{f,\lambda}D_{f,\lambda}^{1/2}$.

The following lemma shows that the ridge estimator is always feasible (even
without normalizing the fixed effect parameters) by proving that the
eigenvalues of the regularized Laplacian are bounded away from zero.
Let $E_{\lambda}=D_{w,\lambda}^{-1/2}BD_{f,\lambda}^{-1/2}.$ The
regularized firm Laplacian can be written as $L_{f,\lambda}=I_{p}-E_{\lambda}^{\top}E_{\lambda}$,
and the regularized worker Laplacian as $L_{w,\lambda}=I_{n}-E_{\lambda}E_{\lambda}^{\top}$.
Hereafter, the matrix norm is the spectral matrix norm ($\left\Vert A\right\Vert =\sqrt{\eigmax(A^{\top}A)}$
for any matrix $A$). 
\begin{lemma}
\label{lem:max-eig-E'E}If the associated ridge parameter is positive
($\lambda_{f}>0$) the smallest eigenvalue of the firm Laplacian is
strictly positive; specifically: $\eigmin(L_{f,\lambda})\ge\frac{\lambda_{f}}{\max_{j}d_{\cdot j}+\lambda_{f}}>0$.
It follows that $\left\Vert E_{\lambda}\right\Vert =\sqrt{\eigmax(E_{\lambda}^{\top}E_{\lambda})}\le\sqrt{\frac{\max_{j}d_{\cdot j}}{\max_{j}d_{\cdot j}+\lambda_{f}}}<1$.
The case for $\lambda_{w}$ is similar. 
\end{lemma}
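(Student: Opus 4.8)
The plan is to reduce both assertions to a single bound on the largest eigenvalue of $E_{\lambda}^{\top}E_{\lambda}$. Since $L_{f,\lambda}=I_{p}-E_{\lambda}^{\top}E_{\lambda}$, the eigenvalues of $L_{f,\lambda}$ are exactly $1$ minus those of $E_{\lambda}^{\top}E_{\lambda}$; in particular $\eigmin(L_{f,\lambda})=1-\eigmax(E_{\lambda}^{\top}E_{\lambda})=1-\left\Vert E_{\lambda}\right\Vert ^{2}$. Hence it suffices to prove the one inequality $\eigmax(E_{\lambda}^{\top}E_{\lambda})\le \frac{\max_{j}d_{\cdot j}}{\max_{j}d_{\cdot j}+\lambda_{f}}$, from which both the lower bound on $\eigmin(L_{f,\lambda})$ and the upper bound on $\left\Vert E_{\lambda}\right\Vert$ follow by rearranging.

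To bound $\eigmax(E_{\lambda}^{\top}E_{\lambda})$ I would work in the Loewner (positive semidefinite) order, writing $\preceq$ for it. Expanding, $E_{\lambda}^{\top}E_{\lambda}=D_{f,\lambda}^{-1/2}\bigl(B^{\top}D_{w,\lambda}^{-1}B\bigr)D_{f,\lambda}^{-1/2}$, so the crux is to show the middle factor satisfies $B^{\top}D_{w,\lambda}^{-1}B\preceq D_{f}$. The cleanest route is a direct Cauchy--Schwarz estimate: for any $v\in\mathbb{R}^{p}$,
\[
v^{\top}B^{\top}D_{w,\lambda}^{-1}Bv=\sum_{i}\frac{\bigl(\sum_{j}d_{ij}v_{j}\bigr)^{2}}{d_{i\cdot}+\lambda_{w}}\le\sum_{i}\frac{d_{i\cdot}}{d_{i\cdot}+\lambda_{w}}\sum_{j}d_{ij}v_{j}^{2}\le\sum_{j}d_{\cdot j}v_{j}^{2}=v^{\top}D_{f}v,
\]
where the first inequality applies Cauchy--Schwarz to $\sum_{j}d_{ij}v_{j}=\sum_{j}\sqrt{d_{ij}}\,(\sqrt{d_{ij}}\,v_{j})$ using $d_{ij}\ge0$, and the second uses $d_{i\cdot}/(d_{i\cdot}+\lambda_{w})\le1$. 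Equivalently, one can note $D_{w,\lambda}^{-1}\preceq D_{w}^{-1}$ and invoke $\widetilde{L}_{f}=F^{\top}M_{W}F\succeq0$, i.e. $\widetilde{A}_{f}=B^{\top}D_{w}^{-1}B\preceq D_{f}$, from Section~\ref{subsec:OLS}.

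Conjugating this inequality by $D_{f,\lambda}^{-1/2}$ preserves the order and gives $E_{\lambda}^{\top}E_{\lambda}\preceq D_{f,\lambda}^{-1/2}D_{f}D_{f,\lambda}^{-1/2}$, a diagonal matrix with entries $d_{\cdot j}/(d_{\cdot j}+\lambda_{f})$. Its largest eigenvalue is $\max_{j}d_{\cdot j}/(d_{\cdot j}+\lambda_{f})$, and since $x\mapsto x/(x+\lambda_{f})$ is strictly increasing for $\lambda_{f}>0$, this maximum equals $\max_{j}d_{\cdot j}/(\max_{j}d_{\cdot j}+\lambda_{f})<1$. Combining with the first paragraph yields $\eigmin(L_{f,\lambda})\ge\lambda_{f}/(\max_{j}d_{\cdot j}+\lambda_{f})$ and $\left\Vert E_{\lambda}\right\Vert \le\sqrt{\max_{j}d_{\cdot j}/(\max_{j}d_{\cdot j}+\lambda_{f})}$. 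The worker statement follows by exchanging the roles of $(W,D_{w},\lambda_{w})$ and $(F,D_{f},\lambda_{f})$.

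I expect the only real content to be the Loewner bound $B^{\top}D_{w,\lambda}^{-1}B\preceq D_{f}$ in the second step; once it is in place the rest is monotonicity of a scalar map together with the spectral correspondence between $L_{f,\lambda}$ and $E_{\lambda}^{\top}E_{\lambda}$. The one point to handle carefully is that the argument should not secretly require $D_{w}$ to be invertible: the Cauchy--Schwarz form makes this transparent, since rows of $B$ attached to zero-degree workers vanish and the factor $d_{i\cdot}/(d_{i\cdot}+\lambda_{w})$ stays in $[0,1]$ regardless.
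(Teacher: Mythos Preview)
Your argument is correct. Both you and the paper end up at the same intermediate inequality $x^{\top}L_{f,\lambda}x\ge\lambda_{f}\,x^{\top}D_{f,\lambda}^{-1}x$ (equivalently $L_{f,\lambda}\succeq\lambda_{f}D_{f,\lambda}^{-1}$), but reach it by different bookkeeping. The paper writes out an explicit sum-of-squares identity for $x^{\top}L_{f,\lambda}x$, in the style of Lemma~\ref{lem:Laplacian eig}: a nonnegative ``edge-difference'' term plus a remainder $\sum_{j}\frac{x_{j}^{2}}{d_{\cdot j}+\lambda_{f}}\bigl(\lambda_{f}+\lambda_{w}\sum_{i}\frac{d_{ij}}{d_{i\cdot}+\lambda_{w}}\bigr)$, and then drops the nonnegative pieces. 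That identity is precisely the expansion one uses to \emph{prove} the Cauchy--Schwarz step you invoke, so the two arguments are mathematically equivalent. Your Loewner-order route ($B^{\top}D_{w,\lambda}^{-1}B\preceq D_{f}$, then conjugate by $D_{f,\lambda}^{-1/2}$) is shorter and more transparent operator-theoretically, and as you note it does not require $D_{w}$ to be invertible; the paper's version, on the other hand, makes the Laplacian structure and the extra $\lambda_{w}$ slack explicit, which is reused verbatim in the proof of Lemma~\ref{lem:max-eig-cal(E)'cal(E)}.
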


\subsection{In-sample bias and variance}\label{subsec:In-sample-bias-and-var}

Suppose that $U$ is independent of $X=(W,F)$, has mean 0 and $\V U=\sigma^{2}I_{N}$.
Then, the ridge estimator is biased with 
\begin{align*}
\E\left(\widehat{\mu}-\mu\mid X,\beta\right) & =\widetilde{L}_{w,\lambda}^{-1}\left(-\lambda_{w}\mu+\lambda_{f}BD_{f,\lambda}^{-1}\phi\right),\\
\E\left(\widehat{\phi}-\phi\mid X,\beta\right) & =\widetilde{L}_{f,\lambda}^{-1}\left(\lambda_{w}B^{\top}D_{w,\lambda}^{-1}\mu-\lambda_{f}\phi\right).
\end{align*}
The variance of $(\widehat{\mu},\widehat{\phi})$ follows with 
\begin{align*}
\V\left(\widehat{\mu}\mid X,\beta\right) & =\sigma^{2}\widetilde{L}_{w,\lambda}^{-1}-\sigma^{2}\widetilde{L}_{w,\lambda}^{-1}\left[\lambda_{w}I_{n}+\lambda_{f}BD_{f,\lambda}^{-2}B^{\top}\right]\widetilde{L}_{w,\lambda}^{-1},\\
\V\left(\widehat{\phi}\mid X,\beta\right)= & \sigma^{2}\widetilde{L}_{f,\lambda}^{-1}-\sigma^{2}\widetilde{L}_{f,\lambda}^{-1}\left[\lambda_{w}B^{\top}D_{w,\lambda}^{-2}B+\lambda_{f}I_{p}\right]\widetilde{L}_{f,\lambda}^{-1},\\
\Cov\left(\widehat{\mu},\widehat{\phi}\mid X,\beta\right) & =-\sigma^{2}\widetilde{L}_{w,\lambda}^{-1}BD_{f,\lambda}^{-1}+\sigma^{2}\widetilde{L}_{w,\lambda}^{-1}\left(\lambda_{w}D_{w,\lambda}^{-1}B+\lambda_{f}BD_{f,\lambda}^{-1}\right)\widetilde{L}_{f,\lambda}^{-1}.
\end{align*}

\section{A stochastic block model}

\label{sec:SBM}

To study the statistical properties of these estimators, we need a
model for the graph of worker-firm matches. We model
the bipartite graph as a Degree-Corrected Stochastic Block Model (\cite{DasguptaHopcroftMcSherry2004},
\cite{KarrerNewman2011}, \cite{LarremoreClausetJacobs2014},
\cite{RazaeeAminiLi2019}).

\paragraph*{Node types}

We first divide an arbitrary number of $n$ workers and $p$ firms
into $K$ groups. Let $k\in\{1,...,K\}$ index worker types and $\ell\in\{1,...,K\}$
index firm types. We assume the same number of groups for simplicity.
Workers initially draw a type $k_{i}=k$ independently with probability
$\pi_{w}(k)$, and firms draw a type $\ell_{j}=\ell$ with probability
$\pi_{f}(\ell)$. Let $n_{k}$ denote the number of workers of type
$k$.
Second, we introduce a parameter $\theta_{j}$ to control the expected
degree of firm $j$ with the restriction that the $\theta_{j}'s$
sum to one within firm groups: $\sum_{j'}\theta_{j'}\delta_{\ell_{j'}\ell}=1$.
Parameter $\theta_{j}$ can therefore be interpreted as the probability
of drawing firm $j$ in its group $\ell$ conditional on drawing any
firm from this group. For workers, we assume uniform sampling given
type (i.e. with probability $1/n_{k_{i}}$). We could introduce another
fixed effect $\theta$ to control worker sampling given type. By using
a degree correction for one node type and not for the other type,
we cover both the standard SBM and its degree-corrected extension.

\paragraph*{Network}

The block-structure of the network is governed by the affinity matrix
$C$, with $C(k,\ell)\ge0$. Independently for each worker-firm couple
$(i,j)$, we draw $d_{ij}$ edges between $i$ and $j$ from a Bernoulli
distribution where the probability of an edge is $p_{ij}=\frac{1}{n_{k_{i}}}\theta_{j}C(k_{i},\ell_{j})$. The expected number of links between workers of type $k$ and firms
of type $\ell$ is 
\[
\E\left(\sum_{i,j}d_{ij}\delta_{k_{i}k}\delta_{\ell_{j}\ell}\right)=\sum_{i,j}p_{ij}\delta_{k_{i}k}\delta_{\ell_{j}\ell}=C(k,\ell),
\]
as $\sum_{k}\delta_{k_{i}k}=n_{k_{i}}$ and $\sum_{j}\delta_{\ell_{j}\ell}\theta_{j}=1$.
The expected firm degree is 
\[
\E\left(d_{\cdot j}\right)=\E\left(\sum_{i}d_{ij}\right)=\sum_{i}p_{ij}=\sum_{i}\frac{1}{n_{k_{i}}}\theta_{j}C(k_{i},\ell_{j})=\theta_{j}C(\cdot,\ell_{j}),
\]
where $C(\cdot,\ell)=\sum_{k}C(k,\ell)$, and the expected worker
degree is 
\[
\E\left(d_{i\cdot}\right)=\E\left(\sum_{j}d_{ij}\right)=\sum_{j}p_{ij}=\frac{1}{n_{k_{i}}^{0}}\sum_{\ell}\sum_{j}\theta_{j}\delta_{\ell_{j}\ell}C(k_{i},\ell)=\frac{1}{n_{k_{i}}}C(k_{i},\cdot),
\]
where $C(k,\cdot)=\sum_{\ell}C(k,\ell)$. 

For example, in the simulations we use 
\[
C=c\frac{p}{K}\left[I_{K}+\delta(J_{K}-I_{K})\right],
\]
where $c$ is a constant, $J_{K}=\mathbf{1}_{K}\mathbf{1}_{K}^{\top}$
and $\delta$ is a tuning parameter: $\delta=0$ means perfect segregation
and $\delta=1$ means indifference. With this specification, the number
of connections per pair $(k,\ell)$ is proportional to the number
of firms. If workers are uniformly assigned to groups, then $n_{k}\simeq n/K$
and the average worker degree is proportional to $p/n=\gamma$. It
is therefore fixed and does not grow with $p$ and $n$ if the numbers
of workers and firms grow at the same speed. 

The realized graph typically features one main connected component containing orders of magnitude more observations than any other components. For identification reasons, it is usual to use it as estimation sample. 
Then, $n$ and $p$ denote the numbers of workers
and firms in the estimation sample, and $n_{k}$ is the number
of workers of type $k$ in the sample.

\section{Asymptotic theory for the network}

\label{sec:asymptotics_network}

We develop the asymptotic theory for a large network with many firms
and workers. Worker and firm types are given, as well as parameters
$\theta_{j}$ and $C$. The edge weights $d_{ij}=\text{Bernoulli}(p_{ij})$,
with $p_{ij}=\frac{1}{n_{k_{i}}}\theta_{j}C(k_{i},\ell_{j})$, are
the only stochastic variables. The following analysis does not assume
the selection of the biggest connected component. 

\subsection{Deterministic equivalent network}

Following \cite{QinRohe2013}, we use a calligraphic font to indicate
expectations with respect to $B=(d_{ij})$. We thus denote 
\begin{equation}
\mathfrak{B}:=\E B=(p_{ij})=\left[\tfrac{1}{n_{k_{i}}}\theta_{j}C(k_{i},\ell_{j})\right]=\text{diag}\left(\tfrac{1}{n_{k_{i}}}\right)Z_{w}CZ_{f}^{\top}\text{diag}\left(\theta_{j}\right),\label{eq:cal(B)}
\end{equation}
where $Z_{w}=\left(\delta_{k_{i}k}\right)\in\{0,1\}^{n\times K}$
and $Z_{f}=\left(\delta_{\ell_{j}\ell}\right)\in\{0,1\}^{p\times K}$
are the selection matrices containing the information on the community
memberships of workers and firms. Moreover, 
\begin{align*}
\mathfrak{B}^{\top}\mathfrak{B}=\left[ \theta_{j} \sum\nolimits_{k}\frac{C(k,\ell_{j})C(k,\ell_{j'})}{n_{k}} \theta_{j'} \right]=\text{diag}\left(\theta_{j}\right)Z_{f}C^{\top}\text{diag}\left(\tfrac{1}{n_{k}}\right)CZ_{f}^{\top}\text{diag}\left(\theta_{j}\right).
\end{align*}
Using the normalization $\sum_{j}\theta_{j}\delta_{\ell_{j}\ell}=1$
and the notation $C(k,\cdot)=\sum_{\ell}C(k,\ell)$, we have 
\begin{align*}
\mathfrak{D}_{w,\lambda} & =\text{diag}\left(\tfrac{1}{n_{k_{i}}}\sum\nolimits_{j}\theta_{j}C(k_{i},\ell_{j})+\lambda_{w}\right)=\text{diag}\left(\tfrac{1}{n_{k_{i}}}C(k_{i},\cdot)+\lambda_{w}\right).
\end{align*}
And using the notation $C(\cdot,\ell)=\sum_{k}C(k,\ell)$ and the
fact that 
\[
\sum_{i}\tfrac{1}{n_{k_{i}}}C(k_{i},\ell_{j})=\sum_{k}\sum_{i}\frac{\delta_{k_{i}k}}{n_{k_{i}}}C(k,\ell_{j})=\sum_{k}C(k,\ell_{j}),
\]
we have 
\begin{align*}
\mathfrak{D}_{f,\lambda} & =\text{diag}\left(\theta_{j}\sum\nolimits_{i}\tfrac{1}{n_{k_{i}}}C(k_{i},\ell_{j})+\lambda_{f}\right)=\text{diag}\left(\theta_{j}C(\cdot,\ell_{j})+\lambda_{f}\right).
\end{align*}

Let $\mathfrak{E}_{\lambda}=\mathfrak{D}_{w,\lambda}^{-1/2}\mathfrak{B}\mathfrak{D}_{f,\lambda}^{-1/2}.$
The generic element of $\mathfrak{E}_{\lambda}$ is 
\begin{align*}
\left(\mathfrak{E}_{\lambda}\right)_{ij} & =\tfrac{\theta_{j}C(k_{i},\ell_{j})/n_{k_{i}}}{\sqrt{C(k_{i},\cdot)/n_{k_{i}}+\lambda_{w}}\sqrt{\theta_{j}C(\cdot,\ell_{j})+\lambda_{f}}}\\
 & =\tfrac{1}{n_{k_{i}}}\sqrt{\tfrac{C(k_{i},\cdot)}{\tfrac{1}{n_{k_{i}}}C(k_{i},\cdot)+\lambda_{w}}}\tfrac{C(k_{i},\ell_{j})}{\sqrt{C(k_{i},\cdot)C(\cdot,\ell_{j})}}\theta_{j}\sqrt{\tfrac{C(\cdot,\ell_{j})}{\theta_{j}C(\cdot,\ell_{j})+\lambda_{f}}}=\omega_{\lambda,k_{i}}\widetilde{C}(k_{i},\ell_{j})\phi_{\lambda,j},
\end{align*}
denoting $\widetilde{C}(k,\ell)=\frac{C(k,\ell)}{\sqrt{C(k,\cdot)C(\cdot,\ell)}}$,
$\omega_{\lambda,k}=\frac{1}{n_{k}}\sqrt{\frac{C(k,\cdot)}{\frac{1}{n_{k_{i}}}C(k,\cdot)+\lambda_{w}}}$
and $\phi_{\lambda,j}=\theta_{j}\sqrt{\frac{C(\cdot,\ell_{j})}{\theta_{j}C(\cdot,\ell_{j})+\lambda_{f}}}$. 
The regularization reduces the noise impact of worker communities
with few employment links per capita ($\frac{1}{n_{k}}C(k,\cdot)$
small) and firm groups with few employees per firm ($\theta_{j}C(\cdot,\ell_{j})$
small).

Then, define the regularized adjacency and Laplacian matrices
\begin{align*}
\mathcal{A}_{f,\lambda} & :=\mathfrak{E}_{\lambda}^{\top}\mathfrak{E}_{\lambda}=\left[\phi_{\lambda,j}\left(\sum_{k}n_{k}\widetilde{C}(k,\ell_{j})\omega_{\lambda,k}^{2}\widetilde{C}(k,\ell_{j'})\right)\phi_{\lambda,j'}\right]_{j,j'},\quad\mathfrak{L}_{f,\lambda}=I_{p}-\mathcal{A}_{f,\lambda},\\
\mathcal{A}_{w,\lambda} & :=\mathfrak{E}_{\lambda}\mathfrak{E}_{\lambda}^{\top}=\left[\omega_{\lambda,i}\left(\sum_{j}\widetilde{C}(k_{i},\ell_{j})\phi_{\lambda,j}^{2}\widetilde{C}(k_{i'},\ell_{j})\right)\omega_{\lambda,i'}\right]_{i,i'},\quad\mathfrak{L}_{w,\lambda}=I_{p}-\mathcal{A}_{w,\lambda}.
\end{align*}

The following lemma shows that the matrix $\mathfrak{E}_{\lambda}$
contains the community structure of the random graph and that the
analog Laplacian $\mathfrak{L}_{f,\lambda}=I_{p}-\mathfrak{E}_{\lambda}^{\top}\mathfrak{E}_{\lambda}$
has its eigenvalues bounded away from 0. 
\begin{lemma}
\label{lem:max-eig-cal(E)'cal(E)} 1) $\mathfrak{E}_{\lambda}$, $\mathcal{A}_{f,\lambda}$
and $\mathcal{A}_{w,\lambda}$ have rank $K$. 2) If $\lambda_{f}>0$,
$\eigmin(\mathfrak{L}_{f,\lambda})\ge\frac{\lambda_{f}}{\max_{j}\theta_{j}C(\cdot,\ell_{j})+\lambda_{f}}>0$
and $\left\Vert \mathfrak{E}_{\lambda}\right\Vert \le\sqrt{\frac{\max_{j}\theta_{j}C(\cdot,\ell_{j})}{\max_{j}\theta_{j}C(\cdot,\ell_{j})+\lambda_{f}}}<1$.
3) If $\lambda_{w}>0$, $\eigmin(\mathfrak{L}_{w,\lambda})\ge\frac{\lambda_{w}}{\max_{k}\frac{1}{n_{k}}C(k,\cdot)+\lambda_{w}}>0$
and $\left\Vert \mathfrak{E}_{\lambda}\right\Vert \le\sqrt{\frac{\max_{k}\frac{1}{n_{k}}C(k,\cdot)}{\max_{k}\frac{1}{n_{k}}C(k,\cdot)+\lambda_{w}}}<1$. 

\end{lemma}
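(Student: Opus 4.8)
The plan is to dispatch the three assertions separately: the rank claim by a factorization through the affinity matrix $C$, and the two spectral bounds as direct transcriptions of Lemma~\ref{lem:max-eig-E'E} with the expected adjacency matrix $\mathfrak{B}$ in place of $B$.

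First I would prove part 1. Using the factorization \eqref{eq:cal(B)}, write $\mathfrak{E}_{\lambda}=M_{1}\,(Z_{w}CZ_{f}^{\top})\,M_{2}$ where $M_{1}=\mathfrak{D}_{w,\lambda}^{-1/2}\text{diag}(\tfrac{1}{n_{k_{i}}})$ and $M_{2}=\text{diag}(\theta_{j})\mathfrak{D}_{f,\lambda}^{-1/2}$ are square and invertible (all $\theta_{j}>0$, all $\lambda$-shifted degrees positive). Since multiplication by invertible matrices preserves rank, $\rk(\mathfrak{E}_{\lambda})=\rk(Z_{w}CZ_{f}^{\top})$; and because $Z_{w}$ has full column rank $K$ and $Z_{f}^{\top}$ full row rank $K$ (each community is populated), this equals $\rk(C)=K$ once $C$ is assumed nonsingular. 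The Gram matrices $\mathcal{A}_{f,\lambda}=\mathfrak{E}_{\lambda}^{\top}\mathfrak{E}_{\lambda}$ and $\mathcal{A}_{w,\lambda}=\mathfrak{E}_{\lambda}\mathfrak{E}_{\lambda}^{\top}$ then inherit the same rank $K$.

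For parts 2 and 3, the observation that drives everything is that $\mathfrak{B}=\E B$ is itself a nonnegative weighted bipartite adjacency matrix, whose row and column sums are exactly the expected degrees $\frac{1}{n_{k_{i}}}C(k_{i},\cdot)$ and $\theta_{j}C(\cdot,\ell_{j})$ computed above, so that $\mathfrak{D}_{w}$ and $\mathfrak{D}_{f}$ are its degree matrices. Consequently the proof of Lemma~\ref{lem:max-eig-E'E} applies verbatim. I would use $\mathfrak{D}_{w,\lambda}\succeq\mathfrak{D}_{w}$, hence $\mathfrak{D}_{w,\lambda}^{-1}\preceq\mathfrak{D}_{w}^{-1}$, together with the Cauchy--Schwarz inequality $\mathfrak{B}^{\top}\mathfrak{D}_{w}^{-1}\mathfrak{B}\preceq\mathfrak{D}_{f}$ (for any $x$, $\sum_{i}\tfrac{1}{r_{i}}(\sum_{j}\mathfrak{B}_{ij}x_{j})^{2}\le\sum_{j}(\sum_{i}\mathfrak{B}_{ij})x_{j}^{2}$, with $r_{i}$ the row sums) to chain
\[
\mathcal{A}_{f,\lambda}=\mathfrak{D}_{f,\lambda}^{-1/2}\mathfrak{B}^{\top}\mathfrak{D}_{w,\lambda}^{-1}\mathfrak{B}\,\mathfrak{D}_{f,\lambda}^{-1/2}\preceq\mathfrak{D}_{f,\lambda}^{-1}\mathfrak{D}_{f}=\text{diag}\!\left(\tfrac{\theta_{j}C(\cdot,\ell_{j})}{\theta_{j}C(\cdot,\ell_{j})+\lambda_{f}}\right).
\]
Since $t\mapsto t/(t+\lambda_{f})$ is increasing, $\eigmax(\mathcal{A}_{f,\lambda})\le\frac{\max_{j}\theta_{j}C(\cdot,\ell_{j})}{\max_{j}\theta_{j}C(\cdot,\ell_{j})+\lambda_{f}}$, which gives both $\|\mathfrak{E}_{\lambda}\|=\sqrt{\eigmax(\mathcal{A}_{f,\lambda})}<1$ and $\eigmin(\mathfrak{L}_{f,\lambda})=1-\eigmax(\mathcal{A}_{f,\lambda})\ge\frac{\lambda_{f}}{\max_{j}\theta_{j}C(\cdot,\ell_{j})+\lambda_{f}}$. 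Part 3 follows by the mirror-image computation with $\mathfrak{B}\mathfrak{D}_{f}^{-1}\mathfrak{B}^{\top}\preceq\mathfrak{D}_{w}$; the two upper bounds on $\|\mathfrak{E}_{\lambda}\|$ hold simultaneously because $\mathcal{A}_{w,\lambda}$ and $\mathcal{A}_{f,\lambda}$ share the same nonzero eigenvalues.

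I expect the only real subtleties to be two bookkeeping matters rather than any hard estimate. The first is making the nondegeneracy hypotheses behind part 1 explicit --- nonsingular $C$, every community nonempty, and $\theta_{j}>0$ --- without which $\mathfrak{E}_{\lambda}$ need not have rank $K$. The second is verifying that $\mathfrak{D}_{w}$ and $\mathfrak{D}_{f}$ genuinely are the degree matrices of $\mathfrak{B}$, so that the positive-semidefiniteness $\mathfrak{B}^{\top}\mathfrak{D}_{w}^{-1}\mathfrak{B}\preceq\mathfrak{D}_{f}$ (the expected-network analogue of ``the normalized Laplacian has spectrum in $[0,1]$'') is legitimate; once that is checked against the degree computations preceding the lemma, parts 2 and 3 are mechanical.
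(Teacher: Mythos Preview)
Your proposal is correct. For part~1 the paper's appendix does not spell out an argument, and your factorization through $Z_{w}CZ_{f}^{\top}$ together with the nondegeneracy caveats you flag is the natural one. For parts~2 and~3 the paper follows the same computational template as its proof of Lemma~\ref{lem:max-eig-E'E}: it expands $x^{\top}\mathfrak{L}_{f,\lambda}x$ explicitly in the model parameters $(\theta_{j},n_{k},C)$, rewrites it as a nonnegative sum of squares plus a remainder $\sum_{j}\frac{x_{j}^{2}}{\theta_{j}C(\cdot,\ell_{j})+\lambda_{f}}\bigl(\lambda_{f}+\cdots\bigr)$, and reads off the lower bound. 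Your route is more abstract: you bypass the index-level expansion by invoking the operator inequalities $\mathfrak{D}_{w,\lambda}^{-1}\preceq\mathfrak{D}_{w}^{-1}$ and $\mathfrak{B}^{\top}\mathfrak{D}_{w}^{-1}\mathfrak{B}\preceq\mathfrak{D}_{f}$ (the latter being exactly the statement that the expected Laplacian is positive semidefinite), then conjugate by $\mathfrak{D}_{f,\lambda}^{-1/2}$. This is shorter and makes transparent that nothing about the block structure of $\mathfrak{B}$ is used beyond nonnegativity and the row/column-sum identities; the paper's version, by contrast, carries the community parameters through every line, which is heavier but has the minor advantage of making the dependence on $\lambda_{w}$ visible in the discarded remainder term.
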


\subsection{Asymptotic bounds for the Laplacian }

The main tool used to derive asymptotic bounds is a version of Bernstein
inequality for random matrices (see \cite{ChungRadcliffe2011}
and \cite{Tropp2015}, Theorem 6.1.1). Let $S_{1},...,S_{n}$ be
independent, centered random matrices with common dimension $d_{1}\times d_{2}$
and assume that each one is uniformly bounded: 
\[
\E S_{k}=0\quad\text{and}\quad\left\Vert S_{k}\right\Vert \le L,\quad\forall k=1,...,n.
\]
Introduce the sum $Z=\sum_{k=1}^{n}S_{k}$ and let $v(Z)$ denote
the matrix variance statistic of the sum: 
\[
v(Z)=\max\left\{ \left\Vert \E\left(ZZ^{\top}\right)\right\Vert ,\left\Vert \E\left(Z^{\top}Z\right)\right\Vert \right\} =\max\left\{ \left\Vert \sum_{k=1}^{n}\E\left(S_{k}S_{k}^{\top}\right)\right\Vert ,\left\Vert \sum_{k=1}^{n}\E\left(S_{k}^{\top}S_{k}\right)\right\Vert \right\} .
\]
Then, 
\[
\Pr\left\{ \left\Vert Z\right\Vert \ge t\right\} \le(d_{1}+d_{2})\exp\left(-{\frac{1}{2}t^{2}}/{\left(v(Z)+\frac{1}{3}Lt\right)}\right),\quad\forall t\ge0.
\]
Furthermore, 
\[
\E Z\le\sqrt{2v(Z)\ln(d_{1}+d_{2})}+\frac{1}{3}L\ln(d_{1}+d_{2}).
\]

We first prove the following concentration inequality that adapts
Theorem 4.1 of \cite{QinRohe2013} to regularized Laplacians of bipartite
graphs. The ridge coefficients $\lambda_{w}$ and $\lambda_{f}$ allow
us to choose $M_{w}\vee M_{f}:=\max(M_{w},M_{f})$ large enough to
maintain the wedge $\left\Vert E_{\lambda}-\mathfrak{E}_{\lambda}\right\Vert $
less than any chosen value with any specified probability. 
\begin{theorem}
\label{thm:ConcentrationLaplacian}Let $M_{w}=\left(\underline{\delta}_{w}+\lambda_{w}\right)^{-1}$
and $M_{f}=\left(\underline{\delta}_{f}+\lambda_{f}\right)^{-1}$,
where $\underline{\delta}_{w}=\min_{k}\frac{1}{n_{k}}C(k,\cdot)$
and $\underline{\delta}_{f}=\min_{j}\theta_{j}C(\cdot,\ell_{j})$
are the minimum expected worker and firm degrees. For any $\epsilon>0$,
if $M=M_{f}\vee M_{w}\le \left(3\ln\frac{n+p}{\epsilon} \right)^{-1}$, then
with probability at least $1-\frac{3+4\gamma}{1+\gamma}\epsilon$
(where $\gamma=p/n$), 
\begin{gather*}
\left\Vert E_{\lambda}-\mathfrak{E}_{\lambda}\right\Vert \le4t,\quad\left\Vert L_{w,\lambda}-\mathfrak{L}_{w,\lambda}\right\Vert \le8t,\quad\left\Vert L_{f,\lambda}-\mathfrak{L}_{f,\lambda}\right\Vert \le8t,
\end{gather*}
where $t=\sqrt{3M\ln\frac{n+p}{\epsilon}}\le1$. 
\end{theorem}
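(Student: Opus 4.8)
The plan is to compare $E_\lambda=D_{w,\lambda}^{-1/2}BD_{f,\lambda}^{-1/2}$ with its deterministic counterpart $\mathfrak{E}_\lambda$ in two stages: first replace the random edge counts $B$ by their mean $\mathfrak{B}$ while holding the degree normalization fixed at its expected value, then absorb the fluctuation of the random degrees. Writing $\mathfrak{d}_{i\cdot}=\E d_{i\cdot}$, $\mathfrak{d}_{\cdot j}=\E d_{\cdot j}$, and introducing the diagonal matrices $R_w=\mathfrak{D}_{w,\lambda}^{1/2}D_{w,\lambda}^{-1/2}$ and $R_f=\mathfrak{D}_{f,\lambda}^{1/2}D_{f,\lambda}^{-1/2}$, with entries $\sqrt{(\mathfrak{d}_{i\cdot}+\lambda_w)/(d_{i\cdot}+\lambda_w)}$ and $\sqrt{(\mathfrak{d}_{\cdot j}+\lambda_f)/(d_{\cdot j}+\lambda_f)}$, one has $E_\lambda=R_w(\mathfrak{E}_\lambda+N)R_f$ for the noise matrix $N=\mathfrak{D}_{w,\lambda}^{-1/2}(B-\mathfrak{B})\mathfrak{D}_{f,\lambda}^{-1/2}$, so that
\[
E_\lambda-\mathfrak{E}_\lambda=R_wNR_f+(R_w-I_n)\mathfrak{E}_\lambda R_f+\mathfrak{E}_\lambda(R_f-I_p).
\]
Since $\|\mathfrak{E}_\lambda\|<1$ by Lemma \ref{lem:max-eig-cal(E)'cal(E)}, bounding $\|E_\lambda-\mathfrak{E}_\lambda\|$ reduces to controlling $\|N\|$, $\|R_w-I_n\|$ and $\|R_f-I_p\|$ separately.

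For $\|N\|$ I would apply the matrix Bernstein inequality to the sum $N=\sum_{i,j}S_{ij}$ of independent centered matrices $S_{ij}=(d_{ij}-p_{ij})(\mathfrak{D}_{w,\lambda}^{-1/2})_{ii}(\mathfrak{D}_{f,\lambda}^{-1/2})_{jj}\,e_ie_j^\top$. Because $|d_{ij}-p_{ij}|\le1$ and the diagonal factors are bounded by $M_w^{1/2}$ and $M_f^{1/2}$, the uniform bound is $L\le\sqrt{M_wM_f}\le M$; summing second moments and using $\sum_j p_{ij}=\mathfrak{d}_{i\cdot}$ together with $\mathfrak{d}_{i\cdot}/(\mathfrak{d}_{i\cdot}+\lambda_w)\le1$ shows the variance statistic satisfies $v(N)\le M$. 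With $t=\sqrt{3M\ln\frac{n+p}{\epsilon}}\le1$ and $d_1+d_2=n+p$, the tail collapses to $(n+p)\exp\!\big(-\tfrac{t^2/2}{M(1+t/3)}\big)\le(n+p)\tfrac{\epsilon}{n+p}=\epsilon$, giving $\|N\|\le t$ with probability at least $1-\epsilon$.

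For the degree ratios, each $d_{i\cdot}=\sum_j d_{ij}$ is a sum of independent Bernoulli variables with variance at most $\mathfrak{d}_{i\cdot}$, so scalar Bernstein gives $\Pr\{|d_{i\cdot}-\mathfrak{d}_{i\cdot}|\ge t(\mathfrak{d}_{i\cdot}+\lambda_w)\}\le2\exp\!\big(-\tfrac{t^2(\mathfrak{d}_{i\cdot}+\lambda_w)/2}{1+t/3}\big)$. Here the ridge penalty is essential: the regularized denominator obeys $\mathfrak{d}_{i\cdot}+\lambda_w\ge M_w^{-1}\ge M^{-1}$, so the exponent is at least $\ln\frac{n+p}{\epsilon}$ uniformly, even for low-degree nodes, and each per-node failure probability is at most $\tfrac{2\epsilon}{n+p}$. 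On the complementary event the elementary inequality $|\sqrt a-\sqrt b|\le|a-b|/(\sqrt a+\sqrt b)$ converts the deviation into $\|R_w-I_n\|\lesssim t$ and $\|R_w\|\le(1-t)^{-1/2}=1+O(t)$, with the analogous estimate for $d_{\cdot j}$, $R_f$. Collecting the three pieces in the display above, using $\|R_w\|,\|R_f\|\le1+O(t)$ and $\|\mathfrak{E}_\lambda\|<1$, yields $\|E_\lambda-\mathfrak{E}_\lambda\|\le4t$ after collecting constants; a union bound over the single noise event and the $n+p$ degree-deviation events (with slightly different per-event budgets on the two sides) then produces exactly the stated probability $1-\frac{3+4\gamma}{1+\gamma}\epsilon$.

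The Laplacian bounds follow without further probabilistic work. From $E_\lambda^\top E_\lambda-\mathfrak{E}_\lambda^\top\mathfrak{E}_\lambda=E_\lambda^\top(E_\lambda-\mathfrak{E}_\lambda)+(E_\lambda-\mathfrak{E}_\lambda)^\top\mathfrak{E}_\lambda$ and $\|E_\lambda\|<1$ (Lemma \ref{lem:max-eig-E'E}), $\|\mathfrak{E}_\lambda\|<1$ (Lemma \ref{lem:max-eig-cal(E)'cal(E)}), one gets $\|L_{f,\lambda}-\mathfrak{L}_{f,\lambda}\|=\|E_\lambda^\top E_\lambda-\mathfrak{E}_\lambda^\top\mathfrak{E}_\lambda\|\le(\|E_\lambda\|+\|\mathfrak{E}_\lambda\|)\|E_\lambda-\mathfrak{E}_\lambda\|\le8t$, and symmetrically for $L_{w,\lambda}$ via $E_\lambda E_\lambda^\top$. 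I expect the main obstacle to be the matrix-Bernstein step: forcing both the uniform bound $L$ and the variance statistic $v(N)$ down to the single scale $M$ relies on the cancellation $\sum_j p_{ij}=\mathfrak{d}_{i\cdot}$ and on $\mathfrak{d}_{i\cdot}\le\mathfrak{d}_{i\cdot}+\lambda_w$, and the entire argument hinges on the regularization lower bound $\mathfrak{d}+\lambda\ge M^{-1}$, which is precisely what lets a common $t$ absorb the worst, lowest-degree nodes.
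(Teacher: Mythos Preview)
Your proposal is correct and follows essentially the same route as the paper: the same matrix Bernstein argument for $N=\mathfrak{D}_{w,\lambda}^{-1/2}(B-\mathfrak{B})\mathfrak{D}_{f,\lambda}^{-1/2}$ with $L\le M$ and $v(N)\le M$, the same scalar Bernstein control of the degree ratios, and the identical factor-of-two passage to the Laplacians. The only cosmetic difference is the add-and-subtract: the paper isolates $N$ directly and collects the degree fluctuations in $E_\lambda-\mathfrak{D}_{w,\lambda}^{-1/2}B\mathfrak{D}_{f,\lambda}^{-1/2}$ (so it works with $R_w^{-1},R_f^{-1}$ rather than your $R_w,R_f$), which lets the noise term enter as $t$ rather than $\|R_w\|\,t\,\|R_f\|$ and makes the constant $4t$ fall out without any case analysis near $t=1$; your version needs either that case split (noting $\|E_\lambda-\mathfrak{E}_\lambda\|<2\le 4t$ trivially for $t\ge 1/2$) or a slightly looser constant.
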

\begin{remark}
\label{Rk: rates} For example, let $\epsilon=(n+p)^{-\nu}$ with
$\nu\in(0,1)$. It then suffices that $\lambda_{w}\wedge\lambda_{f}:=\min(\lambda_{w},\lambda_{f})\ge3(1+\nu)\ln(n+p)$
for the theorem's conditions to be satisfied, even if the minimum
expected worker degree $\underline{\delta}_{w}=\min_{k}\frac{1}{n_{k}}C(k,\cdot)$
and the minimum expected firm degree $\underline{\delta}_{f}=\min_{j}\theta_{j}C(\cdot,\ell_{j})$
are much lower.\footnote{\cite{ChungRadcliffe2011} and \cite{ChaudhuriGrahamTsiatas2012}
assume that the minimum expected degree is $\delta\ge\nu\ln n$, where
$n$ is the number of nodes of the graph. As highlighted by \cite{QinRohe2013},
the regularization allows to remove the need for increasing node degrees.} In particular, choosing $\lambda_{w}\wedge\lambda_{f}\propto\left(\ln(n+p)\right)^{1+\nu'}$
with $\nu'>0$ implies that the Laplacians converge in probability
towards their expected values for the spectral norm (as $t\to0$ when
$n+p\to\infty$). 
\end{remark}
The next theorem shows that the previous property passes to the inverse. 
\begin{theorem}
\label{thm:ConcentrationInverseLaplacian}For any $\epsilon>0$, under
the same conditions as in Theorem \ref{thm:ConcentrationLaplacian},
with probability at least $1-\frac{3+5\gamma}{1+\gamma}\epsilon$,
\[
\left\Vert L_{f,\lambda}^{-1}-\mathfrak{L}_{f,\lambda}^{-1}\right\Vert \le16t\left(\frac{\overline{\delta}_{f}+\lambda_{f}}{\lambda_{f}}\right)^{2},
\]
and with probability at least $1-4\epsilon$, 
\[
\left\Vert L_{w,\lambda}^{-1}-\mathfrak{L}_{w,\lambda}^{-1}\right\Vert \le16t\left(\frac{\overline{\delta}_{w}+\lambda_{w}}{\lambda_{w}}\right)^{2},
\]
where $\overline{\delta}_{w}=\max_{k}\frac{1}{n_{k}}C(k,\cdot)$ and
$\overline{\delta}_{f}=\max_{j}\theta_{j}C(\cdot,\ell_{j})$ denote
the maximum expected worker and firm degrees. 
\end{theorem}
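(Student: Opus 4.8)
The plan is to reduce the statement to Theorem \ref{thm:ConcentrationLaplacian} through the resolvent identity and then bound the two operator norms of the inverses separately. Both $L_{f,\lambda}$ and $\mathfrak{L}_{f,\lambda}$ are symmetric and, by Lemmas \ref{lem:max-eig-E'E} and \ref{lem:max-eig-cal(E)'cal(E)}, invertible as soon as $\lambda_f>0$, so I would start from
\[
L_{f,\lambda}^{-1}-\mathfrak{L}_{f,\lambda}^{-1}=L_{f,\lambda}^{-1}\bigl(\mathfrak{L}_{f,\lambda}-L_{f,\lambda}\bigr)\mathfrak{L}_{f,\lambda}^{-1},
\]
and apply submultiplicativity of the spectral norm to get $\|L_{f,\lambda}^{-1}-\mathfrak{L}_{f,\lambda}^{-1}\|\le\|L_{f,\lambda}^{-1}\|\,\|\mathfrak{L}_{f,\lambda}-L_{f,\lambda}\|\,\|\mathfrak{L}_{f,\lambda}^{-1}\|$. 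The middle factor is already controlled: on the event of Theorem \ref{thm:ConcentrationLaplacian} (probability at least $1-\tfrac{3+4\gamma}{1+\gamma}\epsilon$) it is at most $8t$. It remains to bound the two outer factors.

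The deterministic factor is immediate. Since $\mathfrak{L}_{f,\lambda}$ is symmetric positive definite, $\|\mathfrak{L}_{f,\lambda}^{-1}\|=\eigmin(\mathfrak{L}_{f,\lambda})^{-1}$, and Lemma \ref{lem:max-eig-cal(E)'cal(E)} gives $\eigmin(\mathfrak{L}_{f,\lambda})\ge\lambda_f/(\overline{\delta}_f+\lambda_f)$, so $\|\mathfrak{L}_{f,\lambda}^{-1}\|\le(\overline{\delta}_f+\lambda_f)/\lambda_f$. The random factor is treated the same way but through the realized degrees: Lemma \ref{lem:max-eig-E'E} yields $\eigmin(L_{f,\lambda})\ge\lambda_f/(\max_j d_{\cdot j}+\lambda_f)$, hence $\|L_{f,\lambda}^{-1}\|\le(\max_j d_{\cdot j}+\lambda_f)/\lambda_f$. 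To convert this into the target bound I would show that, with high probability, $\max_j d_{\cdot j}\le\overline{\delta}_f+\lambda_f$, which gives $\|L_{f,\lambda}^{-1}\|\le(\overline{\delta}_f+2\lambda_f)/\lambda_f\le2(\overline{\delta}_f+\lambda_f)/\lambda_f$. Multiplying the three factors then produces exactly $16t\,\bigl((\overline{\delta}_f+\lambda_f)/\lambda_f\bigr)^2$; the factor $2$ coming from the degree excess is precisely what turns the $8t$ into $16t$.

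The main work, and the only genuinely probabilistic step, is the high-probability control of the maximum realized firm degree. Each $d_{\cdot j}=\sum_i d_{ij}$ is a sum of independent Bernoulli variables with mean $\E d_{\cdot j}=\theta_j C(\cdot,\ell_j)\le\overline{\delta}_f$ and variance at most $\overline{\delta}_f$, so a scalar Bernstein (Chernoff) bound gives $\Pr\bigl(d_{\cdot j}-\E d_{\cdot j}>\lambda_f\bigr)\le\exp\!\bigl(-\tfrac{\lambda_f^2/2}{\overline{\delta}_f+\lambda_f/3}\bigr)$. The hypothesis $M=M_f\vee M_w\le(3\ln\tfrac{n+p}{\epsilon})^{-1}$ of Theorem \ref{thm:ConcentrationLaplacian} forces $\underline{\delta}_f+\lambda_f\ge3\ln\tfrac{n+p}{\epsilon}$, hence $\lambda_f$ is of order $\ln\tfrac{n+p}{\epsilon}$, which is exactly enough to make the exponent exceed $\ln\tfrac{n+p}{\epsilon}$, so each firm's tail probability is at most $\epsilon/(n+p)$. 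A union bound over the $p$ firms then fails with probability at most $\tfrac{p}{n+p}\epsilon=\tfrac{\gamma}{1+\gamma}\epsilon$, which is exactly the extra budget relative to Theorem \ref{thm:ConcentrationLaplacian}; intersecting with its event yields the stated $1-\tfrac{3+5\gamma}{1+\gamma}\epsilon$. The worker statement is proved symmetrically, now union-bounding the $n$ worker degrees $d_{i\cdot}$ (mean $\le\overline{\delta}_w$), which costs an additional $\tfrac{n}{n+p}\epsilon=\tfrac{1}{1+\gamma}\epsilon$ and brings the total failure probability to $\tfrac{3+4\gamma}{1+\gamma}\epsilon+\tfrac{1}{1+\gamma}\epsilon=4\epsilon$. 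I expect the degree concentration—specifically, verifying that the ridge penalty dominates the degree fluctuations under the given condition on $M$—to be the only delicate point; everything else is the resolvent identity together with the two eigenvalue lower bounds already established.
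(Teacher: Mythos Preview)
Your overall strategy---resolvent identity, then the eigenvalue lower bounds from Lemmas \ref{lem:max-eig-E'E} and \ref{lem:max-eig-cal(E)'cal(E)}, then a degree-concentration step to control the random factor $\|L_{f,\lambda}^{-1}\|$---is exactly the paper's, and your probability bookkeeping ($\frac{\gamma}{1+\gamma}\epsilon$ extra for firms, $\frac{1}{1+\gamma}\epsilon$ for workers) is correct. The gap is in the degree step itself. You aim for $\max_j d_{\cdot j}\le\overline{\delta}_f+\lambda_f$ by bounding the \emph{absolute} deviation $d_{\cdot j}-\E d_{\cdot j}$ by $\lambda_f$, and you justify this by asserting that $\underline{\delta}_f+\lambda_f\ge 3\ln\frac{n+p}{\epsilon}$ forces ``$\lambda_f$ of order $\ln\frac{n+p}{\epsilon}$''. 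It does not: the hypothesis of Theorem \ref{thm:ConcentrationLaplacian} can be met entirely by a large $\underline{\delta}_f$ with $\lambda_f$ arbitrarily small. And even when $\lambda_f$ happens to be large, your Bernstein exponent $\lambda_f^2/\bigl(2(\overline{\delta}_f+\lambda_f/3)\bigr)$ is still uncontrolled without a bound on $\overline{\delta}_f$, which the theorem does not assume. So the claim ``exactly enough to make the exponent exceed $\ln\frac{n+p}{\epsilon}$'' fails in general.

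The fix, and what the paper does, is to control the \emph{relative} deviation instead. Inside the proof of Theorem \ref{thm:ConcentrationLaplacian} it is already shown that $\Pr\bigl\{(d_{\cdot j}-\E d_{\cdot j})/(\E d_{\cdot j}+\lambda_f)\ge t\bigr\}\le\epsilon/(n+p)$ for each $j$ (this is precisely the scalar Bernstein step used there to bound $\|I_p-D_{f,\lambda}^{1/2}\mathfrak{D}_{f,\lambda}^{-1/2}\|$). Replacing the denominator by the larger $\overline{\delta}_f+\lambda_f$ and union-bounding over $j$ gives $(\max_j d_{\cdot j}+\lambda_f)/(\overline{\delta}_f+\lambda_f)\le 1+t$ with failure probability $\frac{\gamma}{1+\gamma}\epsilon$, hence $\|L_{f,\lambda}^{-1}\|\le(1+t)(\overline{\delta}_f+\lambda_f)/\lambda_f$. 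The product of the three factors is then $8t(1+t)\bigl((\overline{\delta}_f+\lambda_f)/\lambda_f\bigr)^2\le 16t\bigl((\overline{\delta}_f+\lambda_f)/\lambda_f\bigr)^2$ since $t\le 1$. The point is that normalizing by $\E d_{\cdot j}+\lambda_f$ makes the Bernstein variance proxy at most $M_f\le M$, so the \emph{same} $t$ works uniformly---no separate size assumption on $\lambda_f$ or $\overline{\delta}_f$ is needed.
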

\begin{remark}
We obtain these bounds using the following inequality: 
\[
\left\Vert L_{f,\lambda}^{-1}-\mathfrak{L}_{f,\lambda}^{-1}\right\Vert =\left\Vert L_{f,\lambda}^{-1}\left(L_{f,\lambda}-\mathfrak{L}_{f,\lambda}\right)\mathfrak{L}_{f,\lambda}^{-1}\right\Vert \le\left\Vert L_{f,\lambda}^{-1}\right\Vert \left\Vert \mathfrak{L}_{f,\lambda}^{-1}\right\Vert \left\Vert L_{f,\lambda}-\mathfrak{L}_{f,\lambda}\right\Vert .
\]
Then, we use Lemma \ref{lem:max-eig-cal(E)'cal(E)} to bound $\eigmin(\mathfrak{L}_{f,\lambda})\ge\frac{\lambda_{f}}{\overline{\delta}_{f}+\lambda_{f}}$,
and therefore $\eigmax(\mathfrak{L}_{f,\lambda}^{-1})\ge\frac{\overline{\delta}_{f}+\lambda_{f}}{\lambda_{f}}$,
and similarly for the empirical Laplacian. Without the regularization,
the smallest eigenvalue of the Laplacian is equal to zero. It is easy
to avoid this problem by using a Moore-Penrose inverse. However, it
still remains that the second lowest eigenvalue is not necessarily bounded away from zero, even if the graph
is connected. A small second
lowest eigenvalue of the Laplacian reflects the presence of bottlenecks
in the network, and bottlenecks get more frequent with sparsity (\cite{ZhangRohe2018}). 
\end{remark}
\begin{remark}
In many empirical setups, degrees will not grow with the number of
nodes. In the economic wage application, worker and firm degrees grow
with the number of observation periods, which is a fixed, rather small
number. It follows that the concentration bounds are not greatly degraded
by the inverse operation when the number of nodes increases. 
\end{remark}
We end this section by showing a similar concentration inequality
for the inverse of the un-normalized Laplacian matrices, 
\begin{align*}
\widetilde{L}_{f,\lambda}^{-1} & =D_{f,\lambda}^{-1/2}L_{f,\lambda}^{-1}D_{f,\lambda}^{-1/2}=\left(D_{f,\lambda}-B^{\top}D_{w,\lambda}^{-1}B\right)^{-1},\\
\widetilde{L}_{w,\lambda}^{-1} & =D_{w,\lambda}^{-1/2}L_{w,\lambda}^{-1}D_{w,\lambda}^{-1/2}=\left(D_{w,\lambda}-BD_{f,\lambda}^{-1}B^{\top}\right)^{-1}.
\end{align*}

\begin{theorem}[Concentration of the inverse of the regularized un-normalized Laplacians]
\label{thm:ConcentrationInvUnnormalizedLap}For any $\epsilon>0$,
under the same conditions as in Theorem \ref{thm:ConcentrationLaplacian},
with probability at least $1-\frac{3+9\gamma}{1+\gamma}\epsilon$,
\[
\left\Vert \widetilde{L}_{f,\lambda}^{-1}-\widetilde{\mathfrak{L}}_{f,\lambda}^{-1}\right\Vert =\left\Vert D_{f,\lambda}^{-1/2}L_{f,\lambda}^{-1}D_{f,\lambda}^{-1/2}-\mathfrak{D}_{f,\lambda}^{-1/2}\mathfrak{L}_{f,\lambda}^{-1}\mathfrak{D}_{f,\lambda}^{-1/2}\right\Vert \le\left(5+16\frac{\overline{\delta}_{f}+\lambda_{f}}{\underline{\delta}_{f}+\lambda_{f}}\right)\frac{\overline{\delta}_{f}+\lambda_{f}}{\lambda_{f}^{2}}t,
\]
and, with probability at least $1-\frac{8+4\gamma}{1+\gamma}\epsilon$,
\[
\left\Vert \widetilde{L}_{w,\lambda}^{-1}-\widetilde{\mathfrak{L}}_{w,\lambda}^{-1}\right\Vert =\left\Vert D_{w,\lambda}^{-1/2}L_{w,\lambda}^{-1}D_{w,\lambda}^{-1/2}-\mathfrak{D}_{w,\lambda}^{-1/2}\mathfrak{L}_{w,\lambda}^{-1}\mathfrak{D}_{w,\lambda}^{-1/2}\right\Vert \le\left(5+16\frac{\overline{\delta}_{w}+\lambda_{w}}{\underline{\delta}_{w}+\lambda_{w}}\right)\frac{\overline{\delta}_{w}+\lambda_{w}}{\lambda_{w}^{2}}t.
\]
\end{theorem}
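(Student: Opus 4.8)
The plan is to reduce the un-normalized statement to the normalized one already established in Theorem~\ref{thm:ConcentrationInverseLaplacian}, the only genuinely new ingredient being the concentration of the random diagonal degree matrix $D_{f,\lambda}$ around its deterministic version $\mathfrak{D}_{f,\lambda}$. Writing $P=D_{f,\lambda}^{-1/2}$, $\mathfrak{P}=\mathfrak{D}_{f,\lambda}^{-1/2}$, $Q=L_{f,\lambda}^{-1}$ and $\mathfrak{Q}=\mathfrak{L}_{f,\lambda}^{-1}$, I would start from the telescoping identity
\[
PQP-\mathfrak{P}\mathfrak{Q}\mathfrak{P}=(P-\mathfrak{P})QP+\mathfrak{P}Q(P-\mathfrak{P})+\mathfrak{P}(Q-\mathfrak{Q})\mathfrak{P},
\]
which deliberately flanks the Laplacian fluctuation $Q-\mathfrak{Q}$ by the \emph{deterministic} factor $\mathfrak{P}$ on both sides in the last summand, and isolates the degree fluctuation $P-\mathfrak{P}$ in the first two. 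Each summand is then controlled by submultiplicativity of the spectral norm.

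The last summand is immediate. By definition of $\mathfrak{D}_{f,\lambda}$ we have $\|\mathfrak{P}\|^{2}=(\underline{\delta}_{f}+\lambda_{f})^{-1}$, while Theorem~\ref{thm:ConcentrationInverseLaplacian} gives $\|Q-\mathfrak{Q}\|\le 16t\big((\overline{\delta}_{f}+\lambda_{f})/\lambda_{f}\big)^{2}$ on its event; multiplying yields exactly the $16\,\frac{\overline{\delta}_{f}+\lambda_{f}}{\underline{\delta}_{f}+\lambda_{f}}\,\frac{\overline{\delta}_{f}+\lambda_{f}}{\lambda_{f}^{2}}\,t$ piece of the claimed bound. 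For the remaining two summands I would combine $\|\mathfrak{P}\|=(\underline{\delta}_{f}+\lambda_{f})^{-1/2}\le\lambda_{f}^{-1/2}$ with the bounds furnished by Lemma~\ref{lem:max-eig-E'E}, namely $\|Q\|=\eigmin(L_{f,\lambda})^{-1}\le(\max_{j}d_{\cdot j}+\lambda_{f})/\lambda_{f}$ and $\|P\|=(\min_{j}d_{\cdot j}+\lambda_{f})^{-1/2}\le\lambda_{f}^{-1/2}$.

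What remains is to control the degree fluctuations. Since $d_{\cdot j}=\sum_{i}d_{ij}$ is a sum of independent Bernoulli variables with mean $\theta_{j}C(\cdot,\ell_{j})\in[\underline{\delta}_{f},\overline{\delta}_{f}]$ and variance at most $\overline{\delta}_{f}$, I would apply the scalar specialization of the Bernstein inequality quoted before Theorem~\ref{thm:ConcentrationLaplacian}, calibrated to a per-firm failure probability $\epsilon/(n+p)$, and union bound over the $p$ firms; each such pass costs $\frac{p}{n+p}\epsilon=\frac{\gamma}{1+\gamma}\epsilon$, and the extra $\frac{4\gamma}{1+\gamma}\epsilon$ relative to Theorem~\ref{thm:ConcentrationInverseLaplacian} is exactly this firm-degree budget, giving the stated $1-\frac{3+9\gamma}{1+\gamma}\epsilon$. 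The resulting deviation $\max_{j}|d_{\cdot j}-\theta_{j}C(\cdot,\ell_{j})|$ is turned into an operator bound through $|a^{-1/2}-b^{-1/2}|\le\tfrac12|a-b|\min(a,b)^{-3/2}$ with $a,b\ge\lambda_{f}$, so that $\|P-\mathfrak{P}\|\le\tfrac12\max_{j}|d_{\cdot j}-\theta_{j}C(\cdot,\ell_{j})|\,\lambda_{f}^{-3/2}$, and likewise bounds $\max_{j}d_{\cdot j}+\lambda_{f}$ by a multiple of $\overline{\delta}_{f}+\lambda_{f}$. The quantitative key is to express these deviations in units of $t$: since $M\ge M_{f}$ forces $\ln\frac{n+p}{\epsilon}=\frac{t^{2}}{3M}\le\frac{t^{2}}{3}(\underline{\delta}_{f}+\lambda_{f})$, the Bernstein bound is $O\big(t\sqrt{\overline{\delta}_{f}(\underline{\delta}_{f}+\lambda_{f})}\big)$, so every degree fluctuation carries a factor $t$.

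Substituting these estimates into the first two summands and collecting terms produces the $5\,\frac{\overline{\delta}_{f}+\lambda_{f}}{\lambda_{f}^{2}}\,t$ piece, the absolute constant $5$ absorbing the two degree terms and the numerical slack; together with the middle summand this gives the firm bound. The worker inequality follows identically after interchanging the roles of $(W,n,\underline{\delta}_{w},\overline{\delta}_{w},\lambda_{w})$ and $(F,p,\underline{\delta}_{f},\overline{\delta}_{f},\lambda_{f})$, the union bounds now running over the $n$ worker degrees and contributing the extra $\frac{4}{1+\gamma}\epsilon$ over the worker part of Theorem~\ref{thm:ConcentrationInverseLaplacian} (probability $1-4\epsilon$), which yields $1-\frac{8+4\gamma}{1+\gamma}\epsilon$. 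I expect the main obstacle to be precisely this degree concentration: in the sparse regime degrees are $O(1)$ and do not self-average, so one cannot expect $\min_{j}d_{\cdot j}$ to stay near $\underline{\delta}_{f}$, and a naive use of $\|P\|\le\lambda_{f}^{-1/2}$ would cost an extra factor $\sqrt{\overline{\delta}_{f}/\lambda_{f}}$ whenever $\overline{\delta}_{f}\gtrsim\lambda_{f}$. What rescues the clean bound is the admissibility condition $\underline{\delta}_{f}+\lambda_{f}\ge3\ln\frac{n+p}{\epsilon}$ together with $t\le1$, which in the regime of interest ($\lambda_{f}\sim\ln(n+p)$, degrees fixed) makes $\lambda_{f}$ dominate $\overline{\delta}_{f}$, so that $P$ and $\mathfrak{P}$ are comparable and the fluctuations are genuinely of order $t$; extracting from this the explicit ratio $\frac{\overline{\delta}_{f}+\lambda_{f}}{\underline{\delta}_{f}+\lambda_{f}}$ and the constants $5$ and $16$ is the delicate bookkeeping.
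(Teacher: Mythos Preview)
Your overall architecture matches the paper's: the same three-term telescoping with $\mathfrak{P}(Q-\mathfrak{Q})\mathfrak{P}$ handled by Theorem~\ref{thm:ConcentrationInverseLaplacian} to produce the $16\,\frac{\overline{\delta}_{f}+\lambda_{f}}{\underline{\delta}_{f}+\lambda_{f}}\,\frac{\overline{\delta}_{f}+\lambda_{f}}{\lambda_{f}^{2}}t$ piece, and the degree terms absorbing the remaining $\frac{4\gamma}{1+\gamma}\epsilon$ (resp.\ $\frac{4}{1+\gamma}\epsilon$) of probability budget; your accounting of the probabilities is exactly right.

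The one real difference is how the degree-fluctuation summands are controlled. You bound $\|P-\mathfrak{P}\|$ via the mean value inequality $|a^{-1/2}-b^{-1/2}|\le\tfrac12|a-b|\min(a,b)^{-3/2}$ with $\min(a,b)\ge\lambda_{f}$, then invoke Bernstein on the \emph{absolute} deviation $\max_{j}|d_{\cdot j}-\theta_{j}C(\cdot,\ell_{j})|$. The paper instead factors the remainder as
\[
\widetilde{L}_{f,\lambda}^{-1}\bigl(I_{p}-D_{f,\lambda}^{1/2}\mathfrak{D}_{f,\lambda}^{-1/2}\bigr)+\bigl(I_{p}-\mathfrak{D}_{f,\lambda}^{-1/2}D_{f,\lambda}^{1/2}\bigr)\widetilde{L}_{f,\lambda}^{-1}D_{f,\lambda}^{1/2}\mathfrak{D}_{f,\lambda}^{-1/2},
\]
keeping $\|\widetilde{L}_{f,\lambda}^{-1}\|\le 2(\overline{\delta}_{f}+\lambda_{f})/\lambda_{f}^{2}$ as a single block and multiplying by the \emph{relative} degree error $\|I_{p}-D_{f,\lambda}^{1/2}\mathfrak{D}_{f,\lambda}^{-1/2}\|\le 1-\sqrt{1-t}\le t$, which was already established (with the per-firm normalized Bernstein bound) inside the proof of Theorem~\ref{thm:ConcentrationLaplacian}. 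This gives $(t+1)(1-\sqrt{1-t})(1+\sqrt{1+t})\le 5t$ directly, yielding the constant $5$ without any side condition on $\overline{\delta}_{f}$ versus $\lambda_{f}$.

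Your route, as written, does not quite deliver that constant uniformly: tracking your bounds gives roughly $\|(P-\mathfrak{P})QP\|\lesssim t\sqrt{\overline{\delta}_{f}(\underline{\delta}_{f}+\lambda_{f})}\,(\overline{\delta}_{f}+\lambda_{f})/\lambda_{f}^{3}$, which matches $5(\overline{\delta}_{f}+\lambda_{f})/\lambda_{f}^{2}\cdot t$ only when $\sqrt{\overline{\delta}_{f}(\underline{\delta}_{f}+\lambda_{f})}\lesssim\lambda_{f}$, i.e.\ essentially when $\lambda_{f}\gtrsim\overline{\delta}_{f}$. You correctly flag this at the end as the ``delicate bookkeeping''. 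The fix is simple and costs nothing: normalize each firm's deviation by its own $\theta_{j}C(\cdot,\ell_{j})+\lambda_{f}$ rather than by $\lambda_{f}$ (equivalently, write $P-\mathfrak{P}=P(I-P^{-1}\mathfrak{P})$ and reuse the bound $\|I-D_{f,\lambda}^{1/2}\mathfrak{D}_{f,\lambda}^{-1/2}\|\le t$), after which your three terms collapse to the paper's and the constants $5$ and $16$ fall out cleanly.
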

\begin{remark}
If the expected node degrees do not increase with the number of nodes,
but the regularization parameters grow with the network size, then
the bounds are of the order of $t/\lambda_{f}$ and $t/\lambda_{w}$
and thus tighter for the inverse of the un-normalized Laplacian.
This makes sense as the un-normalized Laplacian is the normalized
Laplacian multiplied by the regularized degree matrix, whose minimum
eigenvalue is bounded from below by one over the regularization parameter. 
\end{remark}

\section{The ridge regression}

\label{sec:asymptotics_ridge}

This section shows that the concentration inequalities of the preceding section guarantee that the ridge fixed effects converge to well-defined limits, both in terms of bias and variance.

Rewrite the ridge estimator as 
\begin{align*}
\widehat{\mu}-\mu & =\widetilde{L}_{w,\lambda}^{-1}\left[-\lambda_{w}\mu+\lambda_{f}BD_{f,\lambda}^{-1}\phi+\left(W^{\top}-BD_{f,\lambda}^{-1}F^{\top}\right)U\right],\\
\widehat{\phi}-\phi & =\widetilde{L}_{f,\lambda}^{-1}\left[\lambda_{w}B^{\top}D_{w,\lambda}^{-1}\mu-\lambda_{f}\phi+\left(F^{\top}-B^{\top}D_{w,\lambda}^{-1}W^{\top}\right)U\right],
\end{align*}
using $Y=W\mu+F\phi+U$, where $U$ has independent entries with mean
0 and variance $\sigma^{2}$.

We assume that the parameters $(\mu,\phi)$ are random
with 
\[
\mu=Z_{w}\mu^{*}+U_{w},\quad\phi=Z_{f}\phi^{*}+U_{f},
\]
where $Z_{w}$ and $Z_{f}$ are the $n\times K$ and $p\times K$
matrices indicating worker and firm communities, $\mu^{*}$ and $\phi^{*}$
are $K$-vectors of group fixed effects, and $U_{w}$ and $U_{f}$
are vectors of independent components with mean zero and variance
$\sigma_{w}^{2}$ and $\sigma_{f}^{2}$. It is further assumed that
$U_{w}$ and $U_{f}$ are independent of $U$.

\subsection{Bias}

Given the community structure $Z=(Z_{w},Z_{f})$ and $\beta^{*}=(\mu^{*},\phi^{*})$,
and given the network structure $X=(W,F)$, the biases on $\widehat{\mu}$
and $\widehat{\phi}$ are 
\begin{align*}
b_{\mu,\lambda}  &:=\E\left(\widehat{\mu}-\mu\mid X,Z,\beta^{*}\right)=\widetilde{L}_{w,\lambda}^{-1}\left(-\lambda_{w}Z_{w}\mu^{*}+\lambda_{f}BD_{f,\lambda}^{-1}Z_{f}\phi^{*}\right),\label{eq:bias}\\
b_{\phi,\lambda}  &:=\E\left(\widehat{\phi}-\phi \mid X,Z,\beta^{*}\right)=\widetilde{L}_{f,\lambda}^{-1}\left(-\lambda_{f}Z_{f}\phi^{*}+\lambda_{w}B^{\top}D_{w,\lambda}^{-1}Z_{w}\mu^{*}\right).
\end{align*}
Define the deterministic bias limits on $\mu$ and $\phi$ as the
ones obtained using the expected adjacency matrix $\mathfrak{B}$,
\begin{align*}
\mathfrak{b}_{\mu,\lambda} & :=\widetilde{\mathfrak{L}}_{w,\lambda}^{-1}\left(-\lambda_{w}Z_{w}\mu^{*}+\lambda_{f}\mathfrak{B}\mathfrak{D}_{f,\lambda}^{-1}Z_{f}\phi^{*}\right)\\
b_{\phi,\lambda} & :=\widetilde{L}_{f,\lambda}^{-1}\left(-\lambda_{f}Z_{f}\phi^{*}+\lambda_{w}B^{\top}D_{w,\lambda}^{-1}Z_{w}\mu^{*}\right).
\end{align*}
The following theorem proves that $\mathfrak{b}_{\mu,\lambda}$ and
$\mathfrak{b}_{\phi,\lambda}$ are asymptotically good predictors
of the bias $b_{\mu,\lambda}$ and $b_{\phi,\lambda}$ in MSE
respectively, as long as $\lambda_{w},\lambda_{f}$ grow faster than
$\ln(n+p)$. 
\begin{theorem}
\label{thm:ConcentrationRidgeBias}For any $\epsilon>0$, under the
same conditions as in Theorem \ref{thm:ConcentrationLaplacian}, with
probability at least $1-\frac{11+7\gamma}{1+\gamma}\epsilon$, 
\begin{multline*}
\left\Vert b_{\mu,\lambda}-\mathfrak{b}_{\mu,\lambda}\right\Vert \le\left(5+16\frac{\overline{\delta}_{w}+\lambda_{w}}{\underline{\delta}_{w}+\lambda_{w}}\right)\frac{\overline{\delta}_{w}+\lambda_{w}}{\lambda_{w}^{2}}t\left(\lambda_{w}\sqrt{n}\left\Vert \mu^{*}\right\Vert +\lambda_{f}\sqrt{\frac{\overline{\delta}_{w}+\lambda_{w}}{\underline{\delta}_{f}+\lambda_{f}}}\sqrt{p}\left\Vert \phi^{*}\right\Vert \right)\\
+2\lambda_{f}\frac{\overline{\delta}_{w}+\lambda_{w}}{\lambda_{w}^{2}}\left(\sqrt{\chi_{f}}+2\sqrt{\frac{\overline{\delta}_{w}+\lambda_{w}}{\lambda_{f}}}\right)t\sqrt{p}\left\Vert \phi^{*}\right\Vert ,
\end{multline*}
where $\chi_{f}=\max\left(1,\frac{\overline{\delta}_{w}}{\underline{\delta}_{f}+\lambda_{f}}\right)$.
And with probability at least $1-\frac{6+12\gamma}{1+\gamma}\epsilon$,
\begin{multline*}
\left\Vert b_{\phi,\lambda}-\mathfrak{b}_{\phi,\lambda}\right\Vert \le\left(5+16\frac{\overline{\delta}_{f}+\lambda_{f}}{\underline{\delta}_{f}+\lambda_{f}}\right)\frac{\overline{\delta}_{f}+\lambda_{f}}{\lambda_{f}^{2}}t\left(\lambda_{f}\sqrt{p}\left\Vert \phi^{*}\right\Vert +\lambda_{w}\sqrt{\frac{\overline{\delta}_{f}+\lambda_{f}}{\underline{\delta}_{w}+\lambda_{w}}}\sqrt{n}\left\Vert \mu^{*}\right\Vert \right)\\
+2\lambda_{w}\frac{\overline{\delta}_{f}+\lambda_{f}}{\lambda_{f}^{2}}\left(\sqrt{\chi_{w}}+2\sqrt{\frac{\overline{\delta}_{f}+\lambda_{f}}{\lambda_{w}}}\right)t\sqrt{n}\left\Vert \mu^{*}\right\Vert ,
\end{multline*}
where $\chi_{w}=\max\left(1,\frac{\overline{\delta}_{f}}{\underline{\delta}_{w}+\lambda_{w}}\right)$. 
\end{theorem}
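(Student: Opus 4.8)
The plan is to bound $b_{\mu,\lambda}-\mathfrak b_{\mu,\lambda}$ through an additive decomposition that separates the fluctuation of the inverse un-normalized worker Laplacian from that of the cross term $BD_{f,\lambda}^{-1}$ around $\mathfrak B\mathfrak D_{f,\lambda}^{-1}$. Set the driving vectors $a=-\lambda_w Z_w\mu^*+\lambda_f BD_{f,\lambda}^{-1}Z_f\phi^*$ and $\mathfrak a=-\lambda_w Z_w\mu^*+\lambda_f\mathfrak B\mathfrak D_{f,\lambda}^{-1}Z_f\phi^*$, so that $b_{\mu,\lambda}=\widetilde L_{w,\lambda}^{-1}a$ and $\mathfrak b_{\mu,\lambda}=\widetilde{\mathfrak L}_{w,\lambda}^{-1}\mathfrak a$. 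First I would write
\[
b_{\mu,\lambda}-\mathfrak b_{\mu,\lambda}=\left(\widetilde L_{w,\lambda}^{-1}-\widetilde{\mathfrak L}_{w,\lambda}^{-1}\right)\mathfrak a+\widetilde L_{w,\lambda}^{-1}(a-\mathfrak a),
\]
where $a-\mathfrak a=\lambda_f\left(BD_{f,\lambda}^{-1}-\mathfrak B\mathfrak D_{f,\lambda}^{-1}\right)Z_f\phi^*$ because the $\mu^*$ parts cancel. The first summand will deliver the first displayed line of the bound and the second summand the rest.

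For the first summand I would invoke Theorem \ref{thm:ConcentrationInvUnnormalizedLap} (worker version) to bound $\left\Vert\widetilde L_{w,\lambda}^{-1}-\widetilde{\mathfrak L}_{w,\lambda}^{-1}\right\Vert$ by $\left(5+16\frac{\overline\delta_w+\lambda_w}{\underline\delta_w+\lambda_w}\right)\frac{\overline\delta_w+\lambda_w}{\lambda_w^2}t$, and bound $\left\Vert\mathfrak a\right\Vert$ entirely deterministically. Here $\left\Vert Z_w\mu^*\right\Vert\le\sqrt n\left\Vert\mu^*\right\Vert$, and using the factorization $\mathfrak B\mathfrak D_{f,\lambda}^{-1}=\mathfrak D_{w,\lambda}^{1/2}\mathfrak E_\lambda\mathfrak D_{f,\lambda}^{-1/2}$ together with $\left\Vert\mathfrak E_\lambda\right\Vert<1$ from Lemma \ref{lem:max-eig-cal(E)'cal(E)},
\[
\left\Vert\mathfrak B\mathfrak D_{f,\lambda}^{-1}Z_f\phi^*\right\Vert\le\left\Vert\mathfrak D_{w,\lambda}^{1/2}\right\Vert\left\Vert\mathfrak D_{f,\lambda}^{-1/2}Z_f\phi^*\right\Vert\le\sqrt{\tfrac{\overline\delta_w+\lambda_w}{\underline\delta_f+\lambda_f}}\,\sqrt p\,\left\Vert\phi^*\right\Vert,
\]
since $\left\Vert\mathfrak D_{w,\lambda}^{1/2}\right\Vert=\sqrt{\overline\delta_w+\lambda_w}$ and every diagonal entry of $\mathfrak D_{f,\lambda}$ is at least $\underline\delta_f+\lambda_f$. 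Multiplying $\left\Vert\mathfrak a\right\Vert\le\lambda_w\sqrt n\left\Vert\mu^*\right\Vert+\lambda_f\sqrt{(\overline\delta_w+\lambda_w)/(\underline\delta_f+\lambda_f)}\sqrt p\left\Vert\phi^*\right\Vert$ by the Laplacian bound reproduces the first line.

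The work is in the second summand $\widetilde L_{w,\lambda}^{-1}(a-\mathfrak a)$. On the good event I would control the empirical $\left\Vert\widetilde L_{w,\lambda}^{-1}\right\Vert$ by a constant multiple of its deterministic counterpart $(\overline\delta_w+\lambda_w)/\lambda_w^2$, combining $\left\Vert D_{w,\lambda}^{-1}\right\Vert\le1/\lambda_w$ with $\eigmin(L_{w,\lambda})\ge\lambda_w/(\max_i d_{i\cdot}+\lambda_w)$ (Lemma \ref{lem:max-eig-E'E}) and a high-probability bound on $\max_i d_{i\cdot}$; this supplies the prefactor $(\overline\delta_w+\lambda_w)/\lambda_w^2$ and the leading constant of the second line. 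It then remains to bound $\left\Vert(BD_{f,\lambda}^{-1}-\mathfrak B\mathfrak D_{f,\lambda}^{-1})Z_f\phi^*\right\Vert$, which I would do by writing $BD_{f,\lambda}^{-1}=D_{w,\lambda}^{1/2}E_\lambda D_{f,\lambda}^{-1/2}$ and telescoping the three discrepancies
\[
\left(D_{w,\lambda}^{1/2}-\mathfrak D_{w,\lambda}^{1/2}\right)E_\lambda x+\mathfrak D_{w,\lambda}^{1/2}\left(E_\lambda-\mathfrak E_\lambda\right)x+\mathfrak D_{w,\lambda}^{1/2}\mathfrak E_\lambda\left(x-\mathfrak x\right),
\]
with $x=D_{f,\lambda}^{-1/2}Z_f\phi^*$ and $\mathfrak x=\mathfrak D_{f,\lambda}^{-1/2}Z_f\phi^*$. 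The middle term is handled by $\left\Vert E_\lambda-\mathfrak E_\lambda\right\Vert\le4t$ (Theorem \ref{thm:ConcentrationLaplacian}), $\left\Vert\mathfrak D_{w,\lambda}^{1/2}\right\Vert=\sqrt{\overline\delta_w+\lambda_w}$ and $\left\Vert x\right\Vert\le\sqrt{p/\lambda_f}\left\Vert\phi^*\right\Vert$, giving the $\sqrt{(\overline\delta_w+\lambda_w)/\lambda_f}$ contribution; the two outer, degree-fluctuation terms give the $\sqrt{\chi_f}$ contribution.

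The main obstacle will be precisely these two degree-fluctuation terms: I need concentration of the empirical degree matrices $D_{w,\lambda}$ and $D_{f,\lambda}$ around $\mathfrak D_{w,\lambda}$ and $\mathfrak D_{f,\lambda}$ sharp enough to reproduce the dependence $\chi_f=\max(1,\overline\delta_w/(\underline\delta_f+\lambda_f))$ rather than a cruder $1/\lambda_f$ bound, while tolerating empirical firm degrees $d_{\cdot j}$ that may be small. This calls for a scalar Bernstein bound on $d_{i\cdot}=\sum_j d_{ij}$ and $d_{\cdot j}=\sum_i d_{ij}$ (the same machinery underlying Theorem \ref{thm:ConcentrationLaplacian}), together with the elementary estimates $\left\vert\sqrt a-\sqrt b\right\vert\le\tfrac12|a-b|/\sqrt{\min(a,b)}$ and $\left\vert a^{-1/2}-b^{-1/2}\right\vert\le\tfrac12|a-b|/\min(a,b)^{3/2}$ to pass from degree to square-root-of-degree discrepancies; tracking the resulting constants is tedious but routine. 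Assembling the three telescoped terms, multiplying by the $\left\Vert\widetilde L_{w,\lambda}^{-1}\right\Vert$ bound, and adding the first summand yields the stated inequality, while a union bound over the events of Theorems \ref{thm:ConcentrationLaplacian} and \ref{thm:ConcentrationInvUnnormalizedLap} and the degree-concentration events accounts for the probability $1-\frac{11+7\gamma}{1+\gamma}\epsilon$. Finally, the bound for $b_{\phi,\lambda}-\mathfrak b_{\phi,\lambda}$ follows by the mirror-image argument with workers and firms interchanged ($w\leftrightarrow f$, $\mu\leftrightarrow\phi$, $n\leftrightarrow p$), now invoking the firm version of Theorem \ref{thm:ConcentrationInvUnnormalizedLap}; the different constant $\frac{6+12\gamma}{1+\gamma}$ reflects the asymmetric failure probabilities of the firm-side concentration events.
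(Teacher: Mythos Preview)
Your top-level decomposition $b_{\mu,\lambda}-\mathfrak b_{\mu,\lambda}=(\widetilde L_{w,\lambda}^{-1}-\widetilde{\mathfrak L}_{w,\lambda}^{-1})\mathfrak a+\widetilde L_{w,\lambda}^{-1}(a-\mathfrak a)$, the deterministic bound on $\|\mathfrak a\|$ via $\mathfrak B\mathfrak D_{f,\lambda}^{-1}=\mathfrak D_{w,\lambda}^{1/2}\mathfrak E_\lambda\mathfrak D_{f,\lambda}^{-1/2}$, the use of Theorem \ref{thm:ConcentrationInvUnnormalizedLap} for $\|\widetilde L_{w,\lambda}^{-1}-\widetilde{\mathfrak L}_{w,\lambda}^{-1}\|$, and the high-probability control $\|\widetilde L_{w,\lambda}^{-1}\|\le 2(\overline\delta_w+\lambda_w)/\lambda_w^2$ are all exactly what the paper does.

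The one place you diverge is the bound on $BD_{f,\lambda}^{-1}-\mathfrak B\mathfrak D_{f,\lambda}^{-1}$. The paper splits this as $(B-\mathfrak B)\mathfrak D_{f,\lambda}^{-1}+BD_{f,\lambda}^{-1}(I_p-D_{f,\lambda}\mathfrak D_{f,\lambda}^{-1})$ and applies a \emph{fresh} matrix Bernstein inequality to the one-side-normalized piece $(B-\mathfrak B)\mathfrak D_{f,\lambda}^{-1}$; the variance bounds are $\|\sum\E X_{ij}^\top X_{ij}\|\le M_f$ and $\|\sum\E X_{ij}X_{ij}^\top\|\le M_f^2\overline\delta_w$, and it is precisely this row/column asymmetry that produces $v\le M_f\chi_f$ and hence the $\sqrt{\chi_f}$ factor. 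Your route --- telescope through the symmetrically normalized $E_\lambda$ and reuse $\|E_\lambda-\mathfrak E_\lambda\|\le 4t$ plus scalar degree concentration --- is valid and gives the same order, but the two ``outer'' degree-fluctuation pieces yield factors like $\sqrt{(\overline\delta_w+\lambda_w)/\lambda_f}$ and $\sqrt{(\overline\delta_w+\lambda_w)/(\underline\delta_f+\lambda_f)}$, not $\sqrt{\chi_f}=\sqrt{\max(1,\overline\delta_w/(\underline\delta_f+\lambda_f))}$. The $\max$ with $1$ and the absence of $\lambda_w$ in the numerator are artifacts of the asymmetric Bernstein variance, which a telescope through $E_\lambda$ cannot see. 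So your claim that the outer terms ``give the $\sqrt{\chi_f}$ contribution'' is not quite right; to reproduce the constant as stated you need the paper's direct Bernstein on $(B-\mathfrak B)\mathfrak D_{f,\lambda}^{-1}$. With that one substitution, your argument and probability bookkeeping match the paper's proof.
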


\begin{remark}
If $p/n\rightarrow\gamma$ and if ridge parameters grow at the same
rate, faster than $\ln(n+p)$, say $\left(\ln(n+p)\right)^{1+\nu'}$
with $\nu'>0$; see Remark \ref{Rk: rates}. The bias bound is therefore
of order $\sqrt{n}t$. Hence, the root mean square error between $b_{\mu,\lambda}$
and $\mathfrak{b}_{\mu,\lambda}$ is bounded by a multiple of of $t$,
as in Theorem \ref{thm:ConcentrationLaplacian}, and therefore goes
to 0 in probability. 
\end{remark}

\subsection{Variance}

The  conditional variance of $\widehat{\mu}-\mu$ is
\begin{align*}
V_{w,\lambda}&=\V\left(\widehat{\mu}-\mu\mid X,Z,\beta^{*}\right)\\
&=\widetilde{L}_{w,\lambda}^{-1}\V\left(-\lambda_{w}U_{w}+\lambda_{f}BD_{f,\lambda}^{-1}U_{f}+W^{\top}U-BD_{f,\lambda}^{-1}F^{\top}U\mid X,Z,\beta^{*}\right)\widetilde{L}_{w,\lambda}^{-1}\\
&=\widetilde{L}_{w,\lambda}^{-1}\left[\left(\lambda_{w}^{2}\sigma_{w}^{2}-\lambda_{w}\sigma^{2}\right)I_{n}+\left(\lambda_{f}^{2}\sigma_{f}^{2}-\lambda_{f}\sigma^{2}\right)BD_{f,\lambda}^{-2}B^{\top}+\sigma^{2}\widetilde{L}_{w,\lambda}\right]\widetilde{L}_{w,\lambda}^{-1}\\
&=\sigma^{2}\widetilde{L}_{w,\lambda}^{-1}+\widetilde{L}_{w,\lambda}^{-1}\left[\left(\lambda_{w}^{2}\sigma_{w}^{2}-\lambda_{w}\sigma^{2}\right)I_{n}+\left(\lambda_{f}^{2}\sigma_{f}^{2}-\lambda_{f}\sigma^{2}\right)BD_{f,\lambda}^{-2}B^{\top}\right]\widetilde{L}_{w,\lambda}^{-1}.
\end{align*}
Define the deterministic equivalent of the variance matrix
as 
\[
\mathfrak{V}_{w,\lambda}=\sigma^{2}\widetilde{\mathfrak{L}}_{w,\lambda}^{-1}+\widetilde{\mathfrak{L}}_{w,\lambda}^{-1}\left[\left(\lambda_{w}^{2}\sigma_{w}^{2}-\lambda_{w}\sigma^{2}\right)I_{n}+\left(\lambda_{f}^{2}\sigma_{f}^{2}-\lambda_{f}\sigma^{2}\right)\mathfrak{B}\mathfrak{D}_{f,\lambda}^{-2}\mathfrak{B}^{\top}\right]\widetilde{\mathfrak{L}}_{w,\lambda}^{-1}.
\]
The firm-side analog of these variances are 
\begin{align*}
V_{f,\lambda} & =\sigma^{2}\widetilde{L}_{f,\lambda}^{-1}+\widetilde{L}_{f,\lambda}^{-1}\left[\left(\lambda_{f}^{2}\sigma_{f}^{2}-\lambda_{f}\sigma^{2}\right)I_{p}+\left(\lambda_{w}^{2}\sigma_{w}^{2}-\lambda_{w}\sigma^{2}\right)B^{\top}D_{w,\lambda}^{-2}B\right]\widetilde{L}_{f,\lambda}^{-1},\\
\mathfrak{V}_{f,\lambda} & =\sigma^{2}\widetilde{\mathfrak{L}}_{f,\lambda}^{-1}+\widetilde{\mathfrak{L}}_{f,\lambda}^{-1}\left[\left(\lambda_{f}^{2}\sigma_{f}^{2}-\lambda_{f}\sigma^{2}\right)I_{p}+\left(\lambda_{w}^{2}\sigma_{w}^{2}-\lambda_{w}\sigma^{2}\right)\mathfrak{B^{\top}}\mathfrak{D}_{w,\lambda}^{-2}\mathfrak{B}\right]\widetilde{\mathfrak{L}}_{f,\lambda}^{-1}.
\end{align*}
The next theorem shows that $\mathfrak{V}_{w,\lambda}$ and $\mathfrak{V}_{f,\lambda}$
are asymptotically equivalent to the variances $V_{w,\lambda}$ and
$V_{f,\lambda}$ as long as the ridge parameters grow at the same
rate, faster than $\ln(n+p)$.
\begin{theorem}
\label{thm:ConcentrationRidgeVariance}For any $\epsilon>0$, under
the same conditions as in Theorem \ref{thm:ConcentrationLaplacian},
with probability at least $1-\frac{13+6\gamma}{1+\gamma}\epsilon$,
\begin{multline*}
\left\Vert V_{w,\lambda}-\mathfrak{V}_{w,\lambda}\right\Vert \le\sigma^{2}\left(5+16\frac{\overline{\delta}_{w}+\lambda_{w}}{\underline{\delta}_{w}+\lambda_{w}}\right)\frac{\overline{\delta}_{w}+\lambda_{w}}{\lambda_{w}^{2}}t\\
+\left|\lambda_{w}^{2}\sigma_{w}^{2}-\lambda_{w}\sigma^{2}\right|\left(\frac{2}{\lambda_{w}}+\frac{1}{\underline{\delta}_{w}+\lambda_{w}}\right)\left(5+16\frac{\overline{\delta}_{w}+\lambda_{w}}{\underline{\delta}_{w}+\lambda_{w}}\right)\frac{\left(\overline{\delta}_{w}+\lambda_{w}\right)^{2}}{\lambda_{w}^{3}}t\\
+\left|\lambda_{f}^{2}\sigma_{f}^{2}-\lambda_{f}\sigma^{2}\right|\left(4\frac{1}{\lambda_{w}^ {}}\sqrt{\frac{1}{\lambda_{f}}}+\frac{1}{\underline{\delta}_{w}+\lambda_{w}}\sqrt{\frac{1}{\underline{\delta}_{f}+\lambda_{f}}}\right)\frac{\left(\overline{\delta}_{w}+\lambda_{w}\right)^{5/2}}{\lambda_{w}^{3}}\\
\times\left[\left(5+16\frac{\overline{\delta}_{w}+\lambda_{w}}{\underline{\delta}_{w}+\lambda_{w}}\right)\sqrt{\frac{\overline{\delta}_{w}+\lambda_{w}}{\underline{\delta}_{f}+\lambda_{f}}}+2\left(\sqrt{\chi_{w}}+2\sqrt{\frac{\overline{\delta}_{f}+\lambda_{f}}{\lambda_{w}}}\right)\right]t.
\end{multline*}
And similarly, with probability at least $1-\frac{5+13\gamma}{1+\gamma}\epsilon,$
\begin{multline*}
\left\Vert V_{f,\lambda}-\mathfrak{V}_{f,\lambda}\right\Vert \le\sigma^{2}\left(5+16\frac{\overline{\delta}_{f}+\lambda_{f}}{\underline{\delta}_{f}+\lambda_{f}}\right)\frac{\overline{\delta}_{f}+\lambda_{f}}{\lambda_{f}^{2}}t\\
+\left|\lambda_{f}^{2}\sigma_{f}^{2}-\lambda_{f}\sigma^{2}\right|\left(\frac{2}{\lambda_{f}}+\frac{1}{\underline{\delta}_{f}+\lambda_{f}}\right)\left(5+16\frac{\overline{\delta}_{f}+\lambda_{f}}{\underline{\delta}_{f}+\lambda_{f}}\right)\frac{\left(\overline{\delta}_{f}+\lambda_{f}\right)^{2}}{\lambda_{f}^{3}}t\\
+\left|\lambda_{w}^{2}\sigma_{w}^{2}-\lambda_{w}\sigma^{2}\right|\left(4\frac{1}{\lambda_{f}}\sqrt{\frac{1}{\lambda_{w}}}+\frac{1}{\underline{\delta}_{f}+\lambda_{f}}\sqrt{\frac{1}{\underline{\delta}_{w}+\lambda_{w}}}\right)\frac{\left(\overline{\delta}_{f}+\lambda_{f}\right)^{5/2}}{\lambda_{f}^{3}}\\
\times\left[\left(5+16\frac{\overline{\delta}_{f}+\lambda_{f}}{\underline{\delta}_{f}+\lambda_{f}}\right)\sqrt{\frac{\overline{\delta}_{f}+\lambda_{f}}{\underline{\delta}_{w}+\lambda_{w}}}+2\left(\sqrt{\chi_{w}}+2\sqrt{\frac{\overline{\delta}_{f}+\lambda_{f}}{\lambda_{w}}}\right)\right]t.
\end{multline*}
\end{theorem}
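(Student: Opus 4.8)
The plan is to prove the worker-side inequality; the firm-side one follows by the symmetric substitution $w\leftrightarrow f$, $n\leftrightarrow p$. Writing $a_w=\lambda_w^2\sigma_w^2-\lambda_w\sigma^2$ and $a_f=\lambda_f^2\sigma_f^2-\lambda_f\sigma^2$, I would first split the difference termwise, matching the three summands of $V_{w,\lambda}$ and $\mathfrak{V}_{w,\lambda}$:
\[
V_{w,\lambda}-\mathfrak{V}_{w,\lambda}=\sigma^2\bigl(\widetilde{L}_{w,\lambda}^{-1}-\widetilde{\mathfrak{L}}_{w,\lambda}^{-1}\bigr)+a_w\bigl(\widetilde{L}_{w,\lambda}^{-2}-\widetilde{\mathfrak{L}}_{w,\lambda}^{-2}\bigr)+a_f\bigl(GG^\top-\mathfrak{G}\mathfrak{G}^\top\bigr),
\]
where $G=\widetilde{L}_{w,\lambda}^{-1}BD_{f,\lambda}^{-1}$ and $\mathfrak{G}=\widetilde{\mathfrak{L}}_{w,\lambda}^{-1}\mathfrak{B}\mathfrak{D}_{f,\lambda}^{-1}$, using the factorization $BD_{f,\lambda}^{-2}B^\top=(BD_{f,\lambda}^{-1})(BD_{f,\lambda}^{-1})^\top$ together with the symmetry of $\widetilde{L}_{w,\lambda}^{-1}$. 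The triangle inequality reduces the problem to bounding three operator-norm discrepancies, which will produce the three lines of the claimed bound in order.

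The first piece is immediate: $\sigma^2\|\widetilde{L}_{w,\lambda}^{-1}-\widetilde{\mathfrak{L}}_{w,\lambda}^{-1}\|$ is controlled directly by Theorem \ref{thm:ConcentrationInvUnnormalizedLap}, giving line one verbatim. For the second piece I would use the telescoping identity $\widetilde{L}_{w,\lambda}^{-2}-\widetilde{\mathfrak{L}}_{w,\lambda}^{-2}=\widetilde{L}_{w,\lambda}^{-1}(\widetilde{L}_{w,\lambda}^{-1}-\widetilde{\mathfrak{L}}_{w,\lambda}^{-1})+(\widetilde{L}_{w,\lambda}^{-1}-\widetilde{\mathfrak{L}}_{w,\lambda}^{-1})\widetilde{\mathfrak{L}}_{w,\lambda}^{-1}$, so that its norm is at most $(\|\widetilde{L}_{w,\lambda}^{-1}\|+\|\widetilde{\mathfrak{L}}_{w,\lambda}^{-1}\|)\|\widetilde{L}_{w,\lambda}^{-1}-\widetilde{\mathfrak{L}}_{w,\lambda}^{-1}\|$. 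The difference factor is again Theorem \ref{thm:ConcentrationInvUnnormalizedLap}; the deterministic norm obeys $\|\widetilde{\mathfrak{L}}_{w,\lambda}^{-1}\|\le\|\mathfrak{D}_{w,\lambda}^{-1}\|\,\|\mathfrak{L}_{w,\lambda}^{-1}\|\le\frac{1}{\underline\delta_w+\lambda_w}\frac{\overline\delta_w+\lambda_w}{\lambda_w}$ by Lemma \ref{lem:max-eig-cal(E)'cal(E)}; and the empirical norm is bounded on the concentration event by $\|\widetilde{L}_{w,\lambda}^{-1}\|\le\frac{2}{\lambda_w}\frac{\overline\delta_w+\lambda_w}{\lambda_w}$, using $\|D_{w,\lambda}^{-1}\|\le\lambda_w^{-1}$ together with $\eigmin(L_{w,\lambda})\ge\eigmin(\mathfrak{L}_{w,\lambda})-\|L_{w,\lambda}-\mathfrak{L}_{w,\lambda}\|$ and the smallness of $t$ implied by the growth of $\lambda_w$ (Remark \ref{Rk: rates}). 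The sum of the two norms reproduces the factor $(\frac{2}{\lambda_w}+\frac{1}{\underline\delta_w+\lambda_w})$ of line two. One can instead note the Schur-complement inequality $\widetilde{L}_{w,\lambda}\succeq\lambda_w I_n$ (since $D_w-BD_{f,\lambda}^{-1}B^\top\succeq D_w-BD_f^{-1}B^\top=\widetilde{L}_w\succeq0$), which gives the cleaner deterministic bound $\|\widetilde{L}_{w,\lambda}^{-1}\|\le\lambda_w^{-1}$ and a fortiori the stated inequality.

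The third piece is the main obstacle. I would bound $\|GG^\top-\mathfrak{G}\mathfrak{G}^\top\|\le(\|G\|+\|\mathfrak{G}\|)\|G-\mathfrak{G}\|$ via $GG^\top-\mathfrak{G}\mathfrak{G}^\top=(G-\mathfrak{G})G^\top+\mathfrak{G}(G-\mathfrak{G})^\top$. The norms $\|G\|,\|\mathfrak{G}\|$ are controlled by combining the inverse-Laplacian norms of the previous paragraph with $\|\mathfrak{B}\mathfrak{D}_{f,\lambda}^{-1}\|\le\sqrt{\frac{\overline\delta_w+\lambda_w}{\underline\delta_f+\lambda_f}}$ (from $\mathfrak{B}\mathfrak{D}_{f,\lambda}^{-1}=\mathfrak{D}_{w,\lambda}^{1/2}\mathfrak{E}_\lambda\mathfrak{D}_{f,\lambda}^{-1/2}$ and $\|\mathfrak{E}_\lambda\|<1$, Lemma \ref{lem:max-eig-cal(E)'cal(E)}) and its empirical analogue. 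The crux is $\|G-\mathfrak{G}\|$. Here I would observe that $G=\widetilde{L}_{w,\lambda}^{-1}BD_{f,\lambda}^{-1}$ is exactly the composite object carrying the coefficient of $\phi^*$ in the bias $b_{\mu,\lambda}$, so its discrepancy is precisely what is estimated inside the proof of Theorem \ref{thm:ConcentrationRidgeBias}: split $G-\mathfrak{G}=(\widetilde{L}_{w,\lambda}^{-1}-\widetilde{\mathfrak{L}}_{w,\lambda}^{-1})\mathfrak{B}\mathfrak{D}_{f,\lambda}^{-1}+\widetilde{L}_{w,\lambda}^{-1}(BD_{f,\lambda}^{-1}-\mathfrak{B}\mathfrak{D}_{f,\lambda}^{-1})$, whose first summand combines Theorem \ref{thm:ConcentrationInvUnnormalizedLap} with $\|\mathfrak{B}\mathfrak{D}_{f,\lambda}^{-1}\|$ to produce the bracket term $(5+16\frac{\overline\delta_w+\lambda_w}{\underline\delta_w+\lambda_w})\sqrt{\frac{\overline\delta_w+\lambda_w}{\underline\delta_f+\lambda_f}}$, and whose second summand uses the further telescoping of $BD_{f,\lambda}^{-1}=D_{w,\lambda}^{1/2}E_\lambda D_{f,\lambda}^{-1/2}$ over its three factors, invoking $\|E_\lambda-\mathfrak{E}_\lambda\|\le4t$ (Theorem \ref{thm:ConcentrationLaplacian}) and the degree-matrix concentration bounds $\|D_{w,\lambda}^{1/2}-\mathfrak{D}_{w,\lambda}^{1/2}\|$, $\|D_{f,\lambda}^{-1/2}-\mathfrak{D}_{f,\lambda}^{-1/2}\|$ established inside that same proof, giving the remaining bracket term of shape $2(\sqrt{\chi}+2\sqrt{\cdots})$.

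The two genuinely delicate points are the following. First, every empirical operator norm ($\|\widetilde{L}_{w,\lambda}^{-1}\|$, $\|G\|$, $\|E_\lambda\|$) must be held uniformly bounded by a constant multiple of its deterministic counterpart on the intersection of all the invoked concentration events; this requires $t$ small, which is guaranteed precisely by the hypothesis that $\lambda_w\wedge\lambda_f$ grows like $\ln(n+p)$. Second, the failure probabilities of the several invoked results (Theorems \ref{thm:ConcentrationLaplacian}, \ref{thm:ConcentrationInverseLaplacian}, \ref{thm:ConcentrationInvUnnormalizedLap}, and the degree-matrix events) must be combined by a union bound to match the stated $\frac{13+6\gamma}{1+\gamma}\epsilon$; since several of these events are nested (Theorem \ref{thm:ConcentrationInvUnnormalizedLap} already rests on Theorems \ref{thm:ConcentrationLaplacian}--\ref{thm:ConcentrationInverseLaplacian}), this is bookkeeping rather than new probability. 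The firm-side inequality then follows verbatim after interchanging $w\leftrightarrow f$ and $n\leftrightarrow p$.
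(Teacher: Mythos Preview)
Your proposal is correct and follows essentially the same route as the paper: the three-term split of $V_{w,\lambda}-\mathfrak{V}_{w,\lambda}$, the telescoping $\widetilde L_{w,\lambda}^{-2}-\widetilde{\mathfrak L}_{w,\lambda}^{-2}=(\widetilde L_{w,\lambda}^{-1}+\widetilde{\mathfrak L}_{w,\lambda}^{-1})(\widetilde L_{w,\lambda}^{-1}-\widetilde{\mathfrak L}_{w,\lambda}^{-1})$ in norm, the factorization $GG^\top-\mathfrak G\mathfrak G^\top$ with $G=\widetilde L_{w,\lambda}^{-1}BD_{f,\lambda}^{-1}$, and the reuse of the $\|G-\mathfrak G\|$ bound already worked out in the bias theorem are all exactly what the paper does.

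One small correction on the mechanism you sketch for $\|BD_{f,\lambda}^{-1}-\mathfrak B\mathfrak D_{f,\lambda}^{-1}\|$: the paper does \emph{not} obtain this by telescoping $D_{w,\lambda}^{1/2}E_\lambda D_{f,\lambda}^{-1/2}$ over three factors via $\|E_\lambda-\mathfrak E_\lambda\|\le 4t$. It instead splits as $(B-\mathfrak B)\mathfrak D_{f,\lambda}^{-1}+B(D_{f,\lambda}^{-1}-\mathfrak D_{f,\lambda}^{-1})$ and applies a \emph{fresh} matrix Bernstein inequality directly to $(B-\mathfrak B)\mathfrak D_{f,\lambda}^{-1}$; that separate Bernstein step is precisely where the quantity $\chi_f=\max\bigl(1,\overline\delta_w/(\underline\delta_f+\lambda_f)\bigr)$ enters. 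Your three-factor telescoping would yield a bound of the same order (roughly a constant times $\sqrt{(\overline\delta_w+\lambda_w)/\lambda_f}\,t$), but not the specific $\sqrt{\chi}$ constant that appears in the stated inequality. Since you are in any case invoking the intermediate bound already established in the proof of Theorem~\ref{thm:ConcentrationRidgeBias}, this does not affect the correctness of your argument---only your description of how that bound was obtained. Your Schur-complement observation $\widetilde L_{w,\lambda}\succeq\lambda_w I_n$ is a nice simplification the paper does not use (it instead bounds $\|\widetilde L_{w,\lambda}^{-1}\|$ via Lemma~\ref{lem:max-eig-E'E} combined with the degree concentration event).
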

\begin{remark}
If $p/n\rightarrow\gamma$ and if ridge parameters grow at the same
rate faster than $\ln(n+p)$, say $\left(\ln(n+p)\right)^{1+\nu'}$
with $\nu'>0$, the first term on the right hand side of the inequalities
is of order $t/\lambda_{w}$ or $t/\lambda_{f}$, and the last two
terms are of order $t$. 
\end{remark}

\subsection{Prediction}

Suppose that we draw another network and outcomes from the same DGP,
say $\widetilde{X}=(\widetilde{W},\widetilde{F})$ and $\widetilde{Y}=\widetilde{W}\mu+\widetilde{F}\phi+\widetilde{U}$.
The prediction $SSE$ is 
\begin{multline*}
SSE=\E\left[\left(\widetilde{W}(\mu-\widehat{\mu})+\widetilde{F}(\phi-\widehat{\phi})+\widetilde{U}\right)^{\top}\left(\widetilde{W}(\mu-\widehat{\mu})+\widetilde{F}(\phi-\widehat{\phi})+\widetilde{U}\right)\mid X,Z,\beta^{*}\right] \\
=\sigma^{2}+
\E\left[(\mu-\widehat{\mu})^{\top}\mathfrak{D}_{w}(\mu-\widehat{\mu})+2(\mu-\widehat{\mu})^{\top}\mathfrak{B}(\phi-\widehat{\phi})+(\phi-\widehat{\phi})^{\top}\mathfrak{D}_{f}(\phi-\widehat{\phi})\mid X,Z,\beta^{*}\right],
\end{multline*}
as the expectations of $\widetilde{W}^{\top}\widetilde{W}$, $\widetilde{F}^{\top}\widetilde{F}$
and $\widetilde{W}^{\top}\widetilde{F}$ are the same as before.

The first non trivial term is 
\begin{multline*}
\E\left[(\mu-\widehat{\mu})^{\top}\mathfrak{D}_{w}(\mu-\widehat{\mu})\mid X,Z,\beta^{*}\right]=\E\left(\widehat{\mu}-\mu\mid X,Z,\beta^{*}\right)^{\top}\mathfrak{D}_{w}\E\left(\widehat{\mu}-\mu\mid X,Z,\beta^{*}\right)\\
+\text{tr}\left[\mathfrak{D}_{w}\V\left(\widehat{\mu}-\mu|X,Z,\beta^{*}\right)\right]=b_{\mu,\lambda}^{\top}\mathfrak{D}_{w}b_{\mu,\lambda}+\text{tr}\left[\mathfrak{D}_{w}V_{w,\lambda}\right].
\end{multline*}
The other terms follow similarly, yielding 
\[
SSE=b_{\mu,\lambda}^{\top}\mathfrak{D}_{w}b_{\mu,\lambda}+\text{tr}\left[\mathfrak{D}_{w}V_{w,\lambda}\right]+2b_{\mu,\lambda}^{\top}\mathfrak{B}b_{\phi,\lambda}+\text{tr}\left[\mathfrak{B}C_{\lambda}\right]+b_{\phi,\lambda}^{\top}\mathfrak{D}_{f}b_{\phi,\lambda}+\text{tr}\left[\mathfrak{D}_{f}V_{f,\lambda}\right],
\]
where $C_{\lambda}=\E\left[\left(\widehat{\mu}-\mu-b_{\mu,\lambda}\right)\left(\widehat{\phi}-\phi-b_{\phi,\lambda}\right)^{\top}\mid X,Z,\beta^{*}\right]$
denotes the covariance matrix. A deterministic equivalent is obtained
by replacing biases and variances by their deterministic equivalents.

Consider the first term: 
\[
\left|b_{\mu,\lambda}^{\top}\mathfrak{D}_{w}b_{\mu,\lambda}-\mathfrak{b}_{\mu,\lambda}^{\top}\mathfrak{D}_{w}\mathfrak{b}_{\mu,\lambda}\right|\le\left\Vert \mathfrak{D}_{w}\right\Vert \left(\left\Vert b_{\mu,\lambda}\right\Vert +\left\Vert \mathfrak{b}_{\mu,\lambda}\right\Vert \right)\left\Vert b_{\mu,\lambda}-\mathfrak{b}_{\mu,\lambda}\right\Vert ,
\]
where $\left\Vert \mathfrak{D}_{w}\right\Vert =\bar{\delta}_{w}$
is the maximum worker expected degree. It is easy to show that under
the condition of the two previous theorems, if the ridge parameters
go to infinity at the same rate faster that $\ln(n+p)$, this quantity
is bounded of order $nt$.

Consider the second term: 
\[
\left|\text{tr}\left[\mathfrak{D}_{w}V_{w,\lambda}\right]-\text{tr}\left[\mathfrak{D}_{w}\mathfrak{V}_{w,\lambda}\right]\right|=\left|\text{tr}\left[\mathfrak{D}_{w}\left(V_{w,\lambda}-\mathfrak{V}_{w,\lambda}\right)\right]\right|\le\delta_{w}n\left\Vert V_{w,\lambda}-\mathfrak{V}_{w,\lambda}\right\Vert ,
\]
which is of the same oder as the first term.

And so on for all terms. This shows that the Mean Square Error ($SSE$
divided by number of observations $N$) is close to $n$ times the
average worker degree, and will converge to a deterministic equivalent
limit. Studying this limit and deriving optimal ridge parameters for
high dimensional datasets is outside the scope of this paper.

\section{Applications}

\subsection{Simulation results}\label{app:ComputeSimulate}

We start with an initial number of $n_0$ workers and $p_0$ firms. We simulate the model with 
\[
C=c\frac{p_0}{K}\left[I_{K}+\delta(J_{K}-I_{K})\right],
\]
where $c$ is a constant, $J_{K}=\mathbf{1}_{K}\mathbf{1}_{K}^{\top}$
and $\delta$ is a tuning parameter: $\delta=0$ means perfect segregation
and $\delta=1$ means indifference. 
The initial numbers of nodes are set such that $n_{0}=3p_{0}$ and the number
of communities is $K=5$. The community strength is controlled by
$\delta=0.1$. 
We draw $\theta_{j}$ independently from a Pareto distribution with
scale parameter $\alpha=2$ and minimum value $\theta_{min}=1$: $\Pr\{\theta_{j}>x\}=(\theta_{min}/x)^{\alpha}$.
Then, separately for each group $\ell$, we normalize $\theta_{j}$
for all firms $j$ belonging to group $\ell$ by dividing $\theta_{j}$
by $\sum_{j'}\theta_{j'}\delta_{\ell_{j'}\ell}$.  We select the biggest connected component and $n$ and $p$ denote 
the numbers of workers and firms that remain.

Lastly, we consider the following choices of wage parameters:
\begin{gather*}
\mu_{i}=k_{i}-1+\sqrt{2}\mathcal{N}(0,1),\quad
\phi_{j}=0.4(\ell_{j}-1)+\mathcal{N}(0,1).
\end{gather*}
This implies a form of positive sorting on wages (wage fixed effects
are positively correlated). The residual variance is $\sigma=2$.

\begin{table}
\caption{Characteristics of the simulated networks}

\begin{centering}
\begin{tabular}{cccc}
\hline 
 & $c$  & $1$  & 2 \tabularnewline
\hline 
Initial \#firms  & $p_{0}$  & 30000  & 4500 \tabularnewline
\#conncomp  &  & 11757  & 714 \tabularnewline
\#firms  & $p$  & 3059  & 3081 \tabularnewline
\#workers  & $n$  & 7249  & 6928 \tabularnewline
\#links  & $N=d_{\cdot\cdot}$  & 10341  & 11267 \tabularnewline
sparsity  & $\frac{N}{pn}$  & 0.05\%  & 0.05\% \tabularnewline
avg wkr degree  & $\frac{N}{n}$  & 1.4265  & 1.6263 \tabularnewline
avg firm degree  & $\frac{N}{p}$  & 3.3805  & 3.6569 \tabularnewline
\hline 
\end{tabular}
\par\end{centering}
\label{tab:sim_network} 
\end{table}

Table \ref{tab:sim_network} displays the main characteristics of
the simulated networks, one first time with $c=1$, a second time
with $c=2$. When $c=1$, the network is very sparse with many disconnected
components (after removing nodes with no connection). As $c$ increases,
the network becomes denser. We therefore use a much greater initial
value of firm nodes $p_{0}$ for $c=1$ than for $c=2$ in order to
generate a connected graph of roughly the same size $N$ close to
$10000$ observations.

\begin{table}
\setlength{\tabcolsep}{5pt}\caption{Simulated regression results}

\begin{centering}
\begin{tabular}{lcccccccc}
\hline 
 & \multicolumn{4}{c}{share of wage variance} & fixed effect  & outsample  & $\frac{\lambda_{w}}{N/n}$  & $\frac{\lambda_{f}}{N/p}$\tabularnewline
 & worker  & firm  & 2{*}cov  & residual  & correlation  & MSE  &  & \tabularnewline
\hline 
\multicolumn{7}{c}{$c=1$} &  & \tabularnewline
true  & 0.4259  & 0.1378  & 0.1212  & 0.3113  & 0.2500  &  &  & \tabularnewline
OLS  & 3.1724  & 2.8607  & -5.0343  & 0.0012  & -0.8356  & 7.4778  &  & \tabularnewline
OLS debiased  & -0.0453  & -0.2110  & 0.8959  & 0.3605  &  &  &  & \tabularnewline
ridge  & 0.2552  & 0.1028  & 0.1101  & 0.5319  & 0.3399  & 2.1289  & 0.58  & 0.83\tabularnewline
 &  &  &  &  &  &  &  & \tabularnewline
\multicolumn{7}{c}{$c=2$} &  & \tabularnewline
true  & 0.4363  & 0.1463  & 0.1062  & 0.3292  & 0.2101  &  &  & \tabularnewline
OLS  & 0.9798  & 0.5792  & -0.5949  & 0.0358  & -0.3948  & 3.6589  &  & \tabularnewline
OLS debiased  & 0.4089  & 0.1178  & 0.1526  & 0.3208  &  &  &  & \tabularnewline
ridge  & 0.2858  & 0.1026  & 0.1115  & 0.5002  & 0.3255  & 2.1167  & 0.48  & 0.53\tabularnewline
\hline 
\end{tabular}
\par\end{centering}
\label{tab:sim_regression} 
\end{table}

We show in Table \ref{tab:sim_regression} the corresponding simulations
for the log-wage variance decomposition. The first
row (``true'') shows the true variance decomposition (and the correlation
between fixed effects). The fixed effect parameters were chosen to
deliver a decomposition that looks like usual empirical ones, with
a large worker contribution, a large residual variance and limited
firm and sorting contributions. Then, we show the variance decomposition
obtained with the OLS estimates of the fixed effects, and the debiased
variance components (\cite{AndrewsGillSchankUpward2008}).\footnote{We do not consider other bias correction methods (\cite{Gaure2014},
\cite{KlineSaggioSoelvsten2020}, \cite{AzkarateAskasuaZerecero2022})
because the model is homoscedastic.} The row labelled ``ridge'' shows the variance decomposition that
is obtained using the ridge estimator. We display the shares of $\V(W\widehat{\mu})$,
$\V(F\widehat{\phi})$, $2\Cov(W\widehat{\mu},F\widehat{\phi})$.
The sixth column of Table \ref{tab:sim_regression} shows the out-of-sample
MSE. Lastly, we report the ridge regularization parameters obtained
by cross-validation. For cross-validation, we simulate a test sample
as follows. We use the same nodes and communities, but draw a new
$d_{ij}$ for each couple of worker-firm nodes. Then, we draw a new
residual $u_{ijs}$. Again, we keep the largest connected component
for prediction.

The OLS estimation is always strongly biased, but the bias correction
works very well as long as the graph is not too sparse (i.e. for $c=2$
but not for $c=1$). The variance decomposition using the ridge estimator
is a lot less affected by sparsity than OLS. There is a tendency of
the ridge estimator to underestimate the shares of $\V(W\widehat{\mu})$,
$\V(F\widehat{\phi})$, $2\Cov(W\widehat{\mu},F\widehat{\phi})$ and
overestimate the ``residual share''. We report the ridge regularization
parameters as a fraction of the average worker and firm degrees.

\begin{figure}
\begin{centering}
\subfloat[$c=1$: PDF of $\mu_{i}$]{\begin{centering}
\includegraphics[scale=0.35]{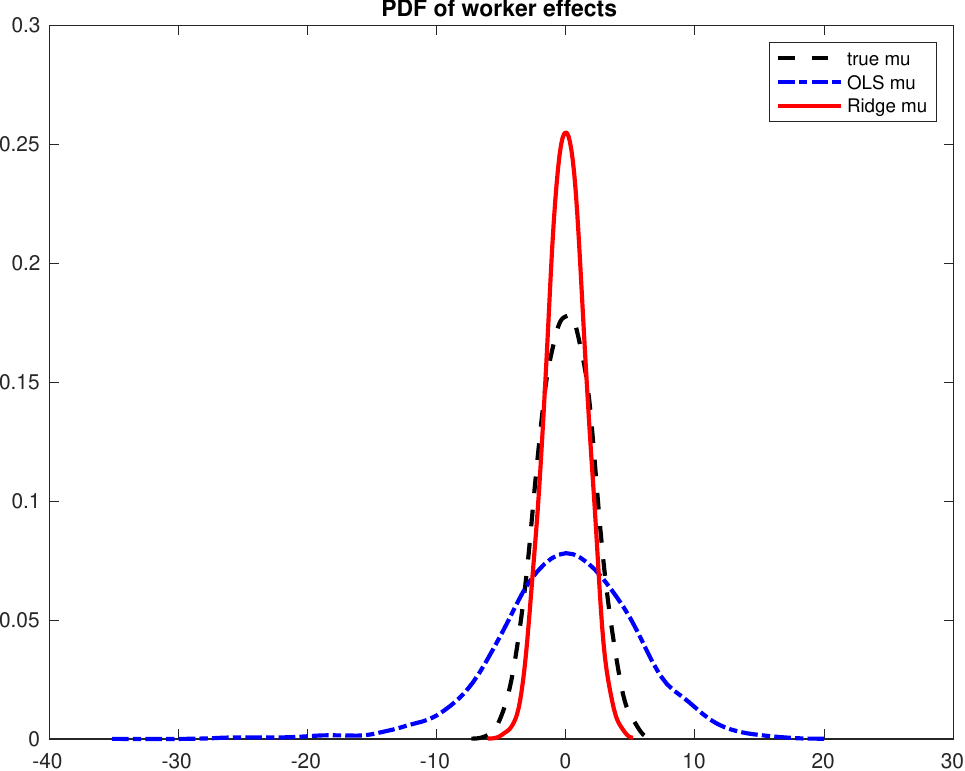} 
\par\end{centering}
}\subfloat[$c=1$: scatterplot of $\mu_{i}$]{\begin{centering}
\includegraphics[scale=0.35]{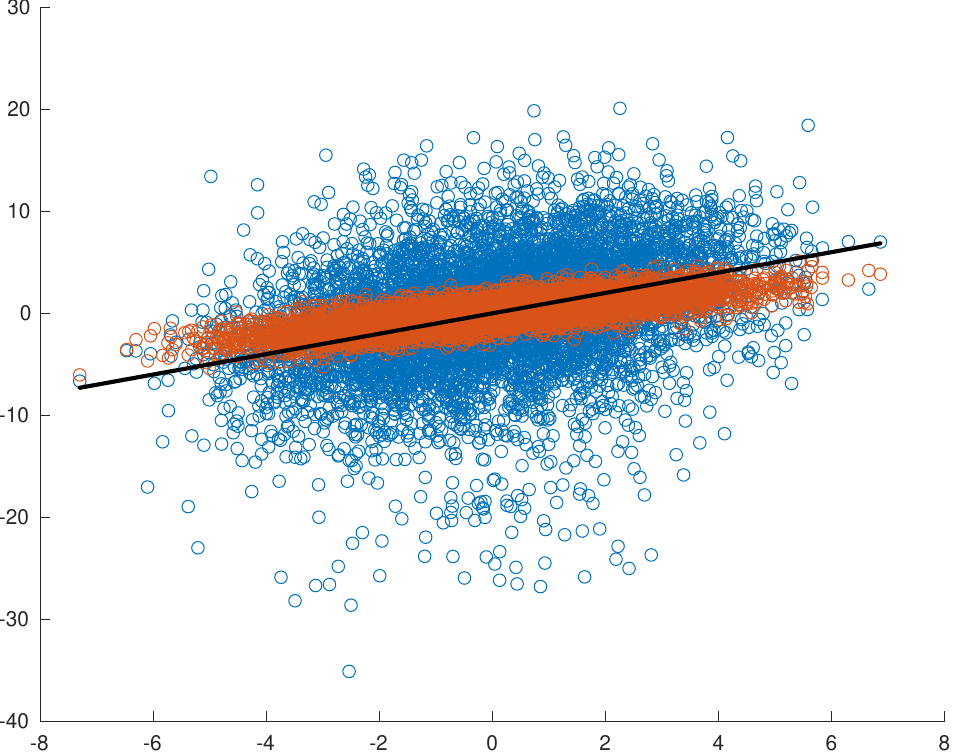} 
\par\end{centering}
}
\par\end{centering}
\begin{centering}
\subfloat[$c=2$: PDF of $\mu_{i}$]{\begin{centering}
\includegraphics[scale=0.35]{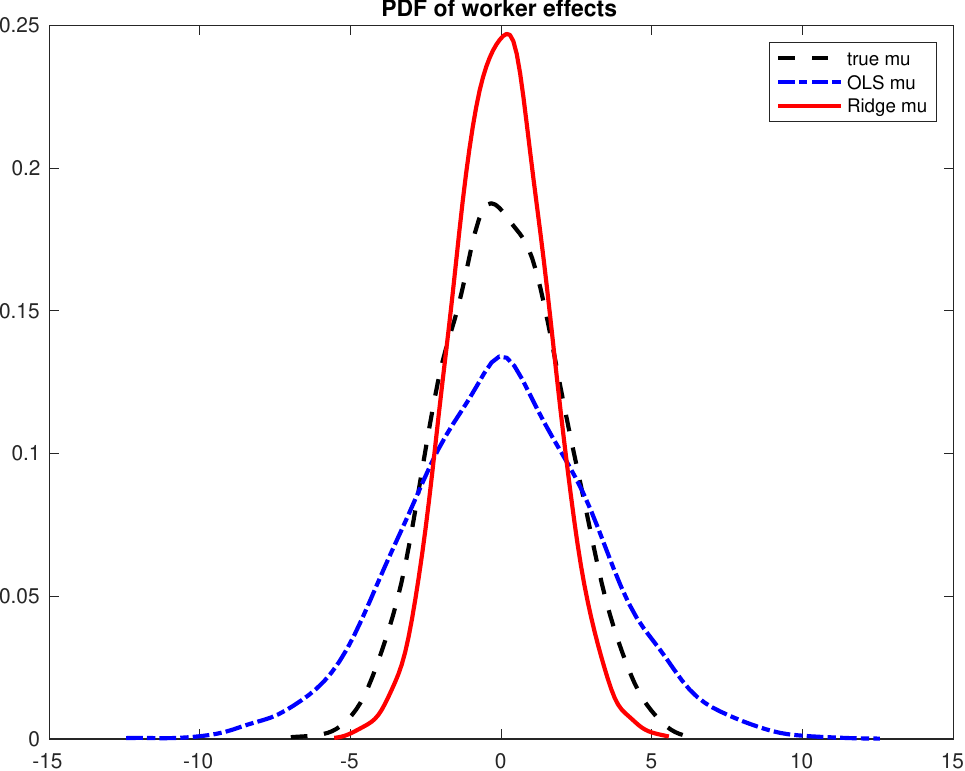} 
\par\end{centering}
}\subfloat[$c=2$: scatterplot of $\mu_{i}$]{\begin{centering}
\includegraphics[scale=0.35]{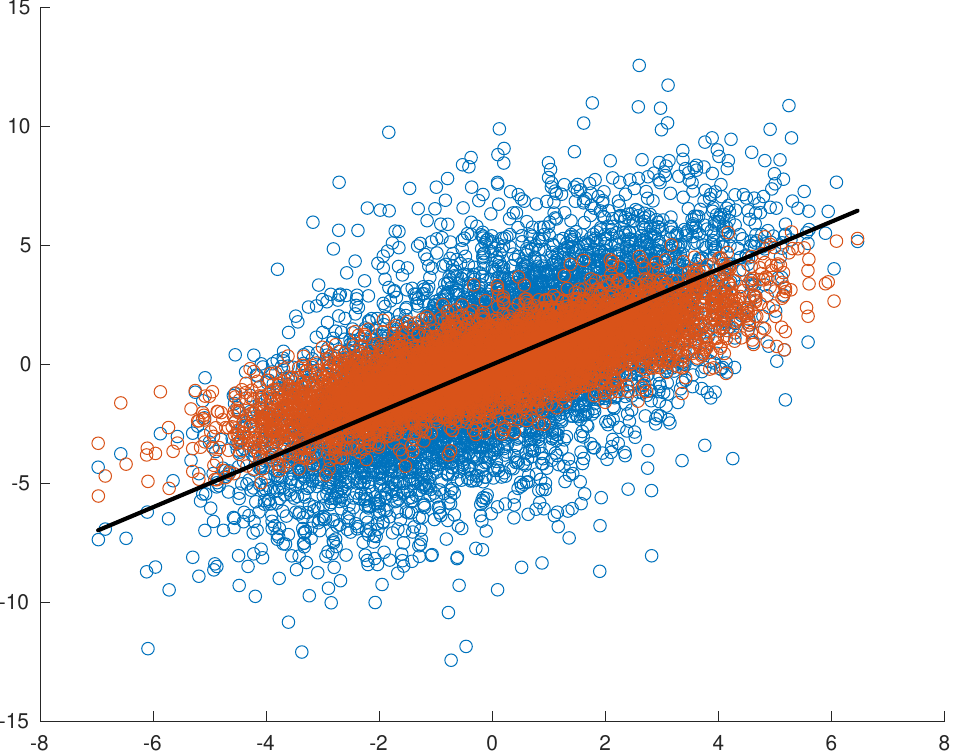} 
\par\end{centering}
}
\par\end{centering}
\caption{Worker fixed effect distributions: true, OLS and ridge}
\label{fig:mu} 
\end{figure}

\begin{figure}
\begin{centering}
\subfloat[$c=1$: PDF of $\phi_{i}$]{\begin{centering}
\includegraphics[scale=0.35]{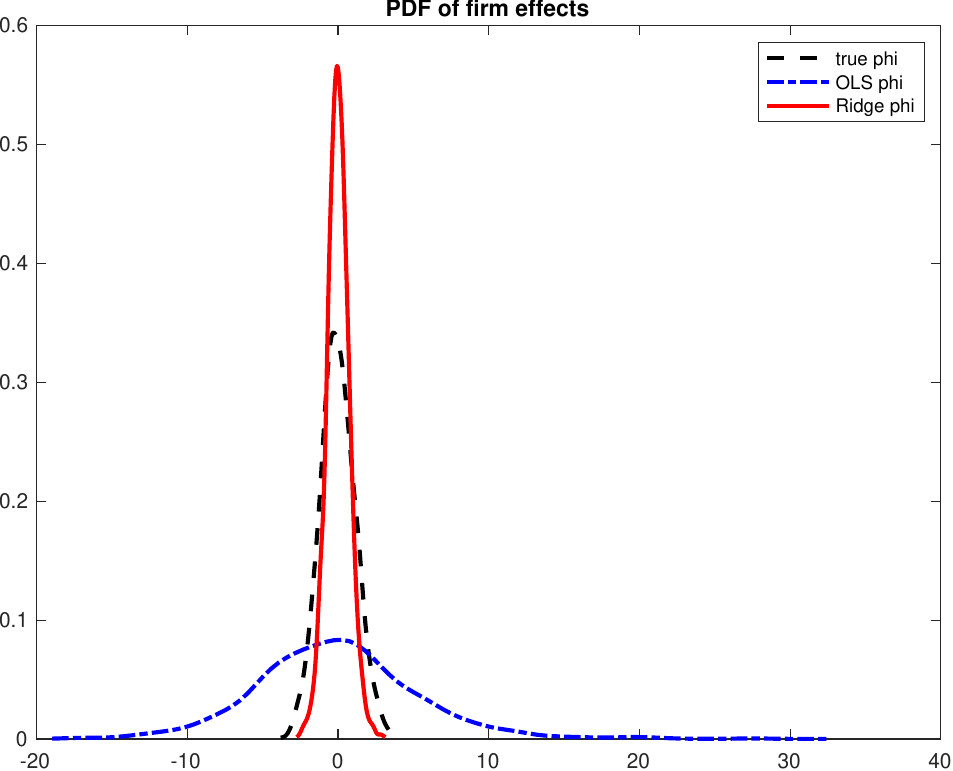} 
\par\end{centering}
}\subfloat[$c=1$: scatterplot of $\phi_{i}$]{\begin{centering}
\includegraphics[scale=0.35]{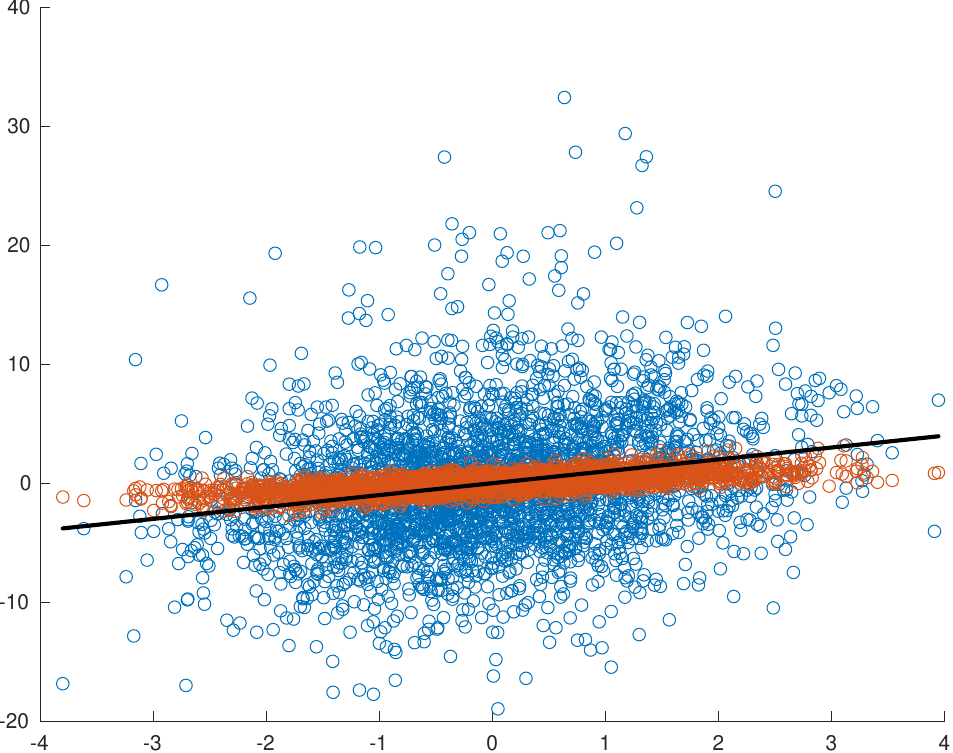} 
\par\end{centering}
}
\par\end{centering}
\begin{centering}
\subfloat[$c=2$: PDF of $\phi_{i}$]{\begin{centering}
\includegraphics[scale=0.35]{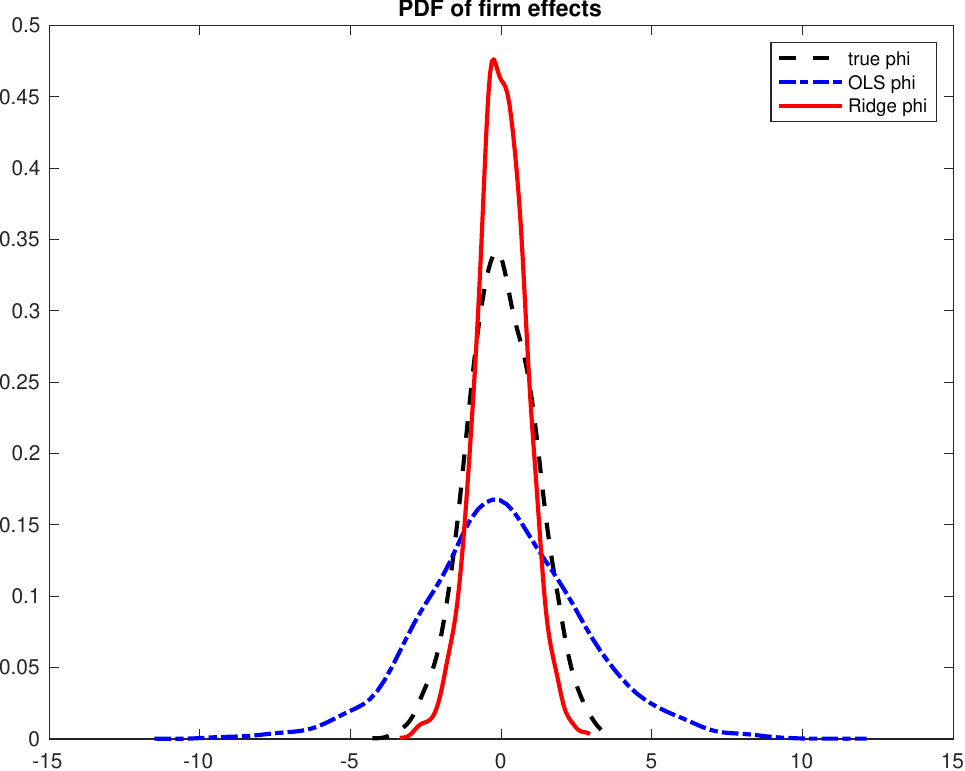} 
\par\end{centering}
}\subfloat[$c=2$: scatterplot of $\phi_{i}$]{\begin{centering}
\includegraphics[scale=0.35]{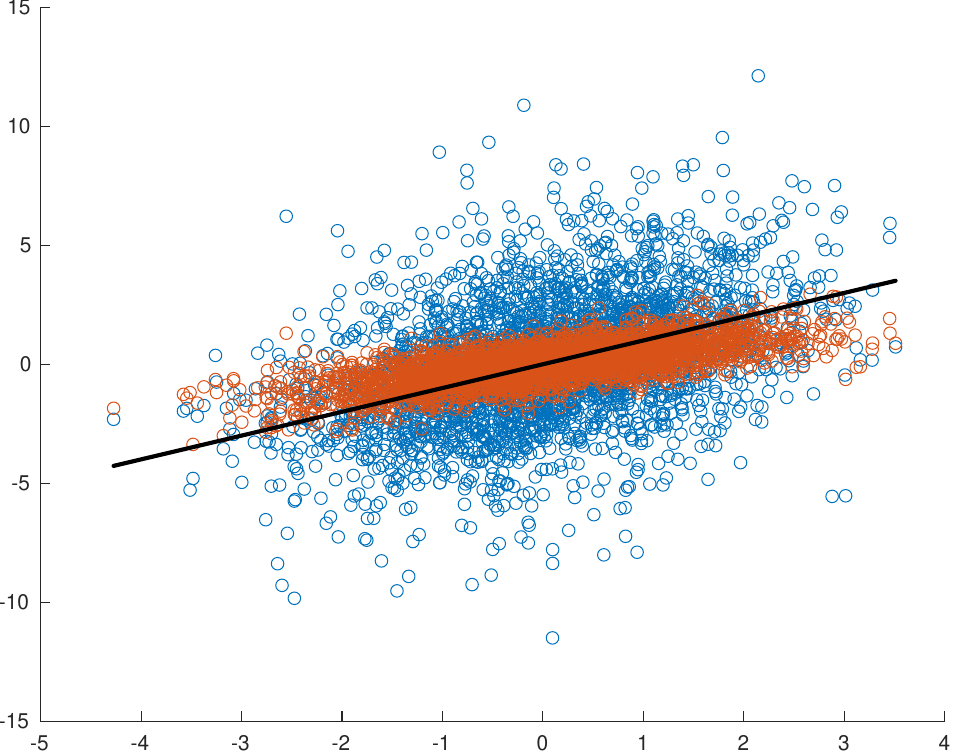} 
\par\end{centering}
}
\par\end{centering}
\caption{Firm fixed effect distributions: true, OLS and ridge}
\label{fig:phi} 
\end{figure}

Next, we turn to the distribution of fixed effects. Figures \ref{fig:mu}
and \ref{fig:phi} show the densities of true and estimated fixed
effects and their scatterplots. The scatterplots show the estimated
versus the true values. The black line is the 45 degree line for a
perfect estimation. We see that OLS estimates are considerably more
dispersed than the true distribution, while the ridge estimator is
only slightly more concentrated. The scatterplots show evidence of
a pattern of the bias for ridge. Large values of the true fixed effects
are associated with a stronger negative bias, and vice versa. 

\subsection{Real data application}

We then propose an application with French matched employer-employee
data (DADS Panel, 1995-2001) similar to the data used in \cite{AbowdKramarzMargolis1999}.
We keep only non agricultural, salaried, private sector employees
working full time, younger than 55. We keep all employment spells,
including those recorded for a fraction of a year. However, we drop
matches with firms with fewer than 50 worker/wage observations and
workers with less that 5 recorded spells. The wage concept is earnings
per day, and we trim wages below the 0.1 percentile and above the
99.9th one. We regress log wages on job tenure, age and age squared,
and use the residuals for our analysis.

After selecting the biggest connected component (out of 2663 ones),
we end up with a network with $n=94491$ workers, $p=3408$ firms
and $N=450125$ wage observations. The average worker degree is $\overline{d}_{w}=4.7637$
and the average firm degree is $\overline{d}_{f}=132.08$. The average
number of wage records per match is 3.5903. The network is very sparse
with 127457 matches, which is 0.03958\% of the $np$ potential ones.

\begin{table}
\setlength{\tabcolsep}{5pt}\caption{Actual data results}

\begin{centering}
\begin{tabular}{lcccc}
\hline 
 & \multicolumn{4}{c}{share of wage variance}\tabularnewline
 & worker  & firm  & 2{*}covariance  & residual \tabularnewline
\hline 
\multicolumn{5}{l}{Largest connected component}\tabularnewline
\hline 
OLS  & 0.9584  & 0.2332  & -0.3163  & 0.1247 \tabularnewline
OLS debiased  & 0.8841  & 0.1912  & -0.2347  & 0.1594 \tabularnewline
ridge  & 0.6005  & 0.0846  & 0.1067  & 0.2083 \tabularnewline
 &  &  &  & \tabularnewline
\multicolumn{5}{l}{Largest strongly connected component }\tabularnewline
\hline 
OLS  & 0.8626  & 0.1008  & -0.0978  & 0.1324 \tabularnewline
OLS debiased  & 0.8153  & 0.0845  & -0.067  & 0.1655 \tabularnewline
Ridge  & 0.5611  & 0.1257  & 0.1001  & 0.2131 \tabularnewline
KSS  & 0.7496  & 0.0655  & 0.0088  & 0.1762 \tabularnewline
\hline 
\end{tabular}
\par\end{centering}
\label{tab:true_regression} 
\end{table}

The log wage residual has a variance of 0.2045 which can be decomposed
as reported in Table \ref{tab:true_regression}. The variance decomposition
obtained using the OLS estimator of the worker and firm effects yields
a rather large negative contribution of the fixed effects covariance
to wages. Biases on plugin variance contributions (assuming homoscedasticity)
are estimated to be small.

Then we show the variance decomposition obtained from ridge regression.\footnote{We show the contributions to the total variance of the worker effect,
the firm effect, 2 times the covariance of worker and firm effects,
and a residual contribution that is the sum of the residual variance
and the covariances between residuals and fixed effects. Contrary
to the OLS estimator, the ridge estimator does not produces orthogonal
residuals.} We estimate the ridge model on the years 1995-2000 and keep the last
year 2001 for cross-validation. 
 By minimizing out-of-sample error, we tune the ridge parameters $\lambda_{w}$
and $\lambda_{f}$ to be 0.055 and 0.060 times the worker and firm
degrees.\footnote{The previous $\log(n+p)$ lower bound is a sufficient condition for our concentration arguments; we do not claim cross-validation yields penalties with this scaling. Instead, we report selected penalties as fractions of average degrees and assess their performance in simulations.}
We now estimate a positive covariance between worker and
firm effects, and with more reasonable contributions of the worker
effects and of the regression residuals.

Next, because we worry about residual heteroscedasticity and autocorrelation,
we aim to compare ridge to the Leave-One-Out (LOO) bias correction
method of \cite{KlineSaggioSoelvsten2020}. The LOO method requires
firms to remain connected after taking one observation out. We use
the strongest version leaving one match out, which asymptotically
removes second-order biases when residuals are heteroscedastic or autoregressive
within matches. OLS and the homoscedastic bias correction are still
very close and deliver a negative covariance term, but of smaller
magnitude, proof that the LOO sample is less sparse. However, ridge
seems little affected by the choice of the sample. The LOO correction
delivers a covariance contribution that is very close to zero, thus
closer to ridge. These estimates are also very close to those in \cite{BabetGodechotPalladino2022}
and \cite{AzkarateAskasuaZerecero2022}, who use exhaustive DADS
data after 2002 (52 million observations) instead of the Panel before
2001, and hourly wages instead of daily wages. Babet et al. use the
LOO correction and Azkarate-Zerecero develop a related parametric bootstrap
correction.

\begin{figure}
\begin{centering}
\includegraphics[clip,scale=0.35]{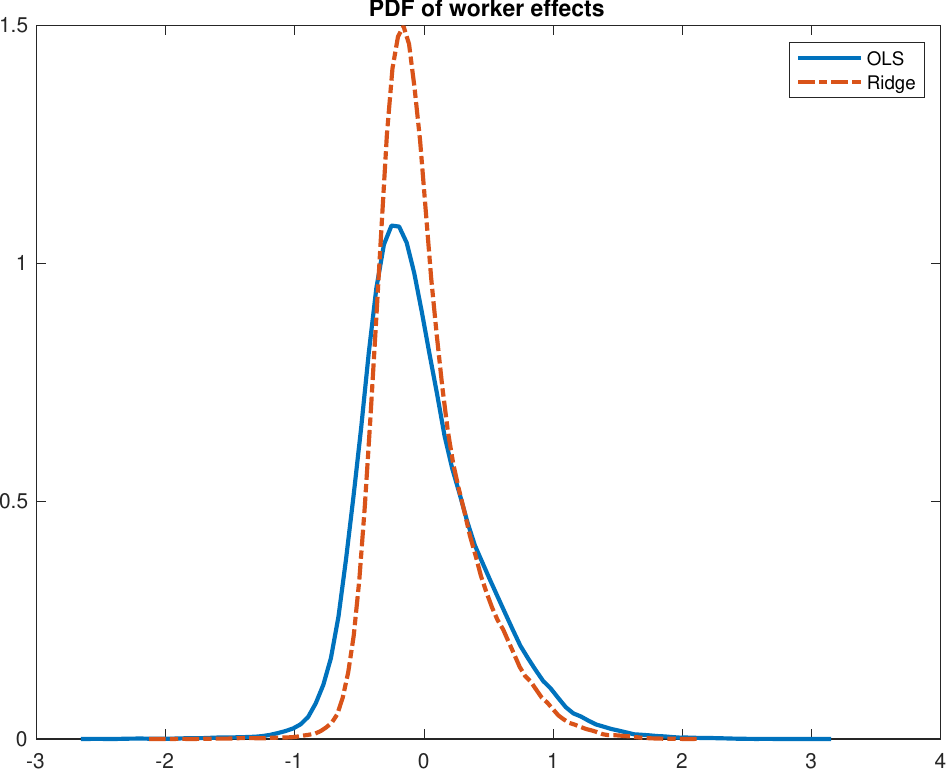}
\includegraphics[scale=0.35]{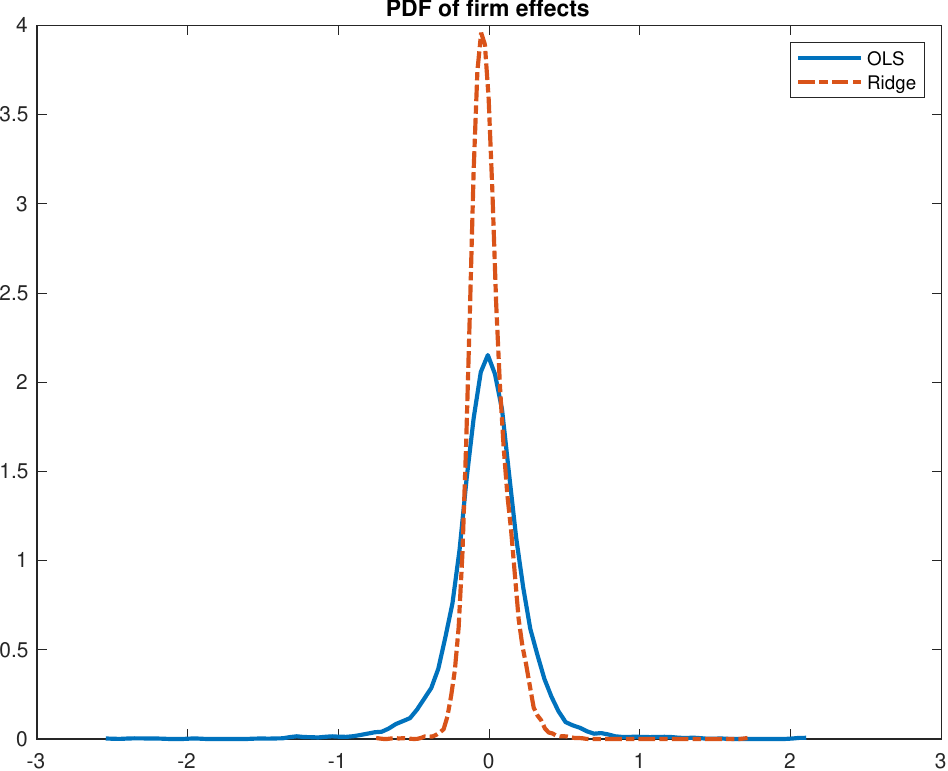} 
\par\end{centering}
\caption{Distributions of estimated fixed effects (OLS and ridge)}
\label{fig:true distributions} 
\end{figure}

\begin{figure}
\begin{centering}
\includegraphics[scale=0.35]{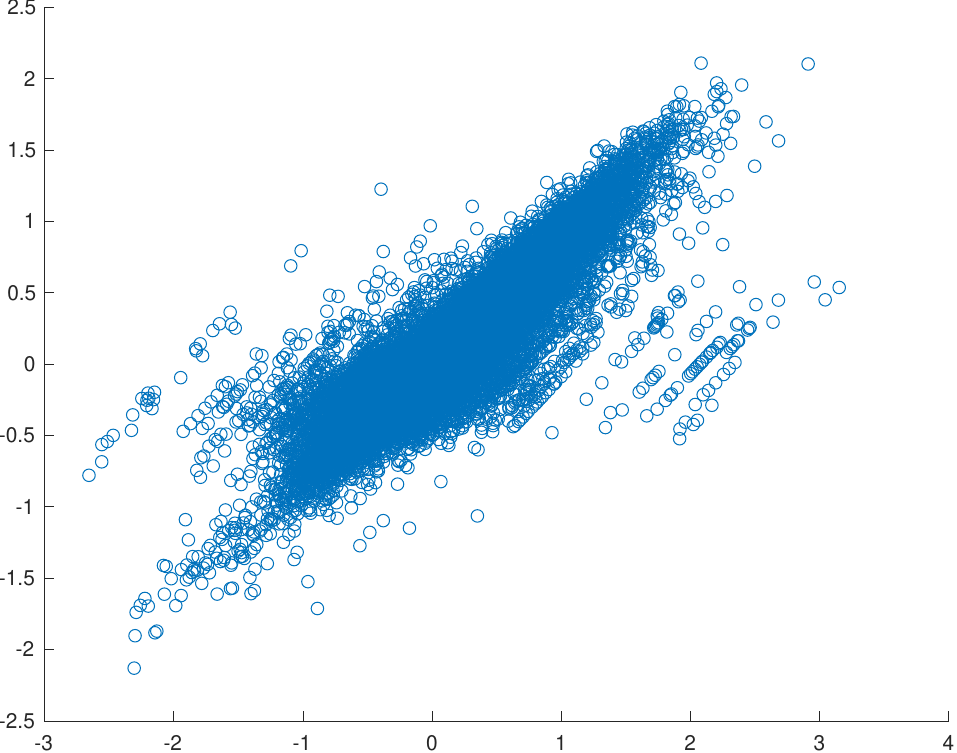}
\includegraphics[scale=0.35]{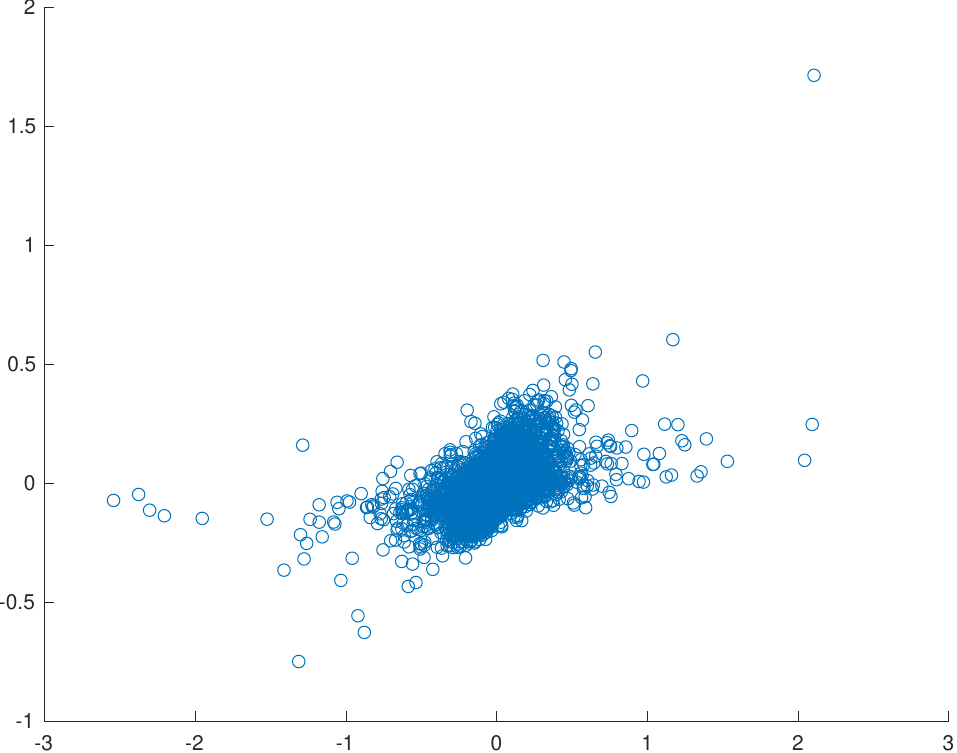} 
\par\end{centering}
\caption{Ridge estimated fixed effects versus OLS (left figure: workers;
right figure: firms)}
\label{fig:ridge vs OLS} 
\end{figure}

Next, we show in Figure \ref{fig:true distributions} the distributions
of the worker and firm fixed effects. The ridge estimator is a lot
more concentrated than the OLS estimator. Figure \ref{fig:ridge vs OLS}
shows each ridge estimate as a function of the corresponding OLS one.
Estimates of firm fixed effects tend to disagree more than estimates
of worker effects.

\section{Conclusion}

In this paper, we have studied ridge penalization as an alternative
to OLS for the estimation of the fixed effects in a two-way fixed
effect regression model like a log wage equation for matched employer-employee
data. We developed a Degree Corrected Stochastic Block model for the
creation of the network. The outcome variable was generated given
the network in a second stage. We derived the asymptotic properties
of the estimator of the fixed effects. We were able to find deterministic
equivalents of the bias and the variance of the estimator. In the
future, further study of these deterministic equivalents will allow
us to characterize the optimal ridge parameter.

\begin{appendix}

\section*{Proofs}\label{app:Proofs}

\subsection{Proof of Lemma \ref{lem:Laplacian eig}}

For any $x=(x_{j})\in\mathbb{R}^{p}$, the expression for $x^{\top}L_{f}x$
equals 
\[
\sum_{j}x_{j}^{2}-\sum_{j,j'}\frac{x_{j}x_{j'}}{\sqrt{d_{\cdot j}d_{\cdot j'}}}\sum_{i}\frac{d_{ij}d_{ij'}}{d_{i\cdot}}=\frac{1}{2}\sum_{i}\frac{1}{d_{i\cdot}}\sum_{j,j'}\left(\frac{x_{j}}{\sqrt{d_{\cdot j}}}-\frac{x_{j'}}{\sqrt{d_{\cdot j'}}}\right)^{2}d_{ij}d_{ij'}\ge0.
\]
Hence $L_{f}$ has nonnegative eigenvalues, and $\lambda_{1}=0$ as
$L_{f}x=\mathbf{0}_{p}$ for $x=\left(\sqrt{d_{\cdot j}}\right)_{j}$.

Then,  
\[
x^{\top}L_{f}x\ge0\Rightarrow1\ge\frac{x^{\top}A_{f}x}{x^{\top}x}.
\]
This Rayleigh quotient gives 1 as an upper bound for $\alpha_{1}$,
and $\alpha_{1}=1$ is indeed the largest eigenvalue of $A_{f}$ with
eigenvector $v_{1}=\left(\sqrt{d_{\cdot j}}\right)$. Finally,
$A_{f}=E^{\top}E$ is obviously positive semidefinite. Its eigenvalues
are nonnegative.

\subsection{Proof of Lemma \ref{lem:max-eig-E'E}}

We have for all $x\in\mathbb{R}^{p}$, 
\begin{multline*}
x^{\top}L_{f,\lambda}x=\frac{1}{2}\sum_{i}\frac{1}{d_{i\cdot}+\lambda_{w}}\sum_{j,j'}\left(\frac{x_{j}}{\sqrt{d_{\cdot j}+\lambda_{f}}}-\frac{x_{j'}}{\sqrt{d_{\cdot j'}+\lambda_{f}}}\right)^{2}d_{ij}d_{ij'}\\
+\sum_{j}\frac{x_{j}^{2}}{d_{\cdot j}+\lambda_{f}}\left(\lambda_{f}+\lambda_{w}\sum_{i}\frac{d_{ij}}{d_{i\cdot}+\lambda_{w}}\right)%
\ge\lambda_{f}\sum_{j}\frac{x_{j}^{2}}{d_{\cdot j}+\lambda_{f}}\ge\frac{\lambda_{f}x^{\top}x}{\max_{j}d_{\cdot j}+\lambda_{f}}.
\end{multline*}

To obtain the second equality, write 
\begin{multline*}
\sum_{i}\frac{1}{d_{i\cdot}+\lambda_{w}}\sum_{j,j'}\left(\frac{x_{j}}{\sqrt{d_{\cdot j}+\lambda_{f}}}-\frac{x_{j'}}{\sqrt{d_{\cdot j'}+\lambda_{f}}}\right)^{2}d_{ij}d_{ij'}\\
=2\sum_{i}\frac{d_{i\cdot}}{d_{i\cdot}+\lambda_{w}}\sum_{j}\frac{x_{j}^{2}}{d_{\cdot j}+\lambda_{f}}d_{ij}-2\sum_{i}\frac{1}{d_{i\cdot}+\lambda_{w}}\sum_{j,j'}\frac{x_{j}}{\sqrt{d_{\cdot j}+\lambda_{f}}}\frac{x_{j'}}{\sqrt{d_{\cdot j'}+\lambda_{f}}}d_{ij}d_{ij'}.
\end{multline*}
Then, $\sum_{i}\frac{d_{i\cdot}}{d_{i\cdot}+\lambda_{w}}\sum_{j}\frac{x_{j}^{2}}{d_{\cdot j}+\lambda_{f}}d_{ij}$
can be written as 
\begin{align*}
\sum_{j}\frac{x_{j}^{2}}{d_{\cdot j}+\lambda_{f}}\sum_{i}\frac{d_{i\cdot}}{d_{i\cdot}+\lambda_{w}}d_{ij}%
  =\sum_{j}x_{j}^{2}-\sum_{j}\frac{x_{j}^{2}}{d_{\cdot j}+\lambda_{f}}\left(\lambda_{f}+\lambda_{w}\sum_{i}\frac{d_{ij}}{d_{i\cdot}+\lambda_{w}}\right).
\end{align*}

\subsection{Proof of Lemma \ref{lem:max-eig-cal(E)'cal(E)}}

For all $x\in\mathbb{R}^{p}$, the inner product $x^{\top}\mathfrak{L}_{f,\lambda}x$
is 
\begin{multline*}
\sum_{j}x_{j}^{2}-\sum_{j,j'}\tfrac{x_{j}}{\sqrt{\theta_{j}C(\cdot,\ell_{j})+\lambda_{f}}}\tfrac{x_{j'}}{\sqrt{\theta_{j'}C(\cdot,\ell_{j'})+\lambda_{f}}}\sum_{k}\tfrac{1}{n_{k}}\tfrac{\theta_{j}C(k,\ell_{j})\theta_{j'}C(k,\ell_{j'})}{C(k,\cdot)/n_{k}+\lambda_{w}}\\
=\tfrac{1}{2}\sum_{k}\tfrac{1/n_{k}}{C(k,\cdot)/n_{k}+\lambda_{w}}\sum_{j,j'}\left(\tfrac{x_{j}}{\sqrt{\theta_{j}C(\cdot,\ell_{j})+\lambda_{f}}}-\tfrac{x_{j'}}{\sqrt{\theta_{j'}C(\cdot,\ell_{j'})+\lambda_{f}}}\right)^{2}\theta_{j}C(k,\ell_{j})\theta_{j'}C(k,\ell_{j'})\\
+\sum_{j}\tfrac{x_{j}^{2}}{\theta_{j}C(\cdot,\ell_{j})+\lambda_{f}}\left(\lambda_{f}+\sum_{k}\tfrac{\lambda_{w}\theta_{j}C(k,\ell_{j})}{C(k,\cdot)/n_{k}+\lambda_{w}}\right)
\ge\lambda_{f}\sum_{j}\tfrac{x_{j}^{2}}{\theta_{j}C(\cdot,\ell_{j})+\lambda_{f}}\ge x^{\top}x\tfrac{\lambda_{f}}{\max_{j}\theta_{j}C(\cdot,\ell_{j})+\lambda_{f}},
\end{multline*}
as
\begin{multline*}
\tfrac{1}{2}\sum_{k}\tfrac{1/n_{k}}{C(k,\cdot)/n_{k}+\lambda_{w}}\sum_{j,j'}\left(\tfrac{x_{j}}{\sqrt{\theta_{j}C(\cdot,\ell_{j})+\lambda_{f}}}-\tfrac{x_{j'}}{\sqrt{\theta_{j'}C(\cdot,\ell_{j'})+\lambda_{f}}}\right)^{2}\theta_{j}C(k,\ell_{j})\theta_{j'}C(k,\ell_{j'})\\
=\sum_{k}\tfrac{\tfrac{1}{n_{k}}C(k,\cdot)}{\tfrac{1}{n_{k}}C(k,\cdot)+\lambda_{w}}\sum_{j}\tfrac{x_{j}^{2}\theta_{j}C(k,\ell_{j})}{\theta_{j}C(\cdot,\ell_{j})+\lambda_{f}}-\sum_{j,j'}\tfrac{x_{j}}{\sqrt{\theta_{j}C(\cdot,\ell_{j})+\lambda_{f}}}\tfrac{x_{j'}}{\sqrt{\theta_{j'}C(\cdot,\ell_{j'})+\lambda_{f}}}\sum_{k}\tfrac{\tfrac{1}{n_{k}}\theta_{j}C(k,\ell_{j})\theta_{j'}C(k,\ell_{j'})}{\frac{1}{n_{k}}C(k,\cdot)+\lambda_{w}}
\end{multline*}
since $\sum_{j}\theta_{j}C(k,\ell_{j})=\sum_{\ell}\sum_{j}\theta_{j}\delta_{\ell_{j}\ell}C(k,\ell)=C(k,\cdot)$, and 
\[
\sum_{k}\tfrac{\tfrac{1}{n_{k}}C(k,\cdot)}{\tfrac{1}{n_{k}}C(k,\cdot)+\lambda_{w}}\sum_{j}\tfrac{x_{j}^{2}\theta_{j}C(k,\ell_{j})}{\theta_{j}C(\cdot,\ell_{j})+\lambda_{f}}=\sum_{j}x_{j}^{2}-\sum_{j}\tfrac{x_{j}^{2}}{\theta_{j}C(\cdot,\ell_{j})+\lambda_{f}}\left(\lambda_{f}+\sum_{k}\tfrac{\lambda_{w}\theta_{j}C(k,\ell_{j})}{\tfrac{1}{n_{k}}C(k,\cdot)+\lambda_{w}}\right).
\]

\subsection{Proof of Theorem \ref{thm:ConcentrationLaplacian}}

Following \cite{ChungRadcliffe2011} (Theorem 2) and \cite{QinRohe2013},
first write 
\begin{align*}
\left\Vert E_{\lambda}-\mathfrak{E}_{\lambda}\right\Vert  & =\left\Vert D_{w,\lambda}^{-1/2}BD_{f,\lambda}^{-1/2}-\mathfrak{D}_{w,\lambda}^{-1/2}\mathfrak{B}\mathfrak{D}_{f,\lambda}^{-1/2}\right\Vert \\
 & =\left\Vert \mathfrak{D}_{w,\lambda}^{-1/2}(B-\mathfrak{B})\mathfrak{D}_{f,\lambda}^{-1/2}+D_{w,\lambda}^{-1/2}BD_{f,\lambda}^{-1/2}-\mathfrak{D}_{w,\lambda}^{-1/2}B\mathfrak{D}_{f,\lambda}^{-1/2}\right\Vert \\
 & \le\left\Vert \mathfrak{D}_{w,\lambda}^{-1/2}(B-\mathfrak{B})\mathfrak{D}_{f,\lambda}^{-1/2}\right\Vert +\left\Vert D_{w,\lambda}^{-1/2}BD_{f,\lambda}^{-1/2}-\mathfrak{D}_{w,\lambda}^{-1/2}B\mathfrak{D}_{f,\lambda}^{-1/2}\right\Vert .
\end{align*}
We then bound each term separately.

\subsubsection*{Bound for $\left\Vert \mathfrak{D}_{w,\lambda}^{-1/2}(B-\mathfrak{B})\mathfrak{D}_{f,\lambda}^{-1/2}\right\Vert $}

Let $X:=\mathfrak{D}_{w,\lambda}^{-1/2}(B-\mathfrak{B})\mathfrak{D}_{f,\lambda}^{-1/2}$.
Write the matrix $X$ as the sum $X=\sum_{i,j}X_{ij}$ with 
\[
X_{ij}=\mathfrak{D}_{w,\lambda}^{-1/2}(d_{ij}-p_{ij})\Delta_{ij}\mathfrak{D}_{f,\lambda}^{-1/2},
\]
where $\Delta_{ij}$ is the $n\times p$ matrix of zeros except in
position $(i,j)$ where there is a one. The matrices $X_{ij}$ are
independent, mean 0 and uniformly bounded. Specifically, we have 
\[
X_{ij}^{\top}X_{ij}=(d_{ij}-p_{ij})^{2}\Delta_{jj}\left(\frac{1}{\left(C(k_{i},\cdot)/n_{k_{i}}+\lambda_{w}\right)\left(\theta_{j}C(\cdot,\ell_{j})+\lambda_{f}\right)}\right)
\]
($\Delta_{ij}^{\top}\Delta_{ij}=\Delta_{jj}$) and $\left\Vert X_{ij}\right\Vert =\sqrt{\eigmax\left(X_{ij}^{\top}X_{ij}\right)}\le\sqrt{M_{w}M_{f}}\le\max(M_{f},M_{w}):=M,$
as $(d_{ij}-p_{ij})^{2}\in\left\{ p_{ij}^{2},(1-p_{ij})^{2}\right\} \le1$,
and denoting $M_{w}=\left(\min_{k}\frac{1}{n_{k}}C(k,\cdot)+\lambda_{w}\right)^{-1}$
and $M_{f}=\left(\min_{j}\theta_{j}C(\cdot,\ell_{j})+\lambda_{f}\right)^{-1}$.
Moreover, let 
\[
v(X)=\max\left\{ \left\Vert \E(X^{\top}X)\right\Vert ,\left\Vert \E(XX^{\top})\right\Vert \right\} =\max\left\{ \left\Vert \sum\nolimits_{ij}\E(X_{ij}^{\top}X_{ij})\right\Vert ,\left\Vert \sum\nolimits_{ij}\E(X_{ij}X_{ij}^{\top})\right\Vert \right\} .
\]
Consider $\sum_{ij}\E(X_{ij}^{\top}X_{ij})=\text{diag}\left(\sum_{i}\frac{p_{ij}(1-p_{ij})}{\left(C(k_{i},\cdot)/n_{k_{i}}+\lambda_{w}\right)\left(\theta_{j}C(\cdot,\ell_{j})+\lambda_{f}\right)}\right)_{j}.$
We have 
\begin{multline*}
\left\Vert \sum_{ij}\E(X_{ij}^{\top}X_{ij})\right\Vert \le\max_{j}\sum_{k}\frac{\theta_{j}C(k,\ell_{j})}{\left(C(k,\cdot)/n_{k}+\lambda_{w}\right)\left(\theta_{j}C(\cdot,\ell_{j})+\lambda_{f}\right)}\\
\le\max_{j}\frac{\theta_{j}C(\cdot,\ell_{j})}{\left(\min_{k}C(k,\cdot)/n_{k}+\lambda_{w}\right)\left(\theta_{j}C(\cdot,\ell_{j})+\lambda_{f}\right)}\le M_{w}.
\end{multline*}
Similarly, $\left\Vert \sum_{ij}\E(X_{ij}X_{ij}^{\top})\right\Vert \le M_{f}.$
We can then apply the matrix Bernstein inequality. For all $t\ge0$,
\begin{align*}
\Pr\left\{ \left\Vert X\right\Vert \ge t\right\}  & \le(n+p)\exp\left(\frac{-t^{2}/2}{v(X)+Mt/3}\right)\le(n+p)\exp\left(\frac{-t^{2}/2}{M+Mt/3}\right).
\end{align*}
We are looking for a bounded $t$, say $0\le t\le1$, such that $\Pr\left\{ \left\Vert X\right\Vert \ge t\right\} \le\epsilon$.
It suffices that 
\[
(n+p)\exp\left(\frac{-t^{2}/2}{M+Mt/3}\right)\le(n+p)\exp\left(\frac{-t^{2}}{3M}\right)=\epsilon,
\]
which holds for $t=\sqrt{3M\ln\frac{n+p}{\epsilon}}.$ Moreover, for
$t\le1$ to hold, we also need $M\le\frac{1}{3\ln\frac{n+p}{\epsilon}}.$

\subsubsection*{Bound for $\left\Vert D_{w,\lambda}^{-1/2}BD_{f,\lambda}^{-1/2}-\mathfrak{D}_{w,\lambda}^{-1/2}B\mathfrak{D}_{f,\lambda}^{-1/2}\right\Vert $}

With $E_{\lambda}=D_{w,\lambda}^{-1/2}BD_{f,\lambda}^{-1/2}$, and
 since $\left\Vert E_{\lambda}\right\Vert <1$ by Lemma \ref{lem:max-eig-E'E},
we have 
\begin{align*}
\left\Vert E_{\lambda}-\mathfrak{D}_{w,\lambda}^{-1/2}B\mathfrak{D}_{f,\lambda}^{-1/2}\right\Vert  & =\left\Vert E_{\lambda}\left(I_{p}-D_{f,\lambda}^{1/2}\mathfrak{D}_{f,\lambda}^{-1/2}\right)+\left(I_{n}-\mathfrak{D}_{w,\lambda}^{-1/2}D_{w,\lambda}^{1/2}\right)E_{\lambda}D_{f,\lambda}^{1/2}\mathfrak{D}_{f,\lambda}^{-1/2}\right\Vert \\
 & \le\left\Vert I_{p}-D_{f,\lambda}^{1/2}\mathfrak{D}_{f,\lambda}^{-1/2}\right\Vert +\left\Vert I_{n}-\mathfrak{D}_{w,\lambda}^{-1/2}D_{w,\lambda}^{1/2}\right\Vert 
\end{align*}
Consider first the term $\left\Vert D_{f,\lambda}^{1/2}\mathfrak{D}_{f,\lambda}^{-1/2}\right\Vert =\max_{j}\sqrt{\frac{d_{\cdot j}+\lambda_{f}}{\theta_{j}C(\cdot,\ell_{j})+\lambda_{f}}}.$
We have 
\[
\sqrt{\frac{d_{\cdot j}+\lambda_{f}}{\theta_{j}C(\cdot,\ell_{j})+\lambda_{f}}}\ge\sqrt{1+t}\Leftrightarrow\frac{d_{\cdot j}-\theta_{j}C(\cdot,\ell_{j})}{\theta_{j}C(\cdot,\ell_{j})+\lambda_{f}}\ge t.
\]
Let $X_{j}=\frac{d_{\cdot j}-\theta_{j}C(\cdot,\ell_{j})}{\theta_{j}C(\cdot,\ell_{j})+\lambda_{f}}=\sum_{i}\frac{d_{ij}-p_{ij}}{\theta_{j}C(\cdot,\ell_{j})+\lambda_{f}}:=\sum_{i}X_{ij}$.
A Bernstein inequality can be applied to each $X_{j}$. First, $\left|X_{ij}\right|\le M_{f}$
and 
\[
v_{j}=\sum_{i}\E\left(X_{ij}^{2}\right)=\left(\tfrac{1}{\theta_{j}C(\cdot,\ell_{j})+\lambda_{f}}\right)^{2}\sum_{i}p_{ij}(1-p_{ij})\le M_{f}^{2}\sum_{i}p_{ij}=M_{f}^{2}\theta_{j}C(\cdot,\ell_{j})\le M_{f}.
\]
Hence, 
\begin{align*}
\Pr\left\{ \frac{d_{\cdot j}-\theta_{j}C(\cdot,\ell_{j})}{\theta_{j}C(\cdot,\ell_{j})+\lambda_{f}}\ge t\right\} \le\exp\left(\frac{-t^{2}/2}{v_{j}+M_{f}t/3}\right)\le\exp\left(\frac{-t^{2}/2}{M_{f}(1+t/3)}\right)%
\le\exp\left(\frac{-t^{2}}{3M}\right)
\end{align*}
if $0\le t\le1$. With the same $t$ as before, we finally obtain
$\Pr\left\{ \frac{d_{\cdot j}-\theta_{j}C(\cdot,\ell_{j})}{\theta_{j}C(\cdot,\ell_{j})+\lambda_{f}}\ge t\right\} \le\frac{\epsilon}{n+p}.$
The same reasoning shows that $\Pr\left\{ -\frac{d_{\cdot j}-\theta_{j}C(\cdot,\ell_{j})}{\theta_{j}C(\cdot,\ell_{j})+\lambda_{f}}\ge t\right\} \le\frac{\epsilon}{n+p}$
and we finally obtain 
\[
\Pr\left\{ \left|\frac{d_{\cdot j}-\theta_{j}C(\cdot,\ell_{j})}{\theta_{j}C(\cdot,\ell_{j})+\lambda_{f}}\right|\ge t\right\} \le\frac{2\epsilon}{n+p}.
\]
We also deduce from the independence of the degrees $d_{ij}$ that
\begin{align*}
\Pr\left\{ \left\Vert D_{f,\lambda}^{1/2}\mathfrak{D}_{f,\lambda}^{-1/2}\right\Vert \ge\sqrt{1+t}\right\}  & =\Pr\left\{ \max_{j}\frac{d_{\cdot j}+\lambda_{f}}{\theta_{j}C(\cdot,\ell_{j})+\lambda_{f}}\ge1+t\right\} \\
 & \le\sum_{j}\Pr\left\{ \frac{d_{\cdot j}-\theta_{j}C(\cdot,\ell_{j})}{\theta_{j}C(\cdot,\ell_{j})+\lambda_{f}}\ge t\right\} =p\frac{\epsilon}{n+p}.
\end{align*}
Next, for $a\in[0,1]$, we can write 
\begin{multline*}
\Pr\left\{ \left\Vert I_{p}-D_{f,\lambda}^{1/2}\mathfrak{D}_{f,\lambda}^{-1/2}\right\Vert \ge a\right\} =\Pr\left\{ \max_{j}\left|\sqrt{\frac{d_{\cdot j}+\lambda_{f}}{\theta_{j}C(\cdot,\ell_{j})+\lambda_{f}}}-1\right|\ge a\right\} \\
=\Pr\left\{ \exists j,\frac{d_{\cdot j}-\theta_{j}C(\cdot,\ell_{j})}{\theta_{j}C(\cdot,\ell_{j})+\lambda_{f}}\ge a^{2}+2a\text{ or }\le a^{2}-2a\right\} \\
\le\sum_{j}\Pr\left\{ \left|\frac{d_{\cdot j}-\theta_{j}C(\cdot,\ell_{j})}{\theta_{j}C(\cdot,\ell_{j})+\lambda_{f}}\right|\ge2a-a^{2}\right\} \le p\frac{2\epsilon}{n+p}
\end{multline*}
if we further let $2a-a^{2}=t$ or $a=1-\sqrt{1-t}\in[0,t]$.

Similar bounds can be obtained for worker degrees: 
\[
\Pr\left\{ \left\Vert D_{w,\lambda}^{1/2}\mathfrak{D}_{w,\lambda}^{-1/2}\right\Vert \ge\sqrt{1+t}\right\} \ge n\frac{\epsilon}{n+p}\ \ \text{ and }\ \ \Pr\left\{ \left\Vert I_{n}-D_{w,\lambda}^{1/2}\mathfrak{D}_{w,\lambda}^{-1/2}\right\Vert \ge a\right\} \le n\frac{2\epsilon}{n+p}.
\]

Hence, with probability at least $1-\frac{2p\epsilon}{n+p}-\frac{p\epsilon}{n+p}-\frac{2n\epsilon}{n+p}=1-\frac{2+3\gamma}{1+\gamma}\epsilon$,\footnote{First, $P\left(A_{k}>a_{k}\right)\le\alpha_{k}$ for some $k$ implies
that $P\left(\bigcap_{k}\left\{ A_{k}\le a_{k}\right\} \right)=1-P\left(\bigcup_{k}\left\{ A_{k}>a_{k}\right\} \right)\ge1-\sum_{k}\alpha_{k}$.
Second, $A_{k}\le a_{k},\forall k$, implies that $\prod_{k}A_{k}\le\prod_{k}a_{k}$.
Hence, $P\left(\prod_{k}A_{k}\le\prod_{k}a_{k}\right)\ge P\left(\bigcap_{k}\left\{ A_{k}\le a_{k}\right\} \right)\ge1-\sum_{k}\alpha_{k}$.}
\[
\left\Vert I_{p}-D_{f,\lambda}^{1/2}\mathfrak{D}_{f,\lambda}^{-1/2}\right\Vert +\left\Vert I_{n}-\mathfrak{D}_{w,\lambda}^{-1/2}D_{w,\lambda}^{1/2}\right\Vert \left\Vert D_{f,\lambda}^{1/2}\mathfrak{D}_{f,\lambda}^{-1/2}\right\Vert \le\left(1-\sqrt{1-t}\right)\left(1+\sqrt{t+1}\right)\le3t,
\]
as $1-\sqrt{1-t}\le t$ if $t\in[0,1]$, and $1+\sqrt{t+1}<3$. Thus,
with probability at least $1-\frac{2+3\gamma}{1+\gamma}\epsilon-\epsilon=1-\frac{3+4\gamma}{1+\gamma}\epsilon$
we have $\left\Vert E_{\lambda}-\mathfrak{E}_{\lambda}\right\Vert \le4t$.

\subsubsection*{Bound for the Laplacians}

The inequality follows for the Laplacians with a factor 2, as we have
\begin{align*}
\left\Vert L_{f,\lambda}-\mathfrak{L}_{f,\lambda}\right\Vert  & =\left\Vert \mathfrak{E}_{\lambda}^{\top}\mathcal{\mathfrak{E}{}_{\lambda}}-E_{\lambda}^{\top}E_{\lambda}\right\Vert =\left\Vert \mathfrak{E}_{\lambda}^{\top}(\mathfrak{E}_{\lambda}-E_{\lambda})+(\mathfrak{E}_{\lambda}-E_{\lambda})^{\top}E_{\lambda}\right\Vert \\
 & \le\left(\left\Vert \mathfrak{E}_{\lambda}\right\Vert +\left\Vert E_{\lambda}\right\Vert \right)\left\Vert \mathfrak{E}_{\lambda}-E_{\lambda}\right\Vert \le2\left\Vert \mathfrak{E}_{\lambda}-E_{\lambda}\right\Vert ,
\end{align*}
and $\left\Vert L_{w,\lambda}-\mathfrak{L}_{w,\lambda}\right\Vert \le2\left\Vert \mathfrak{E}_{\lambda}-E_{\lambda}\right\Vert .$

\subsection{Proof of Theorem \ref{thm:ConcentrationInverseLaplacian}}

We have 
\begin{align*}
\left\Vert L_{f,\lambda}-\mathfrak{L}_{f,\lambda}\right\Vert  & =\left\Vert \mathfrak{E}_{\lambda}^{\top}\mathfrak{E}_{\lambda}-E_{\lambda}^{\top}E_{\lambda}\right\Vert =\left\Vert \mathfrak{E}_{\lambda}^{\top}(\mathfrak{E}_{\lambda}-E_{\lambda})+(\mathfrak{E}_{\lambda}-E_{\lambda})^{\top}E_{\lambda}\right\Vert \\
 & \le\left(\left\Vert \mathfrak{E}_{\lambda}\right\Vert +\left\Vert E_{\lambda}\right\Vert \right)\left\Vert \mathfrak{E}_{\lambda}-E_{\lambda}\right\Vert \le2\left\Vert \mathfrak{E}_{\lambda}-E_{\lambda}\right\Vert ,
\end{align*}
as $\left\Vert \mathfrak{E}_{\lambda}\right\Vert ,\left\Vert E_{\lambda}\right\Vert \le1$.
Then, 
\begin{align*}
\left\Vert L_{f,\lambda}^{-1}-\mathfrak{L}_{f,\lambda}^{-1}\right\Vert  & =\left\Vert L_{f,\lambda}^{-1}\left(L_{f,\lambda}-\mathfrak{L}_{f,\lambda}\right)\mathfrak{L}_{f,\lambda}^{-1}\right\Vert \le\left\Vert L_{f,\lambda}^{-1}\right\Vert \left\Vert \mathfrak{L}_{f,\lambda}^{-1}\right\Vert \left\Vert L_{f,\lambda}-\mathfrak{L}_{f,\lambda}\right\Vert \\
 & \le\frac{\max_{j}d_{\cdot j}+\lambda_{f}}{\lambda_{f}}\frac{\max_{j}\theta_{j}C(\cdot,\ell_{j})+\lambda_{f}}{\lambda_{f}}2\left\Vert \mathfrak{E}_{\lambda}-E_{\lambda}\right\Vert, 
\end{align*}
as by Lemma \ref{lem:max-eig-E'E}, $\eigmin(L_{f,\lambda})\ge\frac{\lambda_{f}}{\max_{j}d_{\cdot j}+\lambda_{f}}$,
and therefore $\eigmax(L_{f,\lambda}^{-1})\le\frac{\max_{j}d_{\cdot j}+\lambda_{f}}{\lambda_{f}}$,
and by Lemma \ref{lem:max-eig-cal(E)'cal(E)}, $\eigmin(\mathfrak{L}_{f,\lambda})\ge\frac{\lambda_{f}}{\max_{j}\theta_{j}C(\cdot,\ell_{j})+\lambda_{f}}=\frac{\lambda_{f}}{\overline{\delta}_{f}+\lambda_{f}}$.
Now, we have shown that, for $M_{w}\vee M_{f}\le\frac{1}{3\ln\frac{n+p}{\epsilon}}$ 
and $t=\sqrt{3(M_{w}\vee M_{f})\ln\frac{n+p}{\epsilon}}\le1$, for all $j$,
\[
\Pr\left\{ \frac{d_{\cdot j}+\lambda_{f}}{\theta_{j}C(\cdot,\ell_{j})+\lambda_{f}}\ge t+1\right\} \le\frac{\epsilon}{n+p}.
\]
This implies that 
\[
\Pr\left\{ \frac{d_{\cdot j}+\lambda_{f}}{\max_{j}\theta_{j}C(\cdot,\ell_{j})+\lambda_{f}}\ge t+1\right\} \le\frac{\epsilon}{n+p}.
\]
And therefore, 
\[
\Pr\left\{ \frac{\max_{j}d_{\cdot j}+\lambda_{f}}{\max_{j}\theta_{j}C(\cdot,\ell_{j})+\lambda_{f}}\ge t+1\right\} \le\frac{p\epsilon}{n+p}=\frac{\gamma}{1+\gamma}\epsilon.
\]
(See the proof of Theorem \ref{thm:ConcentrationLaplacian}.)
We also have 
\[
\Pr\left\{ \left\Vert E_{\lambda}-\mathfrak{E}_{\lambda}\right\Vert \ge4t\right\} \le\frac{3+4\gamma}{1+\gamma}\epsilon.
\]
Therefore it holds with probability at least $1-\frac{3+4\gamma}{1+\gamma}\epsilon-\frac{\gamma}{1+\gamma}\epsilon=1-\frac{3+5\gamma}{1+\gamma}\epsilon$ that 
\[
\left\Vert L_{f,\lambda}^{-1}-\mathfrak{L}_{f,\lambda}^{-1}\right\Vert \le8t(t+1)\left(\frac{\max_{j}\theta_{j}C(\cdot,\ell_{j})+\lambda_{f}}{\lambda_{f}}\right)^{2}\le16t\left(\frac{\max_{j}\theta_{j}C(\cdot,\ell_{j})+\lambda_{f}}{\lambda_{f}}\right)^{2},
\]
since $t\le1$. 

Similarly, we can prove that 
\[
\Pr\left\{ \left\Vert D_{w,\lambda}^{1/2}\mathfrak{D}_{w,\lambda}^{-1/2}\right\Vert \ge\sqrt{1+t}\right\} \leq\Pr\left\{ \frac{\max_{i}d_{i\cdot}+\lambda_{w}}{\max_{k}\frac{1}{n_{k}}C(k,\cdot)+\lambda_{w}}\ge1+t\right\} \leq\frac{n}{n+p}\epsilon=\frac{1}{1+\gamma}\epsilon.
\]
Therefore, with probability at least $1-\frac{3+4\gamma}{1+\gamma}\epsilon-\frac{1}{1+\gamma}\epsilon=1-4\epsilon$,
we have 
\[
\left\Vert L_{w,\lambda}^{-1}-\mathfrak{L}_{w,\lambda}^{-1}\right\Vert \le16t\left(\frac{\max_{k}\frac{1}{n_{k}}C(k,\cdot)+\lambda_{w}}{\lambda_{w}}\right)^{2}.
\]

\subsection{Proof of Theorem \ref{thm:ConcentrationInvUnnormalizedLap}}

We have 
\[
\left\Vert \widetilde{L}_{f,\lambda}^{-1}-\widetilde{\mathfrak{L}}_{f,\lambda}^{-1}\right\Vert \le\left\Vert \mathfrak{D}_{f,\lambda}^{-1/2}\left(L_{f,\lambda}^{-1}-\mathfrak{L}_{f,\lambda}^{-1}\right)\mathfrak{D}_{f,\lambda}^{-1/2}\right\Vert +\left\Vert \widetilde{L}_{f,\lambda}^{-1}-\mathfrak{D}_{f,\lambda}^{-1/2}L_{f,\lambda}^{-1}\mathfrak{D}_{f,\lambda}^{-1/2}\right\Vert .
\]
The first term is bounded as 
\[
\left\Vert \mathfrak{D}_{f,\lambda}^{-1/2}\left(L_{f,\lambda}^{-1}-\mathfrak{L}_{f,\lambda}^{-1}\right)\mathfrak{D}_{f,\lambda}^{-1/2}\right\Vert \le\left\Vert \mathfrak{D}_{f,\lambda}^{-1}\right\Vert \left\Vert L_{f,\lambda}^{-1}-\mathfrak{L}_{f,\lambda}^{-1}\right\Vert \le M_{f}\left\Vert L_{f,\lambda}^{-1}-\mathfrak{L}_{f,\lambda}^{-1}\right\Vert .
\]
We can write the second term as
\begin{align*}
\left\Vert \widetilde{L}_{f,\lambda}^{-1}-\mathfrak{D}_{f,\lambda}^{-1/2}L_{f,\lambda}^{-1}\mathfrak{D}_{f,\lambda}^{-1/2}\right\Vert  & =\left\Vert \widetilde{L}_{f,\lambda}^{-1}\left(I_{p}-D_{f,\lambda}^{1/2}\mathfrak{D}_{f,\lambda}^{-1/2}\right)+\left(I_{p}-\mathfrak{D}_{f,\lambda}^{-1/2}D_{f,\lambda}^{1/2}\right)\widetilde{L}_{f,\lambda}^{-1}D_{f,\lambda}^{1/2}\mathfrak{D}_{f,\lambda}^{-1/2}\right\Vert \\
 & \le\left\Vert \widetilde{L}_{f,\lambda}^{-1}\right\Vert \left\Vert I_{p}-D_{f,\lambda}^{1/2}\mathfrak{D}_{f,\lambda}^{-1/2}\right\Vert \left(1+\left\Vert D_{f,\lambda}^{1/2}\mathfrak{D}_{f,\lambda}^{-1/2}\right\Vert \right).
\end{align*}
We have already shown that 
\begin{align*}
\Pr\left\{ \left\Vert L_{f,\lambda}^{-1}-\mathfrak{L}_{f,\lambda}^{-1}\right\Vert >16t\left(\frac{\overline{\delta}_{f}+\lambda_{f}}{\lambda_{f}}\right)^{2}\right\}  & \le\frac{3+5\gamma}{1+\gamma}\epsilon,\\
\Pr\left\{ \frac{\max_{j}d_{\cdot j}+\lambda_{f}}{\overline{\delta}_{f}+\lambda_{f}}>t+1\right\}  & \le\frac{p\epsilon}{n+p}=\frac{\gamma}{1+\gamma}\epsilon,\\
\Pr\left\{ \left\Vert I_{p}-D_{f,\lambda}^{1/2}\mathfrak{D}_{f,\lambda}^{-1/2}\right\Vert >1-\sqrt{1-t}\right\}  & \le\frac{2p\epsilon}{n+p}=\frac{2\gamma}{1+\gamma}\epsilon,\\
\Pr\left\{ \left\Vert D_{f,\lambda}^{1/2}\mathfrak{D}_{f,\lambda}^{-1/2}\right\Vert >\sqrt{1+t}\right\}  & \le\frac{p}{n+p}\epsilon=\frac{\gamma}{1+\gamma}\epsilon,
\end{align*}
with $\overline{\delta}_{f}=\max_{j}\theta_{j}C(\cdot,\ell_{j})$.
And by Lemma \ref{lem:max-eig-E'E},
$\left\Vert L_{f,\lambda}^{-1}\right\Vert \le\frac{\max_{j}d_{\cdot j}+\lambda_{f}}{\lambda_{f}}.$
Hence, with probability at least $1-\frac{\gamma}{1+\gamma}\epsilon,$
\begin{multline*}
\left\Vert \widetilde{L}_{f,\lambda}^{-1}\right\Vert =\left\Vert D_{f,\lambda}^{-1/2}L_{f,\lambda}^{-1}D_{f,\lambda}^{-1/2}\right\Vert \le\left\Vert D_{f,\lambda}^{-1}\right\Vert \left\Vert L_{f,\lambda}^{-1}\right\Vert \\
\le\frac{1}{\lambda_{f}}\frac{\max_{j}d_{\cdot j}+\lambda_{f}}{\lambda_{f}}=\frac{\overline{\delta}_{f}+\lambda_{f}}{\lambda_{f}^{2}}\frac{\max_{j}d_{\cdot j}+\lambda_{f}}{\overline{\delta}_{f}+\lambda_{f}}\le\frac{\overline{\delta}_{f}+\lambda_{f}}{\lambda_{f}^{2}}(t+1)\le2\frac{\overline{\delta}_{f}+\lambda_{f}}{\lambda_{f}^{2}}.
\end{multline*}
And with probability at least 
\[
1-\frac{3+5\gamma}{1+\gamma}\epsilon-\frac{(1+2+1)\gamma}{1+\gamma}\epsilon=1-\frac{3+9\gamma}{1+\gamma}\epsilon,
\]
we can bound 
\[
\left\Vert \widetilde{L}_{f,\lambda}^{-1}-\widetilde{\mathfrak{L}}_{f,\lambda}^{-1}\right\Vert \le M_{f}16t\left(\frac{\overline{\delta}_{f}+\lambda_{f}}{\lambda_{f}}\right)^{2}+\frac{1}{\lambda_{f}}\frac{\overline{\delta}_{f}+\lambda_{f}}{\lambda_{f}}5t=\left(16\frac{\overline{\delta}_{f}+\lambda_{f}}{\underline{\delta}_{f}+\lambda_{f}}+5\right)\frac{\overline{\delta}_{f}+\lambda_{f}}{\lambda_{f}^{2}}t.
\]

Similarly, with symmetric notations, with probability at least $1-\frac{1}{1+\gamma}\epsilon$,
$\left\Vert \widetilde{L}_{w,\lambda}^{-1}\right\Vert \le2\frac{\overline{\delta}_{w}+\lambda_{w}}{\lambda_{w}^{2}}$,
with $\overline{\delta}_{w}=\max_{k}\frac{1}{n_{k}}C(k,\cdot)$. And
with probability at least $1-4\epsilon-\frac{4}{1+\gamma}\epsilon=1-\frac{8+4\gamma}{1+\gamma}\epsilon$, we
have
\[
\left\Vert \widetilde{L}_{w,\lambda}^{-1}-\widetilde{\mathfrak{L}}_{w,\lambda}^{-1}\right\Vert \le\left(16\frac{\overline{\delta}_{w}+\lambda_{w}}{\underline{\delta}_{w}+\lambda_{w}}+5\right)\frac{\overline{\delta}_{w}+\lambda_{w}}{\lambda_{w}^{2}}t,
\]

\subsection{Proof of Theorem \ref{thm:ConcentrationRidgeBias}}

We start by proving two intermediate inequalities.

\paragraph*{Concentration bound for $\left\Vert BD_{f,\lambda}^{-1}-\mathfrak{B}\mathfrak{D}_{f,\lambda}^{-1}\right\Vert $ and
$\left\Vert BD_{w,\lambda}^{-1}-\mathfrak{B}\mathfrak{D}_{w,\lambda}^{-1}\right\Vert $}

We can bound 
\begin{align*}
\left\Vert BD_{f,\lambda}^{-1}-\mathfrak{B}\mathfrak{D}_{f,\lambda}^{-1}\right\Vert  & =\left\Vert (B-\mathfrak{B})\mathfrak{D}_{f,\lambda}^{-1}+B\left(D_{f,\lambda}^{-1}-\mathfrak{D}_{f,\lambda}^{-1}\right)\right\Vert \\
 & \le\left\Vert (B-\mathfrak{B})\mathfrak{D}_{f,\lambda}^{-1}\right\Vert +\left\Vert BD_{f,\lambda}^{-1}-B\mathfrak{D}_{f,\lambda}^{-1}\right\Vert .
\end{align*}
First, with $E_{\lambda}=D_{w,\lambda}^{-1/2}BD_{f,\lambda}^{-1/2}$,
\begin{align*}
\left\Vert BD_{f,\lambda}^{-1}-B\mathfrak{D}_{f,\lambda}^{-1}\right\Vert  & =\left\Vert BD_{f,\lambda}^{-1}\left(I_{p}-D_{f,\lambda}\mathfrak{D}_{f,\lambda}^{-1}\right)\right\Vert 
 \le\left\Vert BD_{f,\lambda}^{-1}\right\Vert \left\Vert I_{p}-D_{f,\lambda}\mathfrak{D}_{f,\lambda}^{-1}\right\Vert \\
 & \le\left\Vert D_{w,\lambda}^{1/2}E_{\lambda}D_{f,\lambda}^{-1/2}\right\Vert \left\Vert I_{p}-D_{f,\lambda}\mathfrak{D}_{f,\lambda}^{-1}\right\Vert \\
 & \le\left\Vert D_{w,\lambda}^{1/2}\mathfrak{D}_{w,\lambda}^{-1/2}\right\Vert \left\Vert \mathfrak{D}_{w,\lambda}^{1/2}E_{\lambda}D_{f,\lambda}^{-1/2}\right\Vert \left\Vert I_{p}-D_{f,\lambda}\mathfrak{D}_{f,\lambda}^{-1}\right\Vert ,
\end{align*}
where 
\[
\left\Vert \mathfrak{D}_{w,\lambda}^{1/2}E_{\lambda}D_{f,\lambda}^{-1/2}\right\Vert \le\left\Vert \mathfrak{D}_{w,\lambda}^{1/2}\right\Vert \left\Vert E_{\lambda}\right\Vert \left\Vert D_{f,\lambda}^{-1/2}\right\Vert \le\sqrt{\frac{\overline{\delta}_{w}+\lambda_{w}}{\lambda_{f}}},
\]
as $\left\Vert E_{\lambda}\right\Vert \le1$. The other terms can
be bounded as in the previous theorems:
\begin{align*}
\Pr\left\{ \left\Vert I_{p}-D_{f,\lambda}^{1/2}\mathfrak{D}_{f,\lambda}^{-1/2}\right\Vert \ge1-\sqrt{1-t}\right\}  & \le\frac{2p\epsilon}{n+p}=\frac{2\gamma}{1+\gamma}\epsilon,\\
\Pr\left\{ \left\Vert D_{w,\lambda}^{1/2}\mathfrak{D}_{w,\lambda}^{-1/2}\right\Vert \ge\sqrt{1+t}\right\}  & \le\frac{n}{n+p}\epsilon=\frac{1}{1+\gamma}\epsilon.
\end{align*}
Second, we turn to the term $\left\Vert (B-\mathfrak{B})\mathfrak{D}_{f,\lambda}^{-1}\right\Vert $
that we bound as in Theorem \ref{thm:ConcentrationLaplacian}. Let
\[
X_{ij}=\frac{d_{ij}-p_{ij}}{\theta_{j}C(\cdot,\ell_{j})+\lambda_{j}}\Delta_{ij}.
\]
Then, $\left\Vert X_{ij}\right\Vert \le M_{f}$, and
\begin{multline*}
\left\Vert \sum\nolimits_{ij}\E(X_{ij}^{\top}X_{ij})\right\Vert =\max_{j}\sum_{i}\frac{p_{ij}(1-p_{ij})}{\left(\theta_{j}C(\cdot,\ell_{j})+\lambda_{f}\right)^{2}}\le\max_{j}\sum_{i}\frac{p_{ij}}{\left(\theta_{j}C(\cdot,\ell_{j})+\lambda_{f}\right)^{2}}\\
=\max_{j}\frac{\theta_{j}C(\cdot,\ell_{j})}{\left(\theta_{j}C(\cdot,\ell_{j})+\lambda_{f}\right)^{2}}\le M_{f},
\end{multline*}
while
\begin{multline*}
\left\Vert \sum\nolimits_{ij}\E(X_{ij}X_{ij}^{\top})\right\Vert =\max_{i}\sum_{j}\frac{p_{ij}(1-p_{ij})}{\left(\theta_{j}C(\cdot,\ell_{j})+\lambda_{f}\right)^{2}}\le\max_{i}\sum_{j}\frac{p_{ij}}{\left(\theta_{j}C(\cdot,\ell_{j})+\lambda_{f}\right)^{2}}\\
=\max_{k}\sum_{j}\frac{\frac{1}{n_{k}}\theta_{j}C(k,\ell_{j})}{\left(\theta_{j}C(\cdot,\ell_{j})+\lambda_{f}\right)^{2}}\le M_{f}^{2}\max_{k}\sum_{j}\frac{1}{n_{k}}\theta_{j}C(k,\ell_{j})=M_{f}^{2}\max_{k}\frac{1}{n_{k}}C(k,\cdot).
\end{multline*}
We can then apply the matrix Bernstein inequality. For all $t_{f}\ge0$,
\begin{align*}
\Pr\left\{ \left\Vert (B-\mathfrak{B})\mathfrak{D}_{f,\lambda}^{-1}\right\Vert \ge t_{f}\right\}  & \le(n+p)\exp\left(\frac{-t_{f}^{2}/2}{M_{f}\chi_{f}+M_{f}t_{f}/3}\right)
\end{align*}
where $\chi_{f}=\max\left(1,M_{f}\overline{\delta}_{w}\right)=\max\left(1,\frac{\overline{\delta}_{w}}{\underline{\delta}_{f}+\lambda_{f}}\right)$.
Let
\begin{align*}
t_{f} & =\frac{M_{f}}{3}\ln\frac{n+p}{\epsilon}+\sqrt{\left(\frac{M_{f}}{3}\ln\frac{n+p}{\epsilon}\right)^{2}+2\chi_{f}M_{f}\ln\frac{n+p}{\epsilon}}
\end{align*}
be the positive root of 
\[
(n+p)\exp\left(\frac{-t_{f}^{2}/2}{M_{f}\chi_{f}+M_{f}t_{f}/3}\right)=\epsilon.
\]
Since $t=\sqrt{3M\ln\frac{n+p}{\epsilon}}\le1$ (so $t^{4}\le t^{2}\le t$)
and since $\chi_{f}\ge1$ (so $\chi_{f}\ge\sqrt{\chi_{f}})$, we have
\[
t_{f}\le\frac{t^{2}}{9}+\sqrt{\frac{t^{4}}{9^{2}}+\frac{2t^{2\chi_{f}}}{3}}\le\frac{t}{9}\left(1+\sqrt{1+6\chi_{f}}\right)\le t\sqrt{\chi_{f}}.
\]
Finally, with probability at least 
\[
1-\epsilon-\frac{1+2\gamma}{1+\gamma}\epsilon=1-\frac{2+3\gamma}{1+\gamma}\epsilon,
\]
we have
\[
\left\Vert BD_{f,\lambda}^{-1}-\mathfrak{B}\mathfrak{D}_{f,\lambda}^{-1}\right\Vert \le t_{f}+\sqrt{1+t}\left(1-\sqrt{1-t}\right)\sqrt{\frac{\overline{\delta}_{w}+\lambda_{w}}{\lambda_{f}}}\le\left(\sqrt{\chi_{f}}+2\sqrt{\frac{\overline{\delta}_{w}+\lambda_{w}}{\lambda_{f}}}\right)t.
\]
With analogously defined $t_{w},$ we have $t_{w}\leq t\sqrt{\chi_{w}}$
where $\chi_{w}:=\max(1,M_{w}\overline{\delta}_{f}),$ and we have
with probability at least $1-\frac{3+2\gamma}{1+\gamma}\epsilon,$ 
\[
\left\Vert B^{\top}D_{w,\lambda}^{-1}-\mathcal{B^{\top}}\mathfrak{D}_{w,\lambda}^{-1}\right\Vert \le t_{w}+\sqrt{1+t}\left(1-\sqrt{1-t}\right)\sqrt{\frac{\overline{\delta}_{f}+\lambda_{f}}{\lambda_{w}}}\le\left(\sqrt{\chi_{w}}+2\sqrt{\frac{\overline{\delta}_{f}+\lambda_{f}}{\lambda_{w}}}\right)t
\]

\paragraph*{Bounds for $\left\Vert \mathfrak{B}\mathfrak{D}_{f,\lambda}^{-1}\right\Vert $
and $\left\Vert BD_{f,\lambda}^{-1}\right\Vert $, and for $\left\Vert \mathfrak{B^{\top}}\mathfrak{D}_{w,\lambda}^{-1}\right\Vert $
and $\left\Vert B^{\top}D_{w,\lambda}^{-1}\right\Vert $}

We have
\[
\left\Vert \mathfrak{B}\mathfrak{D}_{f,\lambda}^{-1}\right\Vert =\left\Vert \mathfrak{D}_{w,\lambda}^{1/2}\mathfrak{D}_{w,\lambda}^{-1/2}\mathfrak{B}\mathfrak{D}_{f,\lambda}^{-1/2}\mathfrak{D}_{f,\lambda}^{-1/2}\right\Vert \le\left\Vert \mathfrak{D}_{w,\lambda}^{1/2}\right\Vert \left\Vert \mathfrak{E}_{\lambda}\right\Vert \left\Vert \mathfrak{D}_{f,\lambda}^{-1/2}\right\Vert \le\sqrt{\frac{\overline{\delta}_{w}+\lambda_{w}}{\underline{\delta}_{f}+\lambda_{f}}},
\]
and
\begin{multline*}
\left\Vert BD_{f,\lambda}^{-1}\right\Vert =\left\Vert D_{w,\lambda}^{1/2}E_{\lambda}D_{f,\lambda}^{-1/2}\right\Vert =\left\Vert D_{w,\lambda}^{1/2}\mathfrak{D}_{w,\lambda}^{-1/2}\mathfrak{D}_{w,\lambda}^{1/2}E_{\lambda}D_{f,\lambda}^{-1/2}\right\Vert \\
\le\left\Vert D_{w,\lambda}^{1/2}\mathfrak{D}_{w,\lambda}^{-1/2}\right\Vert \left\Vert \mathfrak{D}_{w,\lambda}^{1/2}\right\Vert \left\Vert E_{\lambda}\right\Vert \left\Vert D_{f,\lambda}^{-1/2}\right\Vert \le\left\Vert D_{w,\lambda}^{1/2}\mathfrak{D}_{w,\lambda}^{-1/2}\right\Vert \sqrt{\frac{\overline{\delta}_{w}+\lambda_{w}}{\lambda_{f}}},
\end{multline*}
where $\Pr\left\{ \left\Vert D_{w,\lambda}^{1/2}\mathfrak{D}_{w,\lambda}^{-1/2}\right\Vert \ge\sqrt{1+t}\right\} \le\frac{1}{1+\gamma}\epsilon$.
Hence, with probability at least $1-\frac{1}{1+\gamma}\epsilon$,
\[
\left\Vert BD_{f,\lambda}^{-1}\right\Vert \le\sqrt{1+t}\sqrt{\frac{\overline{\delta}_{w}+\lambda_{w}}{\lambda_{f}}}\le2\sqrt{\frac{\overline{\delta}_{w}+\lambda_{w}}{\lambda_{f}}}.
\]

Similarly, we have $\left\Vert \mathfrak{B^{\top}}\mathfrak{D}_{w,\lambda}^{-1}\right\Vert \leq\sqrt{\frac{\overline{\delta}_{f}+\lambda_{f}}{\underline{\delta}_{w}+\lambda_{w}}}$,
and with probability at least $1-\frac{\gamma}{1+\gamma}\epsilon$,
$\left\Vert B^{\top}D_{w,\lambda}^{-1}\right\Vert \leq2\sqrt{\frac{\overline{\delta}_{f}+\lambda_{f}}{\lambda_{w}}}$.

\paragraph*{Main argument for Theorem \ref{thm:ConcentrationRidgeBias}}
The bias on $\widehat{\mu}$ is
\[
b_{\mu,\lambda}=\E\left(\widehat{\mu}-\mu\mid X,Z,\beta^{*}\right)=\widetilde{L}_{w,\lambda}^{-1}\left(-\lambda_{w}Z_{w}\mu^{*}+\lambda_{f}BD_{f,\lambda}^{-1}Z_{f}\phi^{*}\right).
\]
The ``population'' bias on $\mu$ is
\[
\mathfrak{b}_{\mu,\lambda}=\widetilde{\mathfrak{L}}_{w,\lambda}^{-1}\left(-\lambda_{w}Z_{w}\mu^{*}+\lambda_{f}\mathfrak{B}\mathfrak{D}_{f,\lambda}^{-1}Z_{f}\phi^{*}\right).
\]
Then (for the Euclidean norm),
\[
\left\Vert b_{\mu,\lambda}-\mathfrak{b}_{\mu,\lambda}\right\Vert \le\lambda_{w}\left\Vert \widetilde{L}_{w,\lambda}^{-1}-\widetilde{\mathfrak{L}}_{w,\lambda}^{-1}\right\Vert \left\Vert Z_{w}\mu^{*}\right\Vert +\lambda_{f}\left\Vert \widetilde{L}_{w,\lambda}^{-1}BD_{f,\lambda}^{-1}-\widetilde{\mathfrak{L}}_{w,\lambda}^{-1}\mathfrak{B}\mathfrak{D}_{f,\lambda}^{-1}\right\Vert \left\Vert Z_{f}\phi^{*}\right\Vert .
\]
We already have that
\begin{align*}
\Pr\left\{ \left\Vert \widetilde{L}_{w,\lambda}^{-1}-\widetilde{\mathfrak{L}}_{w,\lambda}^{-1}\right\Vert >\left(5+16\frac{\overline{\delta}_{w}+\lambda_{w}}{\underline{\delta}_{w}+\lambda_{w}}\right)\frac{\overline{\delta}_{w}+\lambda_{w}}{\lambda_{w}^{2}}t\right\}  & \le\frac{8+4\gamma}{1+\gamma}\epsilon,\\
\Pr\left\{ \left\Vert \widetilde{L}_{w,\lambda}^{-1}\right\Vert >2\frac{\overline{\delta}_{w}+\lambda_{w}}{\lambda_{w}^{2}}\right\}  & \le\frac{1}{1+\gamma}\epsilon,\\
\Pr\left\{ \left\Vert BD_{f,\lambda}^{-1}-\mathfrak{B}\mathfrak{D}_{f,\lambda}^{-1}\right\Vert >\left(\sqrt{\chi_{f}}+2\sqrt{\frac{\overline{\delta}_{w}+\lambda_{w}}{\lambda_{f}}}\right)t\right\}  & \le\frac{2+3\gamma}{1+\gamma}\epsilon,
\end{align*}
and
\[
\left\Vert \mathfrak{B}\mathfrak{D}_{f,\lambda}^{-1}\right\Vert =\left\Vert \mathfrak{D}_{w,\lambda}^{1/2}\mathfrak{E}\mathfrak{D}_{f,\lambda}^{-1/2}\right\Vert <\sqrt{\frac{\overline{\delta}_{w}+\lambda_{w}}{\underline{\delta}_{f}+\lambda_{f}}}.
\]
Furthermore,
\begin{align*}
\left\Vert \widetilde{L}_{w,\lambda}^{-1}BD_{f,\lambda}^{-1}-\widetilde{\mathfrak{L}}_{w,\lambda}^{-1}\mathfrak{B}\mathfrak{D}_{f,\lambda}^{-1}\right\Vert  & \le\left\Vert \left(\widetilde{L}_{w,\lambda}^{-1}-\widetilde{\mathfrak{L}}_{w,\lambda}^{-1}\right)\mathfrak{B}\mathfrak{D}_{f,\lambda}^{-1}+\widetilde{L}_{w,\lambda}^{-1}\left(BD_{f,\lambda}^{-1}-\mathfrak{B}\mathfrak{D}_{f,\lambda}^{-1}\right)\right\Vert \\
 & \le\left\Vert \widetilde{L}_{w,\lambda}^{-1}-\widetilde{\mathfrak{L}}_{w,\lambda}^{-1}\right\Vert \left\Vert \mathfrak{B}\mathfrak{D}_{f,\lambda}^{-1}\right\Vert +\left\Vert \widetilde{L}_{w,\lambda}^{-1}\right\Vert \left\Vert BD_{f,\lambda}^{-1}-\mathfrak{B}\mathfrak{D}_{f,\lambda}^{-1}\right\Vert .
\end{align*}
Hence, with probability at least $1-\frac{8+4\gamma}{1+\gamma}\epsilon-\frac{1}{1+\gamma}\epsilon-\frac{2+3\gamma}{1+\gamma}\epsilon=1-\frac{11+7\gamma}{1+\gamma}\epsilon$,
\begin{align*}
\left\Vert b_{\mu,\lambda}-\mathfrak{b}_{\mu,\lambda}\right\Vert  & \le\left(5+16\frac{\overline{\delta}_{w}+\lambda_{w}}{\underline{\delta}_{w}+\lambda_{w}}\right)\frac{\overline{\delta}_{w}+\lambda_{w}}{\lambda_{w}^{2}}t\left(\lambda_{w}\sqrt{n}\left\Vert \mu^{*}\right\Vert +\lambda_{f}\sqrt{\frac{\overline{\delta}_{w}+\lambda_{w}}{\underline{\delta}_{f}+\lambda_{f}}}\sqrt{p}\left\Vert \phi^{*}\right\Vert \right)\\
 & +2\lambda_{f}\frac{\overline{\delta}_{w}+\lambda_{w}}{\lambda_{w}^{2}}\left(\sqrt{\chi_{f}}+2\sqrt{\frac{\overline{\delta}_{w}+\lambda_{w}}{\lambda_{f}}}\right)t\sqrt{p}\left\Vert \phi^{*}\right\Vert .
\end{align*}
Finally, notice that
\[
\left\Vert \widetilde{\mathfrak{L}}_{w,\lambda}^{-1}\right\Vert =\left\Vert \mathfrak{D}_{w,\lambda}^{-1/2}\mathfrak{L}_{w,\lambda}^{-1}\mathfrak{D}_{w,\lambda}^{-1/2}\right\Vert \le\left\Vert \mathfrak{D}_{w,\lambda}^{-1}\right\Vert \left\Vert \mathfrak{L}_{w,\lambda}^{-1}\right\Vert =\frac{1}{\underline{\delta}_{w}+\lambda_{w}}\frac{\overline{\delta}_{w}+\lambda_{w}}{\lambda_{w}}.
\]
Hence,
\begin{align*}
\left\Vert \mathfrak{b}_{\mu,\lambda}\right\Vert  & \le\left\Vert \mathfrak{L}_{w,\lambda}^{-1}\right\Vert \left\Vert -\lambda_{w}Z_{w}\mu^{*}+\lambda_{f}\mathfrak{B}\mathfrak{D}_{f,\lambda}^{-1}Z_{f}\phi^{*}\right\Vert \\
 & \le\frac{1}{\underline{\delta}_{w}+\lambda_{w}}\frac{\overline{\delta}_{w}+\lambda_{w}}{\lambda_{w}}\left(\lambda_{w}\sqrt{n}\left\Vert \mu^{*}\right\Vert +\lambda_{f}\sqrt{\frac{\overline{\delta}_{w}+\lambda_{w}}{\underline{\delta}_{f}+\lambda_{f}}}\sqrt{p}\left\Vert \phi^{*}\right\Vert \right).
\end{align*}

Similarly, with probability at least $1-\frac{3+9\gamma}{1+\gamma}\epsilon-\frac{\gamma}{1+\gamma}\epsilon-\frac{3+2\gamma}{1+\gamma}\epsilon=1-\frac{6+12\gamma}{1+\gamma}\epsilon$,
we have
\begin{align*}
\left\Vert b_{\phi,\lambda}-\mathfrak{b}_{\phi,\lambda}\right\Vert  & \le\left(5+16\frac{\overline{\delta}_{f}+\lambda_{f}}{\underline{\delta}_{f}+\lambda_{f}}\right)\frac{\overline{\delta}_{f}+\lambda_{f}}{\lambda_{f}^{2}}t\left(\lambda_{f}\sqrt{p}\left\Vert \phi^{*}\right\Vert +\lambda_{w}\sqrt{\frac{\overline{\delta}_{f}+\lambda_{f}}{\underline{\delta}_{w}+\lambda_{w}}}\sqrt{n}\left\Vert \mu^{*}\right\Vert \right)\\
 & +2\lambda_{w}\frac{\overline{\delta}_{f}+\lambda_{f}}{\lambda_{f}^{2}}\left(\sqrt{\chi_{w}}+2\sqrt{\frac{\overline{\delta}_{f}+\lambda_{f}}{\lambda_{w}}}\right)t\sqrt{n}\left\Vert \mu^{*}\right\Vert.
\end{align*}
Moreover,
\[
\left\Vert \widetilde{\mathfrak{L}}_{f,\lambda}^{-1}\right\Vert =\left\Vert \mathfrak{D}_{f,\lambda}^{-1/2}\mathfrak{L}_{f,\lambda}^{-1}\mathfrak{D}_{f,\lambda}^{-1/2}\right\Vert \le\left\Vert \mathfrak{D}_{f,\lambda}^{-1}\right\Vert \left\Vert \mathfrak{L}_{f,\lambda}^{-1}\right\Vert =\frac{1}{\underline{\delta}_{f}+\lambda_{f}}\frac{\overline{\delta}_{f}+\lambda_{f}}{\lambda_{f}}.
\]
Hence, we also have
\begin{align*}
\left\Vert \mathfrak{b}_{\phi,\lambda}\right\Vert  & \le\left\Vert \mathfrak{L}_{f,\lambda}^{-1}\right\Vert \left\Vert -\lambda_{f}Z_{f}\phi^{*}+\lambda_{w}\mathfrak{B^{\top}}\mathfrak{D}_{w,\lambda}^{-1}Z_{w}\mu^{*}\right\Vert \\
 & \le\frac{1}{\underline{\delta}_{f}+\lambda_{f}}\frac{\overline{\delta}_{f}+\lambda_{f}}{\lambda_{f}}\left(\lambda_{f}\sqrt{p}\left\Vert \phi^{*}\right\Vert +\lambda_{w}\sqrt{\frac{\overline{\delta}_{f}+\lambda_{f}}{\underline{\delta}_{w}+\lambda_{w}}}\sqrt{n}\left\Vert \mu^{*}\right\Vert \right).
\end{align*}

\subsection{Proof of Theorem \ref{thm:ConcentrationRidgeVariance}}

The variance of $\widehat{\mu}-\mu$ is
\[
V_{w,\lambda}=\sigma^{2}\widetilde{L}_{w,\lambda}^{-1}+\left(\lambda_{w}^{2}\sigma_{w}^{2}-\lambda_{w}\sigma^{2}\right)\widetilde{L}_{w,\lambda}^{-2}+\left(\lambda_{f}^{2}\sigma_{f}^{2}-\lambda_{f}\sigma^{2}\right)\widetilde{L}_{w,\lambda}^{-1}BD_{f,\lambda}^{-2}B^{\top}\widetilde{L}_{w,\lambda}^{-1}.
\]
The ``population'' variance is
\[
\mathfrak{V}_{w,\lambda}=\sigma^{2}\widetilde{\mathfrak{L}}_{w,\lambda}^{-1}+\left(\lambda_{w}^{2}\sigma_{w}^{2}-\lambda_{w}\sigma^{2}\right)\widetilde{\mathfrak{L}}_{w,\lambda}^{-2}+\left(\lambda_{f}^{2}\sigma_{f}^{2}-\lambda_{f}\sigma^{2}\right)\widetilde{\mathfrak{L}}_{w,\lambda}^{-1}\mathfrak{B}\mathfrak{D}_{f,\lambda}^{-2}\mathfrak{B}^{\top}\widetilde{\mathfrak{L}}_{w,\lambda}^{-1}.
\]
Taking the difference we get
\begin{multline*}
\left\Vert V_{w,\lambda}-\mathfrak{V}_{w,\lambda}\right\Vert \le\sigma^{2}\left\Vert \widetilde{L}_{w,\lambda}^{-1}-\widetilde{\mathfrak{L}}_{w,\lambda}^{-1}\right\Vert +\left|\lambda_{w}^{2}\sigma_{w}^{2}-\lambda_{w}\sigma^{2}\right|\left\Vert \widetilde{L}_{w,\lambda}^{-2}-\widetilde{\mathfrak{L}}_{w,\lambda}^{-2}\right\Vert \\
+\left|\lambda_{f}^{2}\sigma_{f}^{2}-\lambda_{f}\sigma^{2}\right|\left\Vert \widetilde{L}_{w,\lambda}^{-1}BD_{f,\lambda}^{-2}B^{\top}\widetilde{L}_{w,\lambda}^{-1}-\widetilde{\mathfrak{L}}_{w,\lambda}^{-1}\mathfrak{B}\mathfrak{D}_{f,\lambda}^{-2}\mathfrak{B}^{\top}\widetilde{\mathfrak{L}}_{w,\lambda}^{-1}\right\Vert .
\end{multline*}
We already know that 
\begin{align*}
\Pr\left\{ \left\Vert \widetilde{L}_{w,\lambda}^{-1}-\widetilde{\mathfrak{L}}_{w,\lambda}^{-1}\right\Vert >\left(5+16\frac{\overline{\delta}_{w}+\lambda_{w}}{\underline{\delta}_{w}+\lambda_{w}}\right)\frac{\overline{\delta}_{w}+\lambda_{w}}{\lambda_{w}^{2}}t\right\}  & \le\frac{8+4\gamma}{1+\gamma}\epsilon,\\
\Pr\left\{ \left\Vert \widetilde{L}_{w,\lambda}^{-1}\right\Vert >2\frac{\overline{\delta}_{w}+\lambda_{w}}{\lambda_{w}^{2}}\right\}  & \le\frac{1}{1+\gamma}\epsilon,\\
\Pr\left\{ \left\Vert BD_{f,\lambda}^{-1}-\mathfrak{B}\mathfrak{D}_{f,\lambda}^{-1}\right\Vert >\left(\sqrt{\chi_{f}}+2\sqrt{\frac{\overline{\delta}_{w}+\lambda_{w}}{\lambda_{f}}}\right)t\right\}  & \le\frac{2+3\gamma}{1+\gamma}\epsilon,\\
\Pr\left\{ \left\Vert BD_{f,\lambda}^{-1}\right\Vert >2\sqrt{\frac{\overline{\delta}_{w}+\lambda_{w}}{\lambda_{f}}}\right\}  & \le\frac{1}{1+\gamma}\epsilon,
\end{align*}
and that
\begin{align*}
\left\Vert \widetilde{\mathfrak{L}}_{w,\lambda}^{-1}\right\Vert  & =\left\Vert \mathfrak{D}_{w,\lambda}^{-1/2}\mathfrak{L}_{w,\lambda}^{-1}\mathfrak{D}_{w,\lambda}^{-1/2}\right\Vert \le\left\Vert \mathfrak{D}_{w,\lambda}^{-1}\right\Vert \left\Vert \mathfrak{L}_{w,\lambda}^{-1}\right\Vert =\frac{1}{\underline{\delta}_{w}+\lambda_{w}}\frac{\overline{\delta}_{w}+\lambda_{w}}{\lambda_{w}},\\
\left\Vert \mathfrak{B}\mathfrak{D}_{f,\lambda}^{-1}\right\Vert  & =\left\Vert \mathfrak{D}_{w,\lambda}^{1/2}\mathfrak{E}\mathfrak{D}_{f,\lambda}^{-1/2}\right\Vert <\sqrt{\frac{\overline{\delta}_{w}+\lambda_{w}}{\underline{\delta}_{f}+\lambda_{f}}}.
\end{align*}
It follows that, with probability at least 
\[
1-\frac{8+4\gamma}{1+\gamma}\epsilon-\frac{1}{1+\gamma}\epsilon-\frac{2+3\gamma}{1+\gamma}\epsilon-\frac{1}{1+\gamma}\epsilon=1-\frac{12+7\gamma}{1+\gamma}\epsilon,
\]
we have the following inequalities. First,
\begin{multline*}
\left\Vert \widetilde{L}_{w,\lambda}^{-1}BD_{f,\lambda}^{-1}-\widetilde{\mathfrak{L}}_{w,\lambda}^{-1}\mathfrak{B}\mathfrak{D}_{f,\lambda}^{-1}\right\Vert \le\left\Vert \widetilde{L}_{w,\lambda}^{-1}-\widetilde{\mathfrak{L}}_{w,\lambda}^{-1}\right\Vert \left\Vert \mathfrak{B}\mathfrak{D}_{f,\lambda}^{-1}\right\Vert +\left\Vert \widetilde{L}_{w,\lambda}^{-1}\right\Vert \left\Vert BD_{f,\lambda}^{-1}-\mathfrak{B}\mathfrak{D}_{f,\lambda}^{-1}\right\Vert \\
\le\left(5+16\frac{\overline{\delta}_{w}+\lambda_{w}}{\underline{\delta}_{w}+\lambda_{w}}\right)\frac{\overline{\delta}_{w}+\lambda_{w}}{\lambda_{w}^{2}}t\sqrt{\frac{\overline{\delta}_{w}+\lambda_{w}}{\underline{\delta}_{f}+\lambda_{f}}}+2\frac{\overline{\delta}_{w}+\lambda_{w}}{\lambda_{w}^{2}}\left(\sqrt{\chi_{f}}+2\sqrt{\frac{\overline{\delta}_{w}+\lambda_{w}}{\lambda_{f}}}\right)t.
\end{multline*}
Second,
\begin{align*}
\left\Vert \widetilde{L}_{w,\lambda}^{-2}-\widetilde{\mathfrak{L}}_{w,\lambda}^{-2}\right\Vert  & \le\left(\left\Vert \widetilde{L}_{w,\lambda}^{-1}\right\Vert +\left\Vert \widetilde{\mathfrak{L}}_{w,\lambda}^{-1}\right\Vert \right)\left\Vert \widetilde{L}_{w,\lambda}^{-1}-\widetilde{\mathfrak{L}}_{w,\lambda}^{-1}\right\Vert \\
 & \le\frac{\overline{\delta}_{w}+\lambda_{w}}{\lambda_{w}}\left(\frac{2}{\lambda_{w}}+\frac{1}{\underline{\delta}_{w}+\lambda_{w}}\right)\left(5+16\frac{\overline{\delta}_{w}+\lambda_{w}}{\underline{\delta}_{w}+\lambda_{w}}\right)\frac{\overline{\delta}_{w}+\lambda_{w}}{\lambda_{w}^{2}}t.
\end{align*}
Third,
\begin{multline*}
\left\Vert \widetilde{L}_{w,\lambda}^{-1}BD_{f,\lambda}^{-2}B^{\top}\widetilde{L}_{w,\lambda}^{-1}-\widetilde{\mathfrak{L}}_{w,\lambda}^{-1}\mathfrak{B}\mathfrak{D}_{f,\lambda}^{-2}\mathfrak{B}^{\top}\widetilde{\mathfrak{L}}_{w,\lambda}^{-1}\right\Vert \\
\le\left(\left\Vert \widetilde{L}_{w,\lambda}^{-1}BD_{f,\lambda}^{-1}\right\Vert +\left\Vert \widetilde{\mathfrak{L}}_{w,\lambda}^{-1}\mathfrak{B}\mathfrak{D}_{f,\lambda}^{-1}\right\Vert \right)\left\Vert \widetilde{L}_{w,\lambda}^{-1}BD_{f,\lambda}^{-1}-\widetilde{\mathfrak{L}}_{w,\lambda}^{-1}\mathfrak{B}\mathfrak{D}_{f,\lambda}^{-1}\right\Vert \\
\le\left(\left\Vert \widetilde{L}_{w,\lambda}^{-1}\right\Vert \left\Vert BD_{f,\lambda}^{-1}\right\Vert +\left\Vert \widetilde{\mathfrak{L}}_{w,\lambda}^{-1}\right\Vert \left\Vert \mathfrak{B}\mathfrak{D}_{f,\lambda}^{-1}\right\Vert \right)\left\Vert \widetilde{L}_{w,\lambda}^{-1}BD_{f,\lambda}^{-1}-\widetilde{\mathfrak{L}}_{w,\lambda}^{-1}\mathfrak{B}\mathfrak{D}_{f,\lambda}^{-1}\right\Vert \\
\le\left(2\frac{\overline{\delta}_{w}+\lambda_{w}}{\lambda_{w}^{2}}2\sqrt{\frac{\overline{\delta}_{w}+\lambda_{w}}{\lambda_{f}}}+\frac{1}{\underline{\delta}_{w}+\lambda_{w}}\frac{\overline{\delta}_{w}+\lambda_{w}}{\lambda_{w}}\sqrt{\frac{\overline{\delta}_{w}+\lambda_{w}}{\underline{\delta}_{f}+\lambda_{f}}}\right)\\
\times\left[\left(5+16\frac{\overline{\delta}_{w}+\lambda_{w}}{\underline{\delta}_{w}+\lambda_{w}}\right)\frac{\overline{\delta}_{w}+\lambda_{w}}{\lambda_{w}^{2}}\sqrt{\frac{\overline{\delta}_{w}+\lambda_{w}}{\underline{\delta}_{f}+\lambda_{f}}}+2\frac{\overline{\delta}_{w}+\lambda_{w}}{\lambda_{w}^{2}}\left(\sqrt{\chi_{f}}+2\sqrt{\frac{\overline{\delta}_{w}+\lambda_{w}}{\lambda_{f}}}\right)\right]t.
\end{multline*}
Finally,
\begin{multline*}
\left\Vert V_{w,\lambda}-\mathfrak{V}_{w,\lambda}\right\Vert \le\sigma^{2}\left(5+16\frac{\overline{\delta}_{w}+\lambda_{w}}{\underline{\delta}_{w}+\lambda_{w}}\right)\frac{\overline{\delta}_{w}+\lambda_{w}}{\lambda_{w}^{2}}t\\
+\left|\lambda_{w}^{2}\sigma_{w}^{2}-\lambda_{w}\sigma^{2}\right|\left(\frac{2}{\lambda_{w}}+\frac{1}{\underline{\delta}_{w}+\lambda_{w}}\right)\left(5+16\frac{\overline{\delta}_{w}+\lambda_{w}}{\underline{\delta}_{w}+\lambda_{w}}\right)\frac{\left(\overline{\delta}_{w}+\lambda_{w}\right)^{2}}{\lambda_{w}^{3}}t\\
+\left|\lambda_{f}^{2}\sigma_{f}^{2}-\lambda_{f}\sigma^{2}\right|\left(4\frac{1}{\lambda_{w}^ {}}\sqrt{\frac{1}{\lambda_{f}}}+\frac{1}{\underline{\delta}_{w}+\lambda_{w}}\sqrt{\frac{1}{\underline{\delta}_{f}+\lambda_{f}}}\right)\frac{\left(\overline{\delta}_{w}+\lambda_{w}\right)^{5/2}}{\lambda_{w}^{3}}\\
\times\left[\left(5+16\frac{\overline{\delta}_{w}+\lambda_{w}}{\underline{\delta}_{w}+\lambda_{w}}\right)\sqrt{\frac{\overline{\delta}_{w}+\lambda_{w}}{\underline{\delta}_{f}+\lambda_{f}}}+2\left(\sqrt{\chi_{f}}+2\sqrt{\frac{\overline{\delta}_{w}+\lambda_{w}}{\lambda_{f}}}\right)\right]t.
\end{multline*}

Similarly, with probability at least 
\[
1-\frac{3+9\gamma}{1+\gamma}\epsilon-\frac{\gamma}{1+\gamma}\epsilon-\frac{3+2\gamma}{1+\gamma}\epsilon-\frac{\gamma}{1+\gamma}\epsilon=1-\frac{5+13\gamma}{1+\gamma}\epsilon,
\]
we have
\begin{multline*}
\left\Vert V_{f,\lambda}-\mathfrak{V}_{f,\lambda}\right\Vert \le\sigma^{2}\left(5+16\frac{\overline{\delta}_{f}+\lambda_{f}}{\underline{\delta}_{f}+\lambda_{f}}\right)\frac{\overline{\delta}_{f}+\lambda_{f}}{\lambda_{f}^{2}}t\\
+\left|\lambda_{f}^{2}\sigma_{f}^{2}-\lambda_{f}\sigma^{2}\right|\left(\frac{2}{\lambda_{f}}+\frac{1}{\underline{\delta}_{f}+\lambda_{f}}\right)\left(5+16\frac{\overline{\delta}_{f}+\lambda_{f}}{\underline{\delta}_{f}+\lambda_{f}}\right)\frac{\left(\overline{\delta}_{f}+\lambda_{f}\right)^{2}}{\lambda_{f}^{3}}t\\
+\left|\lambda_{w}^{2}\sigma_{w}^{2}-\lambda_{w}\sigma^{2}\right|\left(4\frac{1}{\lambda_{f}}\sqrt{\frac{1}{\lambda_{w}}}+\frac{1}{\underline{\delta}_{f}+\lambda_{f}}\sqrt{\frac{1}{\underline{\delta}_{w}+\lambda_{w}}}\right)\frac{\left(\overline{\delta}_{f}+\lambda_{f}\right)^{5/2}}{\lambda_{f}^{3}}\\
\times\left[\left(5+16\frac{\overline{\delta}_{f}+\lambda_{f}}{\underline{\delta}_{f}+\lambda_{f}}\right)\sqrt{\frac{\overline{\delta}_{f}+\lambda_{f}}{\underline{\delta}_{w}+\lambda_{w}}}+2\left(\sqrt{\chi_{w}}+2\sqrt{\frac{\overline{\delta}_{f}+\lambda_{f}}{\lambda_{w}}}\right)\right]t.
\end{multline*}
This ends the proofs.

\end{appendix}

\begin{acks}[Acknowledgments]
The authors would like to thank the anonymous referees, an Associate
Editor and the Editor for their constructive comments that improved the
quality of this paper. A first version of this paper was presented at the Econometric Study Group, Bristol 2023. A more elaborate version was presented at Econometrics in Rio 2024 and at the conference in honor of Thierry Magnac, May 2025, where Koen Jochmans made useful comments. We thank seminar participants at the University of Chicago (May 2025) for helpful comments and suggestions. 
\end{acks}

\begin{funding}
The second author  acknowledges support from the European Research Council (grant reference ERC-2020-ADG-101018130).
\end{funding}



\bibliographystyle{ecta-fullname}
\bibliography{RidgeAKM}

\end{document}